\theoremstyle{definition}
\newtheorem{theo}{Theorem}[section]
\newtheorem{lemma}[theo]{Lemma}
\newtheorem{countex}[theo]{Counter-example}
\newtheorem{exx}[theo]{Example}
\newtheorem{defi}[theo]{Definition}
\newtheorem{remark}[theo]{Remark}
\newtheorem{assumption}[theo]{Assumption}
\author[1]{Escobar-Bach, M.}
\author[2]{Popier, A.}
\author[1]{Sahin, M.}
\affil[1]{LAREMA, Université d'Angers}
\affil[2]{Laboratoire Manceau de Mathématiques, Le Mans Université}
\title{A dependent and censored first hitting-time model with compound Poisson processes}
\begin{document}
\maketitle
\begin{abstract}
  We consider a bivariate first hitting-time model in which durations are the crossing times of dependent compound Poisson processes with fixed thresholds. The identifiability of the model is discussed, and likelihood estimators of the model parameters are proposed. We obtain the asymptotic properties of the estimators and underline their finite sample performance with a simulation study on synthetic data. The practical applicability of our approach is demonstrated by an application using data from patients suffering from mushroom poisoning.          
\end{abstract}

\noindent
\textit{Some key words :} Survival analysis, dependent censoring, compound Poisson process, first hitting-time model.

\section{Introduction}
The analysis of time-to-event data has always been of great interest in various applications, embedded with challenging statistical analysis when subject to censoring. First hitting-time (FHT) or threshold crossing models are attractive survival modeling approaches that allow censoring mechanisms to be described by stochastic processes. The construction of FHT models typically relies on a latent or unobservable continuous-time random process $L=(L_t,\,t\geq 0)$ and an event $B$. The event time of interest is then given by 
\begin{eqnarray}
\label{def::hittingtime}
    \tau_B=\inf\{t\geq0,\,L_t\in B\}
\end{eqnarray}
referring to the first time a stop rule is met. In particular, scenarios with crossing rules consider real thresholds $x\in\mathbb{R}$ where events $B=B_x$ have the forms given by $(-\infty,x]$ or $[x,+\infty)$ depending on whether they are up- or down-crossings. For multivariate analysis, several random times can be considered, each associated with different reaching rules. However, in the context of censored or competing events, only the smallest time value is returned, leaving the others unknown. Thus, model identification is usually ensured at the cost of independence between random times, which is referred to as independent censoring. In fact, the initial studies in \cite{Tsi75} emphasize that experiments with dependent censoring certainly prohibit statistical identifiability of the model and thus lead to drastic bias estimation. Recent work has therefore been concerned with this problem, where premises can be found in \cite{Zhe95} with the introduction of a copula-graphic estimator based on a known copula function. The latter has been further explored in \cite{Riv01} by considering Archimedean copulas and then extended to several regression models with known copula/archimedean generator in \cite{Bra05,Chen10,de17,Suj18,Emu18} among others. By definition, a copula is a joint multivariate distribution function with uniform margins that gives a margin-free characterization of the dependence structure, and whose existence has been universally proved in \cite{Skl59}. Some other recent developments have managed to relax the assumption of the known copula by considering parametric models in \cite{Cza22} and semiparametric models in \cite{Der23}. The use of copulas in dependent censoring is thus still in its infancy. In practice, copulas are convenient tools since they covers all possible dependence structures. However, it can be challenging for investigators without a strong mathematical background to grasp the formalism of copulas and link them effectively to the data source. Furthermore, most methods with parametric copula families rely on strong regularity conditions, such as absolute continuity or differentiability, and neglect those with singular parts, although they represent a large fraction of general copulas \cite{Dur15}. As such, FHT models allow to consider alternative characterizations that internalize the description of the dependence structures through stochastic constructions instead of model fitting. In survival analysis, this point of view presents substantial interest due to its realism and applicability \cite{Lee06}. For examples, the analyses in \cite{Dav15,Dav19,Esc24} use common shock models for possibly equally dependent event times, and a model extension referred to as mixed hitting-time model, considers random levels for the analysis of optimal stopping decisions by heterogeneous agents \cite{Abb12}.\\  
In real world applications, FHT models have a long history in medicine, economics, and engineering sciences. For example, the analysis in \cite{Lon89} has proposed a survival study on the follow-up to death for patients with human immunodeficiency virus (HIV) where the various stages of infection are based on a homogeneous time Markov process. Another study in \cite{Mar13} compares the length times of Italian students at the university with those of European students. The students leaving the study prematurely are considered as censored and the length time of each of them is simulated throughout a hidden Markov chain, sometimes referred to as phase-type transition models. Other recent works have successfully used FHT models in biology, such as in \cite{Hob18} with DNA modeling or in life insurance in \cite{Asm19}. From a heuristic approach, a study in \cite{DeBin23} has recently proposed a boosting gradient type algorithm to fit phase-type distributions with high-dimensional covariates. Extensions of this model have also been considered with for instance crossing rules at random levels. Such construction is typically referred to as mixed hitting-time models and have examples in economics, with real applications such as patent races \cite{Rei82,Choi91} and technology adoption \cite{Rich82,Far98}.\\
In this paper we propose a bivariate FHT model with compound Poisson processes. The dynamic of the model is based on the competition between two monotone processes to reach fixed levels. In line with the idea of longitudinal analysis, we assume that events are checked for completion at the same random time intervals. This means that the model can return the same event times when the processes cross their respective levels simultaneously. Such interdependency between the processes is crucial and plays a central role in identifying the model structures. Likewise competing models with common shocks, it allows to consider complex follow-up experiments with two dependent competing events, even when deterministic or continuous recording is not feasible. It also extends to applications with a different scope than copula-based models with standard statistical assumptions, which are not usually suited to copulas with singular components. Our estimation uses a maximum likelihood approach applied to the parameters of the jump size distributions and the inter-arrival time intensity. The identification and consistency analyses are addressed with properties from the frameworks of renewal processes and convolution of densities. In particular, our study discusses the necessary settings and conditions on the jump size distributions to ensure the model identifiability. Answering such questions has already been the subject of several contributions with observed Lévy processes \cite{Com15,Coc18,Duv21}, and is a even more challenging in our study due to the censoring mechanism and the bi-variate setting. In particular, the fixed-threshold setting turns out to be more complex to handle than the random-threshold setting, since the level heterogeneity helps to select the correct jump size distributions. Under this context, we prove the asymptotic normality of the proposed estimators under mild conditions. Our approach finite sample performances are discussed with a study on synthetic data and its practical application is underlined with a study on mushroom poisoning. 

The remainder of the paper is given as follows. In Section \ref{sect:model_setting} we present the stochastic construction associated with the event times and discuss our model representation in the context of survival analysis. 
Section \ref{sect:function_analysis} proposes an analysis of the model functions, highlighting remarkable properties of the hitting times, in perspective with the internal dependence structure of the bivariate stochastic process. In Section \ref{sect::identifia}, we propose a parameterized configuration and discuss on conditions that ensure its identifiability. A maximum likelihood approach is proposed along with asymptotic results in Section \ref{sect:prop_estim}. In Section \ref{sect:simulation}, we illustrate the finite-sample behavior of our approach on generic data and highlight its practical applicability with an original study on poisoning data. All proofs and technical lemmas are postponed to Section \ref{sect:appendix}.

\section{Model and settings} \label{sect:model_setting}

As indicated in the introduction section, our interest is in understanding the distributions of random crossing times as in (\ref{def::hittingtime}) for a bivariate Lévy process $L=(L_1,L_2)$ and fixed thresholds. In our study, we consider two pure jumps Lévy processes whose respective jump times occur simultaneously and the jump sizes are independent. Dependence within the latent process $L$ hence results from the common indexation of the marginal processes with the same Poisson process $N$, which characterizes the number of jumps at time $t$.\\ 
Formally, we denote $(E_n)_{n \in \mathbb{N}^*}$ as an \textit{i.i.d.} sequence of exponential laws with intensity parameter $\lambda>0$ and define $$N_t:= \sup \{ n \in \mathbb{N}^*, E_1 + \ldots + E_n \leq t \},\quad t>0$$ where the supremum of the empty set is set to 0. The jump sizes of the processes $L_1$ and $L_2$ are respectively regrouped in the i.i.d. sequences $(X_n)_{n \in \mathbb{N}^*}$ and $(Z_n)_{n \in \mathbb{N}^*}$. All the aforementioned random series are assumed independent and define the compound Poisson processes:
\begin{eqnarray*}
    L_{t,1} = \sum_{n=1}^{N_t} X_n = \sum_{n=1}^{+ \infty} \ X_n \mathbb{1}_{ \{ E_1 + \ldots + E_n \leq t \} } \quad \quad \forall t \geq 0, \\
    L_{t,2} = \sum_{n=1}^{N_t} Z_n = \sum_{n=1}^{+ \infty} \ Z_n \mathbb{1}_{ \{ E_1 + \ldots + E_n \leq t \} } \quad \quad \forall t \geq 0.
\end{eqnarray*} 
In this study, we assume that the marginal processes are almost surely monotonic and consider jump sizes with constant signs. Without loss of generality and to ease the forthcoming analysis, we consider the case where both are almost surely non-decreasing. The model is settled with $\mathbb{P}[Z_1=0],\mathbb{P}[X_1=0]\in[0,1)$. The possible positivity of both terms requires attention and will be important during the identifiability stage.
For the sake of rigor in the forthcoming calculations, let $X_0$ and $Z_0$ be such that $X_0 = Z_0 = 0$ a.s.\\

For any fixed thresholds $x,z>0$, we define the up-crossing times of the processes $L_1$ and $L_2$ by
\begin{eqnarray*}
    T\; = \; \inf \{ t \geq 0 ;  \quad L_{t,1} \geq x \}\quad\text{and}\quad C \; = \; \inf \{ t \geq 0 ;  \quad L_{t,2} \geq z \}.
\end{eqnarray*}
In our context, the process $L$ is unobservable so that the distributions of the event times $T$ and $C$ are the only statistical information available. In addition, due to the censoring mechanism, only the smallest random time is returned, in a sense that both processes $L_1$ and $L_2$ compete to be the first to cross their respective thresholds. In survival analysis, this setting is characterized by pairwise outcomes that return the uncensored random time and an indicator of the observed time. Related to this framework, we consider two types of survival models $(Y,\Delta)$, depending on the censoring information available.
\begin{align*}
    \textbf{Model I:} \quad \quad \quad & (Y, \Delta ) \; = \; ( \min \{ T ; C \} , \mathbb{1}_{ \{T  \leq C \} } ) \\
    \textbf{Model II:} \quad \quad \quad & (Y, \Delta ) \; = \; ( \min \{ T ; C \} , 0. \mathbb{1}_{\{ T > C \}}   + 1 . \mathbb{1}_{\{ T < C \}} + 2 . \mathbb{1}_{\{ T = C \}} )
\end{align*}
Scenarios drawn from Model I are usually studied with independent censoring while Model II can be seen as a competing risk configuration. The analysis in \cite{Esc24}, where dependent censoring occurs and $\mathbb{P}[ T = C ] > 0$, has shown that non-parametric identifiability is not always guaranteed considering only Model I as opposed to Model II. Beyond the applicability of Model II in some experimental contexts, these settings will be discussed to address the model identifiability in Section \ref{sect::identifia}. From a statistical perspective, we will work under a parametric framework for the inter-arrival times of the process $L$ and  the jump size distributions. Depending on the class of the model and the definition of the outcomes, our objective is to propose an estimation procedure of the model parameters that fully describes the law of $L$. This will then make explicit the distribution of event times $T$ and $C$, which are usually of primary interest in survival analysis. 

\paragraph{Statistical context}
In our problem, $X_1$ and $Z_1$ are supposed to belong in some parametrized families of probability laws.
In our context, we observe realizations of the censoring couple $(Y, \Delta )$ from Model I or II, depending on which class of model we lie in. The usual objective of a survival model with censoring is to determinate the respective laws of the variable of interest T and of the censoring variable C. Our objective is to build an estimator of the vector of parameters that fully describe the law of $L$, naturally inducing laws of $T$ and $C$ are deduced.
The article \cite{Whi82} gives explicit assumptions to be verified on the parametrized density function $f_{(Y,\Delta)}$ to obtain respectively the consistency and the asymptotic normality of the Maximum likelihood estimator.

\section{Theoretical analysis of model functions} \label{sect:function_analysis}

\medskip
In the following section, we propose a prior analysis of the representations and the properties of the model functions. The structure of the model induces a dependency between the two hitting times due to the fact that the completion of the crossing rules can occur simultaneously with nonzero probability. This particularly implies that the couple $(T,C)$ dependence structure admits a singular part supported on the set $\{(u,v)\in\mathbb{R}^2_+,\,u=v\}$. We derive the couple density function in the following lemma and illustrate this property in Remark \ref{rem:pro_of_simult_occur}.

\begin{lemma}[Crossing time density function] \label{lem:dens_fct_hitting_times}
The absolutely continuous part of the law of the couple $(T,C)$ is supported by the set $\{(u,v) \in (\mathbb{R}_+)^2, \ u \neq v\}$ and has a density function given by: 
\begin{align} \nonumber 
     f_{ac}(u,v) \; = \; \sum_{n=1}^{+ \infty} & \sum_{j=1}^{n} \big[ c_{j-1,X} - c_{j,X} \big] \big[ c_{n,Z} - c_{n+1,Z} \big] 
    \lambda^2 e^{ - \lambda v} \frac{ (\lambda (v - u))^{n-j} }{ (n-j)!  } \frac{( \lambda u )^{j-1}}{(j-1)!} \mathbb{1}_{ \{ u < v \} }  \\ \label{eq:density_abs_cont_part}
    + \sum_{n=1}^{+ \infty} & \sum_{j=1}^{n} \big[ c_{n,X} - c_{n+1,X} \big] \big[ c_{j-1,Z} - c_{j,Z} \big]
    \lambda^2 e^{ - \lambda u} \frac{ (\lambda (u - v))^{n-j} }{ (n-j)!  } \frac{( \lambda v )^{j-1}}{(j-1)!} \mathbb{1}_{ \{ u > v \} }.
\end{align}
The singular part is supported by the diagonal $\{(u,u),  u \geq 0\}$ with a density function given by:
\begin{align} \label{eq:density_sing_part}
f_s(u)=  \partial_u \mathbb{P}[T\leq u, T=C] \; = \; \sum_{n=0}^{+ \infty} & \big[ c_{n,X} - c_{n+1,X} \big]  \big[ c_{n,Z} - c_{n+1,Z} \big] \lambda e^{ - \lambda u } \frac{(\lambda u)^n}{n!}  ,
\end{align}
where in both formulas for any $n \in \mathbb{N}$
\begin{eqnarray*}
    c_{n,X} = \mathbb{P} \left( \sum_{i=0}^n X_i < x \right) \quad \quad \quad \text{and} \quad \quad \quad c_{n,Z} = \mathbb{P} \left( \sum_{i=0}^n Z_i < z \right).
\end{eqnarray*}
\end{lemma}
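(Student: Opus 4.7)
The approach would be to reduce the continuous-time problem to a discrete one via the crossing indices of the compound sums. Let
$$M_X = \inf\{n \geq 1 : X_1+\cdots+X_n \geq x\}, \qquad M_Z = \inf\{n \geq 1 : Z_1+\cdots+Z_n \geq z\},$$
and let $S_k = E_1+\cdots+E_k$ denote the partial sums of the inter-arrival times. Since $L_{t,1}$ and $L_{t,2}$ are non-decreasing pure-jump processes whose jumps occur precisely at the arrival times $S_k$, one has $T = S_{M_X}$ and $C = S_{M_Z}$ almost surely. By the mutual independence of $(X_n)$, $(Z_n)$, $(E_n)$, the indices $M_X$ and $M_Z$ are independent of each other and of the whole sequence $(E_n)$. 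Their marginals are $\mathbb{P}(M_X = j) = c_{j-1,X} - c_{j,X}$ and $\mathbb{P}(M_Z = k) = c_{k-1,Z} - c_{k,Z}$ for $j,k\geq 1$, using the convention $X_0=Z_0=0$ so that $c_{0,X}=c_{0,Z}=1$.

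Conditioning on $(M_X, M_Z)$ then gives
$$\mathbb{P}((T,C)\in A) \;=\; \sum_{j,k\geq 1} \mathbb{P}(M_X=j)\,\mathbb{P}(M_Z=k)\,\mathbb{P}((S_j, S_k)\in A),$$
which I would split into three regimes: $j<k$, $j>k$, and $j=k$. When $j=k$, the pair $(S_j,S_k)$ collapses to $(S_j,S_j)$ and is concentrated on the diagonal $\{u=v\}$, furnishing the singular component. When $j\neq k$, one uses the Gamma law of $S_{\min(j,k)}$ and the independent Gamma law of the remaining gap to obtain an absolutely continuous density on the appropriate off-diagonal half-plane. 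This decomposition immediately identifies the singular and absolutely continuous parts of the law, since the diagonal is Lebesgue-null in $\mathbb{R}_+^2$ while the off-diagonal pieces come with bona fide bivariate densities.

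Explicitly, for $j<k$, the independence of $S_j \sim \Gamma(j,\lambda)$ and $S_k-S_j \sim \Gamma(k-j,\lambda)$ yields the joint density
$$f_{S_j,S_k}(u,v) \;=\; \frac{\lambda^j u^{j-1}}{(j-1)!}e^{-\lambda u}\,\frac{\lambda^{k-j}(v-u)^{k-j-1}}{(k-j-1)!}e^{-\lambda(v-u)}\,\mathbb{1}_{\{u<v\}}.$$
Multiplying by $\mathbb{P}(M_X=j)\mathbb{P}(M_Z=k)$, summing over $1\leq j<k$, and reindexing $n=k-1$ so that $\mathbb{P}(M_Z=n+1)=c_{n,Z}-c_{n+1,Z}$, I would recover the first double sum in (\ref{eq:density_abs_cont_part}). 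The symmetric regime $j>k$ produces the second double sum after exchanging the roles of $u$ and $v$. For the diagonal part, the Gamma density $\lambda e^{-\lambda u}(\lambda u)^{k-1}/(k-1)!$ of $S_k$ together with the reindexing $n=k-1$ gives (\ref{eq:density_sing_part}) directly.

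The conceptual content is modest once the reformulation $T=S_{M_X}$, $C=S_{M_Z}$ is in hand: the crucial ingredient is the independence of the crossing indices from the inter-arrival sequence, which factorises the joint law into a Bernoulli-type mixture of Gamma convolutions. I expect the only real obstacle to be notational, namely keeping the double-sum bookkeeping straight through the shift $(j,k)\mapsto(j,n)$ with $n=k-1$ and checking that the edge convention $X_0=Z_0=0$ correctly yields $\mathbb{P}(M_X=1)=1-c_{1,X}$ and $\mathbb{P}(M_Z=1)=1-c_{1,Z}$ so that the $n=1$ and $n=0$ terms of the stated sums correspond to the genuine first-jump crossings.
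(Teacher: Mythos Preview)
Your argument is correct and, in fact, cleaner than the paper's own proof. The paper proceeds by first computing the joint distribution function $\mathbb{P}[T\leq u,\,C\leq v]$ via conditioning on the values of the Poisson counts $N_u$ and $N_v$, obtaining a double sum involving binomial-type weights, and then differentiating this CDF twice to extract the off-diagonal density; the diagonal piece is handled separately by computing $\mathbb{P}[T\leq u,\,T=C]$ through a direct conditioning on $N_u$. Your route---introducing the crossing indices $M_X,M_Z$, writing $T=S_{M_X}$, $C=S_{M_Z}$, and exploiting the mutual independence of $M_X$, $M_Z$ and the inter-arrival sequence---bypasses the CDF entirely and delivers the density directly as a mixture of Gamma joint laws. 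This makes the singular/absolutely-continuous decomposition structurally transparent (it is simply the split $\{M_X=M_Z\}$ versus $\{M_X\neq M_Z\}$), whereas in the paper the two pieces are obtained by separate computations. The only mild care required is the reindexing $n=k-1$ you describe, which you have carried out correctly: with $k=n+1$ one has $\lambda^{k}=\lambda^{2}\cdot\lambda^{n-j}\cdot\lambda^{j-1}$, matching the form of~\eqref{eq:density_abs_cont_part}. Your approach is more conceptual and would arguably be preferable in an exposition; the paper's route has the minor advantage of also producing the explicit CDF formula, which it uses elsewhere (e.g.\ in the proof of Lemma~\ref{lem:dens_fct_censor}).
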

\noindent
This singular property due to the coupling of the crossing times plays a central role in our model dynamics. In a sense, one can think of an external agent returning the times at which the crossing rules has been observed. This leaves open the possibility that the events share the same time window, which results in a common returned time from the agent. The following lemma gives an explicit formula of the probability that such event occurs in terms of the jump sizes distributions. 

\begin{lemma}
\label{lem:null_prob_induces_non_censoring}
    The probability that $T$ and $C$ occur at the same time is given by:
    \begin{equation} \label{eq:prob_T_equal_C}
        \mathbb{P} \left[ T = C \right] = \underset{n \in \mathbb{{N}}}{\sum} [c_{n,X} - c_{n+1,X} ] [c_{n,Z} - c_{n+1,Z} ] .
    \end{equation}
    Moreover, the previous probability is null if and only if one of the crossing times occurs almost surely before the other one.
\end{lemma}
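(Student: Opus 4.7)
The formula \eqref{eq:prob_T_equal_C} follows from integrating the singular density $f_s$ of Lemma~\ref{lem:dens_fct_hitting_times} over $u\geq 0$: each integrand $\lambda e^{-\lambda u}(\lambda u)^n/n!$ is a Gamma$(n{+}1,\lambda)$ density with unit total mass, so the sum collapses to $\sum_{n\geq 0}(c_{n,X}-c_{n+1,X})(c_{n,Z}-c_{n+1,Z})$. An equivalent direct route observes that the Poisson jump times $(\tau_n)_{n\geq 1}$ are almost surely distinct, so $\{T=C\}=\bigsqcup_{n\geq 1}\{T=C=\tau_n\}$; on $\{T=C=\tau_n\}$ the $X$-partial sum must cross $x$ exactly at step $n$ and, by independence of $(X_i)$ and $(Z_i)$, so must the $Z$-sum at level $z$ at step $n$, giving $\mathbb{P}[T=C=\tau_n]=(c_{n-1,X}-c_{n,X})(c_{n-1,Z}-c_{n,Z})$.

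For the equivalence, introduce the first-passage indices $K:=\inf\{n\geq 1:\sum_{i=1}^n X_i\geq x\}$ and $L:=\inf\{n\geq 1:\sum_{i=1}^n Z_i\geq z\}$, so that $T=\tau_K$ and $C=\tau_L$. The jump times being strictly increasing yields $T=C\iff K=L$ and $T<C\iff K<L$, and $K,L$ are independent since they depend on disjoint i.i.d.\ sequences. Hence $\mathbb{P}[T=C]=\sum_{n\geq 1}p_nq_n$ with $p_n:=\mathbb{P}[K=n]=c_{n-1,X}-c_{n,X}$ (and similarly $q_n$). The ``if'' direction of the equivalence is immediate: if, say, $T<C$ a.s., then $\mathbb{P}[T=C]\leq\mathbb{P}[T<C,\,T=C]=0$. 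For the converse, $\mathbb{P}[T=C]=0$ forces $p_nq_n=0$ for every $n$, i.e.\ the supports $\mathcal{S}_K:=\{n:p_n>0\}$ and $\mathcal{S}_L:=\{n:q_n>0\}$ are disjoint subsets of $\mathbb{N}^*$.

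The decisive step is then to argue that each of $\mathcal{S}_K$ and $\mathcal{S}_L$ is an interval of consecutive positive integers. Once granted, two disjoint intervals in $\mathbb{N}^*$ are automatically separated: if $\max\mathcal{S}_K<\min\mathcal{S}_L$ then $K\leq\max\mathcal{S}_K<\min\mathcal{S}_L\leq L$ almost surely, so $T<C$ a.s.; the other case is symmetric. I would establish the interval property by writing $p_n=F_{n-1}-F_n$ with $F_n:=c_{n,X}$ non-increasing, then splitting on whether $X_1$ has an atom at $0$. If $\mathbb{P}[X_1=0]>0$, inserting a null jump gives $p_n>0\Rightarrow p_{n+1}>0$, making $\mathcal{S}_K$ right-infinite of the form $\{n_0,n_0{+}1,\ldots\}$. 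If instead $X_1>0$ a.s., a monotonicity analysis of the truncated laws $\mu_n|_{[0,x)}$, using $X_i\leq V:=\operatorname{ess\,sup}X_1$, shows that once the $\mu_n$-mass in the band $[x-V,x)$ vanishes the surviving mass has been pushed entirely above $x$ and so remains there for all subsequent $n$, terminating $\mathcal{S}_K$ into a finite interval. The main obstacle is precisely this last step, which requires tracking how the convolution $\mu_{n+1}=\mu_n\ast\mu_1$ transports mass across the threshold $x$ when jumps are strictly positive; the rest of the argument is routine bookkeeping.
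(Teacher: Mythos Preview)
Your derivation of the formula and the ``if'' direction are correct and match the paper. Your reduction of the ``only if'' direction to the claim that $\mathcal{S}_K$ and $\mathcal{S}_L$ are intervals of consecutive integers is also exactly the right move, and this is precisely what the paper proves via its auxiliary Lemma~\ref{lem:equality_probab_term} (which says $c_{n,X}=c_{n+1,X}$ forces $c_{n,X}\in\{0,1\}$; unwinding, this is the interval property). Your treatment of the case $\mathbb{P}[X_1=0]>0$ by ``inserting a null jump'' is a clean observation the paper does not isolate.

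The gap is in the case $X_1>0$ a.s. Your claim that ``once the $\mu_n$-mass in the band $[x-V,x)$ vanishes the surviving mass has been pushed entirely above $x$'' is not correct: vanishing of $\mu_n\!\mid_{[x-V,x)}$ only tells you that the sub-$x$ mass of $\mu_n$ sits in $[0,x-V]$, not that it is zero. One more convolution with $\mu_1$ (supported in $(0,V]$) can push that mass back into $(x-V,x)$, so nothing prevents $p_{n+2}>0$ from this observation alone. In other words, $p_{n+1}=0$ does not, by your stated mechanism, force $p_m=0$ for all $m\geq n+1$.

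The paper closes this gap with a downward rather than upward induction. Assuming $0<c_{n,X}=c_{n+1,X}<1$, it first shows (essentially your observation) that $\mathrm{supp}(S_n)\cap[0,x)\subset[0,x-V)$, hence $c_{n,X}=\mathbb{P}[S_n<x-V]$. The key extra step is the sandwich
\[
\mathbb{P}[S_n<x-V]\leq \mathbb{P}[S_{n-1}<x-V]=\mathbb{P}[S_{n-1}<x-V,\ X_n\leq V]\leq \mathbb{P}[S_n<x]=\mathbb{P}[S_n<x-V],
\]
which forces $c_{n-1,X}=c_{n,X}$ as well (with the same localisation for $S_{n-1}$). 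Iterating downward yields $c_{n,X}=\mathbb{P}[X_1<x-nV]$, contradicting $nV\geq x$ (which follows from $c_{n,X}<1$). This recursive descent is the missing ingredient in your sketch; once you have it, your interval argument goes through verbatim.
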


\begin{remark}
\label{rem:pro_of_simult_occur}
It is worth-noting that the probability of equaled event times does not depend from the intensity parameter $\lambda$. This can be explained intuitively, since the simultaneous occurrence depends only on the number of jumps between the processes $L_1$ and $L_2$ and not on the jump times.
\end{remark}
\noindent
As an immediate consequence of Lemma \ref{lem:null_prob_induces_non_censoring}, we observe that the model is non trivial if the probability that $T$ and $C$ occurs at the same time is strictly positive. Indeed, configurations under the null probability automatically induce either almost certainly censored or uncensored data, which leads to the non-identifiability of the law of $T$ or $C$. In the sequel, we will thus only consider processes such that $\mathbb{P} \left[ T=C \right] >0$.
In the next lemma, we derive the density of the censoring couple under models I and II.  

\begin{lemma}[Survival outcome density function]\label{lem:dens_fct_censor}
The density function of the couple $(Y,\Delta)$ is given by:
\begin{align*}
    \textbf{Model I:} \quad \quad f_{(Y, \Delta)} (t, \delta) = &
     \left\{ \begin{matrix} \sum_{n=0}^{+ \infty}  [c_{n,X} - c_{n+1,X} ] c_{n,Z} \lambda e^{- \lambda t} \frac{(\lambda t)^n}{n!} \mathbb{1}_{ \{ t \geq 0 \} } & \quad \quad \; \; \; \quad \quad \quad \text{if} \; \delta = 1 \\[.5cm]
     \sum_{n=0}^{+ \infty} [c_{n,Z} - c_{n+1,Z} ] c_{n+1,X} \lambda e^{- \lambda t} \frac{(\lambda t)^n}{n!} \mathbb{1}_{ \{ t \geq 0 \} } & \quad \quad \quad \; \; \; \quad \quad \text{if} \; \delta = 0
    \end{matrix} \right. \\[.5cm]
    \textbf{Model II:} \quad \quad f_{(Y, \Delta)} (t, \delta) = & \left\{ \begin{matrix}  \sum_{n=0}^{+ \infty} [c_{n,Z} - c_{n+1,Z} ] c_{n+1,X} \lambda e^{- \lambda t} \frac{(\lambda t)^n}{n!} \mathbb{1}_{ \{ t \geq 0 \} }  & \quad \quad \text{if} \; \delta = 0 \\[.5cm]
    \sum_{n=0}^{+ \infty}  [c_{n,X} - c_{n+1,X} ] c_{n+1,Z} \lambda e^{- \lambda t} \frac{(\lambda t)^n}{n!} \mathbb{1}_{ \{ t \geq 0 \} } & \quad \quad \text{if} \; \delta = 1 \\[.5cm]
    \sum_{n=0}^{+ \infty}  [c_{n,X} - c_{n+1,X} ] [c_{n,Z} - c_{n+1,Z} ] \lambda e^{- \lambda t} \frac{(\lambda t)^n}{n!} \mathbb{1}_{ \{ t \geq 0 \} }  & \quad \quad \text{if} \; \delta = 2  \end{matrix} \right.
\end{align*}
\end{lemma}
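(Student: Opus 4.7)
The plan is to derive each case from Lemma \ref{lem:dens_fct_hitting_times} by marginalizing the joint law of $(T,C)$. Since $Y=\min(T,C)$, conditioning on the value of $\Delta$ amounts to restricting to one of the three disjoint events $\{T<C\}$, $\{T>C\}$, $\{T=C\}$, on which the couple $(T,C)$ is governed respectively by the two pieces of $f_{ac}$ and by the singular density $f_s$. Concretely, for Model II we would write
\begin{align*}
f_{(Y,\Delta)}(t,1) \;=\; \int_t^{+\infty} f_{ac}(t,v)\,dv, \qquad
f_{(Y,\Delta)}(t,0) \;=\; \int_t^{+\infty} f_{ac}(u,t)\,du, \qquad
f_{(Y,\Delta)}(t,2) \;=\; f_s(t),
\end{align*}
while Model I is obtained from Model II by merging the cases $\delta=1$ and $\delta=2$, since $\{T\leq C\}=\{T<C\}\cup\{T=C\}$; in particular $f^{\rm I}_{(Y,\Delta)}(t,1)=f^{\rm II}_{(Y,\Delta)}(t,1)+f_s(t)$ and $f^{\rm I}_{(Y,\Delta)}(t,0)=f^{\rm II}_{(Y,\Delta)}(t,0)$.

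The first computational step is the integration of $f_{ac}$ in one variable. For $f_{(Y,\Delta)}(t,1)$ under Model II, I would substitute $w=v-t$ in the first line of (\ref{eq:density_abs_cont_part}) and recognize a Gamma density: for every $n\geq j\geq 1$,
\begin{equation*}
\int_0^{+\infty} \lambda\, e^{-\lambda w}\,\frac{(\lambda w)^{n-j}}{(n-j)!}\,dw \;=\; 1,
\end{equation*}
so that a factor $\lambda e^{-\lambda t}$ emerges uniformly in $n,j$. A symmetric substitution $w=u-t$ in the second line of (\ref{eq:density_abs_cont_part}) handles the $\delta=0$ case. The interchange between the integral and the double series would be justified by Fubini–Tonelli, since all coefficients $c_{j-1,X}-c_{j,X}$ and $c_{n,Z}-c_{n+1,Z}$ are nonnegative.

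The second computational step is a reindexation by the telescoping identity
\begin{equation*}
\sum_{n=j}^{+\infty}\bigl[c_{n,Z}-c_{n+1,Z}\bigr] \;=\; c_{j,Z},
\end{equation*}
and its analogue for $X$. For this identity to hold as a limit, one needs $c_{n,Z}\to 0$ as $n\to\infty$, which follows from the strong law of large numbers applied to $(Z_i)_{i\geq 1}$: since $\mathbb{P}[Z_1=0]<1$ and $Z_i\geq 0$, one has $\sum_{i=0}^n Z_i\to+\infty$ almost surely, hence $c_{n,Z}=\mathbb{P}\bigl[\sum_{i=0}^n Z_i<z\bigr]\to 0$; the same argument works for $c_{n,X}$. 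Swapping the order of summation (again legitimate by Tonelli) and applying this telescoping to the residual inner sum turns
\begin{equation*}
\sum_{n=1}^{+\infty}\sum_{j=1}^n\bigl[c_{j-1,X}-c_{j,X}\bigr]\bigl[c_{n,Z}-c_{n+1,Z}\bigr]\,\lambda e^{-\lambda t}\,\frac{(\lambda t)^{j-1}}{(j-1)!}
\end{equation*}
into $\sum_{k=0}^{+\infty}[c_{k,X}-c_{k+1,X}]\,c_{k+1,Z}\,\lambda e^{-\lambda t}\,(\lambda t)^k/k!$, after relabeling $k=j-1$. This yields the Model II formula for $\delta=1$, and an identical argument with the roles of $X$ and $Z$ swapped gives $\delta=0$. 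For $\delta=2$ the claim is exactly (\ref{eq:density_sing_part}).

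Finally, for Model I, it remains to add $f_s(t)$ to the $\delta=1$ density just obtained. Writing $c_{k+1,Z}+[c_{k,Z}-c_{k+1,Z}]=c_{k,Z}$ inside the common sum produces
\begin{equation*}
f^{\rm I}_{(Y,\Delta)}(t,1) \;=\; \sum_{k=0}^{+\infty}\bigl[c_{k,X}-c_{k+1,X}\bigr]\,c_{k,Z}\,\lambda e^{-\lambda t}\,\frac{(\lambda t)^k}{k!},
\end{equation*}
as announced. The main obstacle is purely bookkeeping: carrying out the swap of summations and the telescoping simultaneously on the two index ranges without sign errors. The analytic content (Fubini on nonnegative terms, Gamma-density integration, almost sure explosion of $\sum Z_i$) is mild.
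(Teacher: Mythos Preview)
Your proof is correct. The route, however, differs from the paper's: the paper first computes the cumulative distribution function $F_{(Y,\Delta)}(t,1)=\mathbb{P}[T\leq t]-\mathbb{P}[T\leq t,\,T>C]$ (the second term as a double integral of $f_{ac}$ over $\{0\leq v<u\leq t\}$), obtains a closed-form expression for $F_{(Y,\Delta)}$, and then differentiates in $t$; Model II is said to follow from Model I. You go the other way around: you integrate $f_{ac}(t,\cdot)$ over $(t,+\infty)$ directly, collapse the Gamma integral, telescope in $n$, and thus reach the Model II densities first, recovering Model I by adding $f_s$. Your path is shorter and avoids the CDF intermediate step and the subsequent differentiation; the price you pay is the need to justify $c_{n,Z}\to 0$ for the telescoping, which you handle cleanly via the strong law. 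Both approaches rest on Lemma~\ref{lem:dens_fct_hitting_times}, so the analytic content is the same; yours is just a more economical marginalization.
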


In the next section, analytical properties of the hazard function of $Y$ will help to identify the intensity parameter $\lambda$. The next lemma proposes a representation of the hazard function that observes proportionality with the latter parameter.

\begin{lemma}
\label{lem:hazard}
    The hazard function of the random variable $Y$ is given by:
    \begin{equation} \label{eq:hazard-fct}
    h_{Y}(t) = \frac{f_{Y}(t)}{S_{Y}(t)} = \lambda \left(1- \dfrac{ \sum_{n=0}^{+ \infty} c_{n+1,X} c_{n+1,Z} \frac{(\lambda t)^n}{n!} }{ \sum_{n=0}^{+ \infty} c_{n,X} c_{n,Z} \frac{(\lambda t)^n}{n!} }\right)
    \end{equation}
    where $S_{Y}$ denotes the survival function of $Y$.
\end{lemma}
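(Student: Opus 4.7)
The plan is to compute $f_Y$ and $S_Y$ separately and then divide. Both computations reduce to manipulations of series already appearing in Lemmas~\ref{lem:dens_fct_hitting_times} and \ref{lem:dens_fct_censor}.

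First I would obtain $f_Y$ by marginalizing the joint density of $(Y,\Delta)$ given in Lemma~\ref{lem:dens_fct_censor}. Using Model~I, one writes $f_Y(t)=f_{(Y,\Delta)}(t,0)+f_{(Y,\Delta)}(t,1)$ and the factor $\lambda e^{-\lambda t}(\lambda t)^n/n!$ is common, so the bracketed expression at level $n$ becomes
\[
[c_{n,X}-c_{n+1,X}]\,c_{n,Z}+[c_{n,Z}-c_{n+1,Z}]\,c_{n+1,X}
\]
which, after expanding and simplifying, telescopes neatly to $c_{n,X}c_{n,Z}-c_{n+1,X}c_{n+1,Z}$. (The same expression arises in Model~II after summing over the three values of $\delta$, so the result does not depend on the chosen survival representation.)

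Next I would compute $S_Y(t)=\mathbb{P}[T>t,\,C>t]=\mathbb{P}[L_{t,1}<x,\,L_{t,2}<z]$. Conditioning on $\{N_t=n\}$ and exploiting the fact that $(X_n)$ and $(Z_n)$ are independent of each other and of $N$, the conditional joint probability factorizes as $c_{n,X}c_{n,Z}$. Summing against the Poisson law of $N_t$ gives
\[
S_Y(t)=e^{-\lambda t}\sum_{n=0}^{+\infty}\frac{(\lambda t)^n}{n!}\,c_{n,X}c_{n,Z}.
\]
Finally, forming the ratio $h_Y(t)=f_Y(t)/S_Y(t)$, the factor $e^{-\lambda t}$ cancels and the telescoped numerator splits into two series: one identical to $S_Y(t)/e^{-\lambda t}$ (producing the constant $\lambda$) and one shifted by an index, producing the correction term displayed in \eqref{eq:hazard-fct}.

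I do not foresee any genuine obstacle: the exchange between the series and the integration used in the marginalization is justified by nonnegativity of all terms (Tonelli), and the resulting series converge for every $t\geq 0$ since $c_{n,X},c_{n,Z}\in[0,1]$ and $\sum_n (\lambda t)^n/n!=e^{\lambda t}<+\infty$. The only mildly subtle point is checking that $S_Y(t)>0$ for all $t\geq 0$, so that the ratio is well defined; this holds because $c_{0,X}c_{0,Z}=\mathbb{P}[X_0<x]\mathbb{P}[Z_0<z]=1$ (recall $X_0=Z_0=0$ a.s. and $x,z>0$), hence the first term of the series defining $S_Y(t)$ already equals $e^{-\lambda t}>0$.
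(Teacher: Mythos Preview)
Your proposal is correct and complete. The paper does not actually include a standalone proof of Lemma~\ref{lem:hazard}; the formula is stated and then used, with the derivation implicitly left to the reader as a direct consequence of Lemma~\ref{lem:dens_fct_censor}. Your approach---summing $f_{(Y,\Delta)}(t,0)+f_{(Y,\Delta)}(t,1)$ to obtain the telescoped coefficient $c_{n,X}c_{n,Z}-c_{n+1,X}c_{n+1,Z}$, computing $S_Y(t)$ by conditioning on $N_t$, and then taking the ratio---is exactly the natural argument and matches how the formula is implicitly obtained in the paper.
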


\section{Identifiability}
\label{sect::identifia}

\paragraph{Notations}
In line with our statistical purposes, we consider a parametric framework and assume that the jump size laws belong to pre-specified parametric distribution classes. To reflect this setting in the rest of the paper, we propose to index the previous notations with new parameters. We thus assume that the distributions of $\mathbb{P}_{X_1}$ and $\mathbb{P}_{Z_1}$ respectively belong to the classes $\{ \mathbb{P}_{\alpha}, \alpha \in \Theta_1 \}$ and $\{ \mathbb{P}_{\beta}, \beta \in \Theta_2 \}$ where $\Theta_i$, $i=1,2$ stand as real parameter spaces. We also denote the full parameter space as $\Theta=\Lambda \times \Theta_1 \times \Theta_2$ and index any model functions or probabilities with $\theta=(\lambda,\alpha,\beta)\in\Theta$ if there is a dependence on the jump size distributions or the counting process intensity. In particular, the coefficients of the cumulative jumps are now denoted by\\


\begin{eqnarray*}
    c_{n,X}(\alpha) = \mathbb{P}_{\alpha} \left(\sum_{i=0}^n X_i < x \right) \quad \quad \quad \text{and} \quad \quad \quad c_{n,Z}(\beta) = \mathbb{P}_{\beta} \left(\sum_{i=0}^n Z_i < z \right). 
\end{eqnarray*}

\paragraph{Identification in survival analysis}
It is important to mention that the issue of identifiability is strongly intertwined with the dependency between event times. Unlike independent censoring, which allows non-parametric identification, the complex dependence structure resulting from our model requires much more sophisticated analysis. In particular, its singular part contradicts requirements for some well defined approaches such as in \cite{Bra05} with the copula-graphic estimator and in \cite{Cza22} with a parametric setting. In the survival analysis framework, model identifiability has to take into account the censoring issue with an alternate definition than in standard scenarios. Formally, for any two parameter values $\theta,\theta' \in \Theta$, we say that the survival model is identifiable if 


\begin{equation} \label{eq:indent_prop}
    f_{(Y, \Delta),\theta} = f_{(Y, \Delta), \theta'} \quad\Longrightarrow \quad \theta = \theta'.
\end{equation}
\noindent
However, criteria only based on the density function of the observed data does not allow to immediately obtain the above property. This difficulty is particularly illustrated in the following counter-example based on the exponential distribution.
\begin{countex}
\label{count:non_identifiable_1}
Consider two intensity parameters $\lambda,\lambda'>0$ and two sequences $(b_n)_{n\in\mathbb{N}}$, $(b'_n)_{n\in\mathbb{N}}$ with $\lambda = 1$, $\lambda' = 3/4$, $b_n= (1/2)^n$ and $b_n' = (1/4)^n$. Then for any $t\geq 0$, one we have
\begin{eqnarray*}
    e^{- \lambda t} \sum_{n=0}^{+ \infty} b_n \frac{(\lambda t)^n}{n!} = e^{- \lambda' t} \sum_{n=0}^{+ \infty} b'_n \frac{(\lambda' t)^n}{n!}.
\end{eqnarray*}

\end{countex}

\noindent
The previous example thus shows that fine-tuning parameters can result in equal density functions of $(Y,\Delta)$, even though they have different values. Nonetheless, we can prove that identification of the intensity parameter and cumulative jump coefficients are possible under different configurations of the parametric classes.
\begin{defi}
Our model complies with the assumption $I_n$, if for any parameter values $\theta \in \Theta$ and $\theta'\in\Theta$
\begin{eqnarray*}
    f_{(Y, \Delta),\theta} = f_{(Y, \Delta), \theta'} \quad \Longrightarrow\quad c_{k,X}(\alpha)=c_{k,X}(\alpha')\quad\text{and}\quad c_{k,Z}(\beta)=c_{k,Z}(\beta'),\quad\forall k\leq n
\end{eqnarray*}
where by convention, satisfying the hypothesis $I_n$ for any $n\in\mathbb{N}$ is written $I_\infty$. 
\end{defi}
It is important to emphasize that fulfillment of the $I_n$ assumption does not necessarily ensure survival identifiability. From a stochastic process perspective, this also represents a complicated problem, which can be seen as the difficulty to characterize renewal processes only from the resulting crossing time distributions. Theoretical arguments such as the absence of stationary increments for renewal processes jusitify the difficulty of this question. Contrarily to the particular case of Poisson processes, most of the renewal processes are not Lévy processes (see Lemma 1.1 from \cite{Mit14}). However, the analysis of our model with several configurations has shown that almost all standard parametric classes under $I_n$ assumption have identifiable survival models. This suggests that reasonable settings with our model description mostly return practical survival models with likelihood methods. This model property is completed in the end of this section where we propose several assumptions to guarantee $I_n$ or $I_\infty$ under mild conditions.\\


\paragraph{Model identifiability}
Our identification procedure is twofold: first, a prior analysis of the hazard function of $Y$ shows that our survival model allows identification of the intensity parameter, and then we show that the assumption $I_n$ is verified for some $n \in \mathbb{N}$.\\

Derivation of the hazard function in Lemma \ref{lem:hazard} allows to isolate the intensity parameter in such way that the function can be written as $\lambda(1-g)$ where one can determine $g$'s limit at infinity.
In the following lemma, we show how to obtain it based on conditions on the cumulative jump size sequences.

\begin{lemma} \label{lem:tech_resulu}
Let $(a_n)_{n \in \mathbb{N}}$ be a non-increasing sequence of positive real numbers.
Let $g$ be the function defined on $\mathbb{R}^*$ by
\begin{eqnarray*}
    g(t) = \frac{\sum_{n=0}^{+ \infty} a_{n+1} t^n}{ \sum_{n=0}^{+ \infty} a_{n} t^n }, \quad t > 0.
\end{eqnarray*}
Suppose that there exists $a \in [0,1]$ such that $\lim_{n \xrightarrow{} + \infty} \frac{a_{n+1}}{a_n} = a$, then $\lim_{t \xrightarrow{} + \infty} g(t) = a $.
\end{lemma}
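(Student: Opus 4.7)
The plan is to isolate a finite polynomial ``head'' from both the numerator and denominator of $g$ and show that their tails dominate as $t \to +\infty$, the tail ratio being $a$ up to $\epsilon$ by hypothesis. First I would fix $\epsilon > 0$ and use the convergence $a_{n+1}/a_n \to a$ to pick an integer $N$ such that for every $n \geq N$,
\[
    (a-\epsilon)_+ \, a_n \;\leq\; a_{n+1} \;\leq\; (a+\epsilon)\, a_n,
\]
where $(x)_+ := \max(x,0)$. Since $(a_n)$ is positive and non-increasing, we have in particular $a_n \geq a_{n+1} > 0$ for all $n$, so no term vanishes.

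Next, I would split each power series at the index $N$, writing
\[
    g(t) \;=\; \frac{P(t) + T_1(t)}{Q(t) + T_2(t)},
\]
with $P(t) = \sum_{n=0}^{N-1} a_{n+1} t^n$ and $Q(t) = \sum_{n=0}^{N-1} a_n t^n$ (two polynomials of degree at most $N-1$), and with $T_1(t), T_2(t)$ defined analogously for indices $n \geq N$. Termwise comparison on the tails immediately yields
\[
    (a-\epsilon)_+ \, T_2(t) \;\leq\; T_1(t) \;\leq\; (a+\epsilon) \, T_2(t).
\]
Dividing numerator and denominator of $g(t)$ by $T_2(t)$ then reduces the claim to showing that $P(t)/T_2(t)$ and $Q(t)/T_2(t)$ vanish as $t \to +\infty$. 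This is immediate: since all summands are nonnegative and $a_N > 0$, one has $T_2(t) \geq a_N t^N$, whereas $P$ and $Q$ have degree at most $N-1$; hence both ratios are $O(1/t)$.

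Combining these bounds produces
\[
    (a - \epsilon)_+ \;\leq\; \liminf_{t \to \infty} g(t) \;\leq\; \limsup_{t \to \infty} g(t) \;\leq\; a + \epsilon,
\]
and letting $\epsilon \to 0$ concludes. The argument is essentially a sandwich plus a tail-dominates-polynomial-head observation and contains no genuine obstacle. The one mildly delicate implicit point is that $\sum_n a_n t^n$ has to converge for arbitrarily large $t$ for the statement to be meaningful; in the paper's intended application through Lemma~\ref{lem:hazard}, the coefficients carry $1/n!$ factors, so the two series are entire and the limit at $+\infty$ genuinely makes sense.
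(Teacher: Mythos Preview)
Your proof is correct and follows essentially the same head/tail splitting as the paper: both fix $N$ from the ratio hypothesis, compare the tail of the numerator to $a$ times the tail of the denominator, and dispose of the degree-$(N-1)$ polynomial head against a denominator growing at least like $t^N$. The only cosmetic difference is that the paper bounds $|g(t)-a|$ directly whereas you sandwich $g(t)$ between $(a-\epsilon)_+$ and $a+\epsilon$; you also make explicit the implicit assumption that the series converge for all $t>0$, which the paper notes in passing.
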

\noindent
Lemma \ref{lem:tech_resulu} suggests that knowledge of the limit $a$ of $(c_{n+1,X}c_{n+1,Z})/(c_{n,X} c_{n,Z})$ can help isolating the intensity parameter since the limit of the hazard function at the infinity gives $\lambda(1-a)$. As such, it is important to determine under which conditions one can determine the value of $a$.

In order to answer that question, we define the three following families of random variable sequences. 
\begin{defi} \label{def:families}
Let $M = (M_n)_{n \in \mathbb{N}^*}$ be a sequence of i.i.d. non-negative random variables. 
\begin{description}
\item[Class $F_1$.] $M_1$ is almost surely greater than a positive constant.
\item[Class $F_2$.] $M_1$ is absolutely continuous with respect to the Lebesgue measure and the density denoted by $\phi$ satisfies
\begin{itemize}
    \item For all $\varepsilon > 0$, $\mathbb{P}(M_1 \in [0, \varepsilon] ) = \int_0^\varepsilon \phi  > 0$. 
    \item There exists $N > 0$ such that $\phi^{*N} = \phi * \ldots* \phi$ is non-decreasing on the subset $[0,x]$. 
\end{itemize} 
Note that $\phi^{*N}$ is the density function of the sum $S_N = \sum_{k=1}^{N} M_k$. 
\item[Class $F_3$.] $M_1$ is a discrete random variable, $0 \in M_1(\Omega)$ and $ \inf( M_1(\Omega)\backslash \{ 0 \}) > 0$. 
\end{description}
\end{defi}
\begin{remark}
    The existence of $N > 0$ such that $\phi^{*N}=\phi_{S_N}$ is non-decreasing on the subset $[0,x]$ is a property that can be verified in practice. Indeed, the function $\phi^{*N}$ is determinable for lot of density functions. Easiest cases are the infinitely divisible and stable laws. For example the class of gamma distributions, uniform laws on the compact subset $[0,a]$, with $a >0$, Fréchet distributions, Lévy distributions, etc. belong to $F_2$.
\end{remark}

In the sequel, we consider that $(X_n)_{n \in \mathbb{N}}$ and $(Z_n)_{n \in \mathbb{N}}$ respectively belong to the classes $F_1$, $F_2$ or $F_3$. If $(X_n)_{n \in \mathbb{N}}$ (resp. $(Z_n)_{n \in \mathbb{N}}$) belongs to the family $A_1$, there exists $N \in \mathbb{N}$ such that for any $n \geq N$, $c_{n,X} = 0$ (resp. $c_{n,Z}=0$). The following theorem determines the limit at infinity of the sequences $\left( c_{n+1,X} /c_{n,X} \right)_{n \in \mathbb{N}}$ and $\left(  c_{n+1,Z} / c_{n,Z} \right)_{n \in \mathbb{N}}$, depending on which family of probability laws they belong to $F_1$ or $F_2$. 

\begin{theo} \label{theo:behaviour_c_(n+1)/c_n}
If $(X_n)_{n \in \mathbb{N}^*} \in F_2$. 
Then we have:
\begin{eqnarray*}
     \lim_{n \xrightarrow{} + \infty} \frac{\mathbb{P} [ X_1 + \ldots + X_{n+1} < x ]}{\mathbb{P} [ X_1 + \ldots + X_n < x]} = 0 .
\end{eqnarray*}
If $(X_n)_{n \in \mathbb{N}^*} \in F_3$.
Then, for any $x>0$, there exists $C_x > 0$:
\begin{eqnarray*}
    \left| \frac{\mathbb{P} [ X_1 + \ldots + X_{n+1} < x] }{ \mathbb{P} [ X_{1} + \ldots + X_n < x] } - \mathbb{P} [X_1 = 0] \right| \leq \frac{C_x}{n} .
\end{eqnarray*}
\end{theo}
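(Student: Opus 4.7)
The plan is to handle the two classes separately. Setting $a_n(t) := \mathbb{P}(X_1+\cdots+X_n<t)$ and $a_n := a_n(x)$, in both cases the task is to determine the asymptotic behaviour of $a_{n+1}/a_n$.

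For the $F_3$ case I would exploit the discrete structure. Let $p := \mathbb{P}(X_1=0)$ and $\eta := \inf(X_1(\Omega)\setminus\{0\}) > 0$, and let $K_n\sim\mathrm{Bin}(n,p)$ count the indices $i\leq n$ with $X_i = 0$; the remaining $n-K_n$ jumps are i.i.d.\ copies of $X_1\mid X_1>0$, each $\geq \eta$. Hence $\{S_n<x\}$ forces $n-K_n\leq\lfloor x/\eta\rfloor$, reducing $a_n$ to a \emph{finite} sum
\begin{equation*}
a_n \;=\; \sum_{m=0}^{\lfloor x/\eta\rfloor} \binom{n}{m}\,p^{\,n-m}(1-p)^{m}\,q_m, \qquad q_m := \mathbb{P}(Y_1+\cdots+Y_m<x),
\end{equation*}
where the $Y_j$ are i.i.d.\ copies of $X_1\mid X_1>0$. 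With $m^\star := \max\{m: q_m>0\}$ (finite), the $m=m^\star$ term, of order $n^{m^\star}p^{\,n-m^\star}$, dominates as $n\to\infty$. A direct expansion then yields $a_{n+1}/a_n = p\cdot (n+1)/(n-m^\star+1)\cdot(1+O(1/n)) = p + O(1/n)$, giving the bound $|a_{n+1}/a_n - p|\leq C_x/n$ with an explicit $C_x$.

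For the $F_2$ case I would start from the convolution identity $a_{n+1}(x)=\int_0^x\phi(u)\,a_n(x-u)\,du$, which gives
\begin{equation*}
\frac{a_{n+1}}{a_n} \;=\; \int_0^x \frac{a_n(x-u)}{a_n(x)}\,\phi(u)\,du.
\end{equation*}
Since each ratio lies in $[0,1]$ and $\phi$ is a probability density, by dominated convergence it is enough to show $a_n(x-u)/a_n(x)\to 0$ for every $u\in(0,x]$; probabilistically, this says that the law of $S_n$ conditioned on $\{S_n<x\}$ concentrates at the endpoint $x$. The hypothesis that $\phi^{*N}$ is non-decreasing on $[0,x]$ propagates, via convolution with $\phi$, to $\phi_{S_n}$ for every $n\geq N$, so the conditional density of $S_n$ on $[0,x]$ is non-decreasing, a qualitative form of right-end concentration.

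The main obstacle is to make this concentration \emph{quantitative}, since monotonicity alone yields only the Chebyshev-sum bound $a_n(x-\delta)/a_n(x)\leq (x-\delta)/x$, bounded below $1$ but not vanishing in $n$. The plan is to combine (i) the super-exponential decay $a_k^{1/k}\to 0$, which in class $F_2$ comes from $\mathbb{E}[e^{-tX_1}]\to\mathbb{P}(X_1=0)=0$ as $t\to\infty$ and a Chernoff bound (Cram\'er's rate function at the origin is infinite), with (ii) the sub-multiplicative inequality $a_{n+k}/a_n \leq \mathbb{E}[(x-S_k)^+]/x$, obtained by iterating the conditional-expectation representation together with the stochastic domination of $S_n\mid S_n<x$ by a uniform law on $[0,x]$ (read off from the non-decreasing conditional density). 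Together these yield $(a_{n+k}/a_n)^{1/k}\to 0$ as $k\to\infty$ uniformly in $n\geq N$. The argument then closes by the monotonicity of $(a_{n+1}/a_n)_n$, equivalent to log-concavity of $(a_n)$, which itself reduces to monotonicity in $n$ of the reversed hazard $\phi_{S_n}(x)/a_n(x)$; this last reduction is the most delicate point I anticipate. Once in place, $a_{n+1}/a_n$ is dominated by the geometric mean of its successors and thus tends to $0$.
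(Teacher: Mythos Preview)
Your $F_3$ argument is correct and essentially the same as the paper's, only packaged more cleanly: you represent $a_n$ as a finite sum indexed by the number $m$ of nonzero jumps, identify the dominant term $m=m^\star$, and read off $a_{n+1}/a_n=p+O(1/n)$. The paper does the same combinatorics via an explicit partition of $\{S_n<x\}$ and a term-by-term comparison.

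Your $F_2$ argument, however, has a real gap. You correctly reduce to showing $a_n(x-u)/a_n(x)\to 0$ for each fixed $u\in(0,x]$, and you correctly observe that monotonicity of $\phi_{S_n}$ on $[0,x]$ yields only the $n$-independent bound $(x-u)/x$. Your proposed workaround---combining $a_k^{1/k}\to 0$ with sub-multiplicativity and then invoking log-concavity of $(a_n)_n$ to force the ratio to converge---hinges entirely on the log-concavity step, which you yourself flag as unproved. Sub-multiplicativity together with $a_k^{1/k}\to 0$ does \emph{not} by itself imply $a_{n+1}/a_n\to 0$, and the log-concavity of $n\mapsto a_n$ (equivalently, monotonicity of the ratio) is neither obvious nor established in your sketch; the claimed reduction to monotonicity of a reversed hazard is also not justified.

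The paper closes this gap by a direct and elementary device you missed: instead of applying the monotonicity bound once to $\phi_{S_n}$, write $n=qN_0+r$ and split $S_n$ into $q$ independent blocks of size $N_0$ plus a remainder. A change of variables $y_i=\tfrac{x}{x-u}\,x_i$ in each block, together with $\phi^{*N_0}$ non-decreasing on $[0,x]$, produces one factor $(x-u)/x$ \emph{per block}, giving
\[
\frac{a_n(x-u)}{a_n(x)}\;\leq\;\left(\frac{x-u}{x}\right)^{q}\;\leq\;\left(\frac{x-u}{x}\right)^{n/N_0-1},
\]
which tends to $0$ and is dominated for the DCT. This avoids any appeal to log-concavity or Chernoff bounds.
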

The sequence $\left(  c_{n+1,X} / c_{n,X} \right)_{n \in \mathbb{N}}$ admits different limits whether $(X_n)_{n \in \mathbb{N}}$ belongs to $F_2$ or $F_3$. From that observation, we understand the necessity to distinguish the identifiability problem in three different cases. 
\begin{assumption} \label{ass:identifiability}
We now define the following conditions that separate these cases. In all three statements, the property is supposed true regardless of the values of the parameters.
\begin{itemize}
    \item \textbf{(H1).} $(X_n)_{n \in \mathbb{N}}$ or $(Z_n)_{n \in \mathbb{N}}$ belong to $F_1$.
    \item \textbf{(H2.i).} $(X_n)_{n \in \mathbb{N}}$ or $(Z_n)_{n \in \mathbb{N}}$ belong to $F_2$. 
    \item \textbf{(H2.ii).}  $(X_n)_{n \in \mathbb{N}}$ and $(Z_n)_{n \in \mathbb{N}}$ belong to $F_3$.
\end{itemize}
\end{assumption}

Assumption {\bf (H1)} and {\bf (H2.i)} both ensure that $\lim_{t\to+\infty}h_{Y}(t) = \lambda$, while under assumption \textbf{(H2.ii)}, combining Lemma \ref{lem:tech_resulu} and Theorem \ref{theo:behaviour_c_(n+1)/c_n}, only gives  $\lambda(1 - \mathbb{P}_{\alpha} [ X_1 = 0 ] \mathbb{P}_{\beta} [ Z_1 = 0 ]) = \lambda'(1 - \mathbb{P}_{\alpha'} [ X_1 = 0 ] \mathbb{P}_{\beta'} [ Z_1 = 0 ]) $ which is insufficient to obtain the intensity parameter identifiability. Model I provides less information than Model II. Thus, configurations such as under \textbf{(H2.ii)} require to consider a more informative model to obtain the intensity parameter identification, explaining the necessity to consider Model II. \newline

\begin{theo}[Identifiability] \label{theo:identifiability} 
$\ $
\begin{itemize}
    \item Under \textbf{(H1)}, considering Model I, the identifiability of the intensity parameter $\lambda$ is ensured and $I_{N_{\max}}$ is verified, with $N_{\max} = \min \{ n \in \mathbb{N}, \mathbb{P}_{\alpha}[X_1 + \ldots + X_n < x] = 0 \text{ or } \mathbb{P}_{\beta}[Z_1 + \ldots + Z_n < z] = 0 \}$.
    \item Under \textbf{(H2.i)}, considering Model I, the identifiability of the intensity parameter $\lambda$ is ensured and $I_{\infty}$ is verified.
    \item Under \textbf{(H2.ii)}, considering Model II, the identifiability of the intensity parameter $\lambda$ is ensured and $I_{\infty}$ is verified.
\end{itemize}
\end{theo}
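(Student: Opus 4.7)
The strategy splits into two stages: first, identifying the intensity parameter $\lambda$ from the asymptotic behaviour of hazard and sub-hazard functions; second, using the analyticity of the observed densities in Lemma \ref{lem:dens_fct_censor} to recover the coefficients $c_{n,X}$ and $c_{n,Z}$ by induction on $n$.

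For the first stage, I would apply Lemma \ref{lem:hazard} together with Theorem \ref{theo:behaviour_c_(n+1)/c_n} and Lemma \ref{lem:tech_resulu}. Under \textbf{(H1)}, the sums in the hazard formula become finite polynomials because $c_{n,X}$ or $c_{n,Z}$ vanishes for $n\geq N_{\max}$, so a direct comparison of leading monomials in $t$ yields $\lim_{t\to\infty} h_Y(t)=\lambda$. Under \textbf{(H2.i)}, Theorem \ref{theo:behaviour_c_(n+1)/c_n} gives $c_{n+1,X}/c_{n,X}\to 0$ (or the analogous statement for $Z$), and Lemma \ref{lem:tech_resulu} applied with $a_n = c_{n,X}c_{n,Z}$ again delivers $\lim h_Y(t) = \lambda$. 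Under \textbf{(H2.ii)}, the classical hazard only provides $\lim h_Y(t) = \lambda(1-p_Xp_Z)$ with $p_X := \mathbb{P}_\alpha[X_1=0]$, $p_Z := \mathbb{P}_\beta[Z_1=0]$, which is insufficient; here I would switch to Model II and consider the sub-hazards $h^{(\delta)}(t) := f_{(Y,\Delta)}(t,\delta)/S_Y(t)$ for $\delta \in \{0,1,2\}$. By an Abelian-type argument in the spirit of Lemma \ref{lem:tech_resulu}, one obtains
\[
\lim_{t\to\infty} h^{(0)}(t) = \lambda p_X(1-p_Z),\ \ \lim_{t\to\infty} h^{(1)}(t) = \lambda p_Z(1-p_X),\ \ \lim_{t\to\infty} h^{(2)}(t) = \lambda(1-p_X)(1-p_Z),
\]
and this nonlinear system determines $(\lambda, p_X, p_Z)$ uniquely.

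Once $\lambda$ is pinned down, the second stage proceeds by equating coefficients in the series expansions from Lemma \ref{lem:dens_fct_censor}. Dividing each equality $f_{(Y,\Delta),\theta}(t,\delta)=f_{(Y,\Delta),\theta'}(t,\delta)$ by $\lambda e^{-\lambda t}$ yields equalities of entire power series in $t$; matching the coefficient of $(\lambda t)^n/n!$ produces, for every $n\geq 0$,
\begin{align*}
[c_{n,X}(\alpha)-c_{n+1,X}(\alpha)]\,c_{n,Z}(\beta) &= [c_{n,X}(\alpha')-c_{n+1,X}(\alpha')]\,c_{n,Z}(\beta'),\\
[c_{n,Z}(\beta)-c_{n+1,Z}(\beta)]\,c_{n+1,X}(\alpha) &= [c_{n,Z}(\beta')-c_{n+1,Z}(\beta')]\,c_{n+1,X}(\alpha').
\end{align*}
Starting from $c_{0,X}=c_{0,Z}=1$, the $n=0$ identity forces $c_{1,X}(\alpha)=c_{1,X}(\alpha')$, and then $c_{1,Z}(\beta)=c_{1,Z}(\beta')$ whenever $c_{1,X}$ is positive. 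An induction propagates these equalities on both sequences so long as the relevant factors $c_{n,X}, c_{n,Z}$ remain strictly positive. Under \textbf{(H2.i)} or \textbf{(H2.ii)} these coefficients are strictly positive for all $n$ ($X_1$ putting mass on every neighbourhood of $0$ in the $F_2$ case and having an atom at $0$ in the $F_3$ case), so the induction never terminates and $I_\infty$ follows. Under \textbf{(H1)} the induction runs precisely up to $n=N_{\max}$, giving $I_{N_{\max}}$.

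The main obstacle is the \textbf{(H2.ii)} identification of $\lambda$: the standard hazard only yields the composite quantity $\lambda(1-p_Xp_Z)$, so I must justify a Tauberian-type passage to the limit for each sub-hazard (whose coefficient sequences $[c_{n,X}-c_{n+1,X}][c_{n,Z}-c_{n+1,Z}]$ and $[c_{n,Z}-c_{n+1,Z}]c_{n+1,X}$ are not manifestly non-increasing, so Lemma \ref{lem:tech_resulu} does not apply verbatim and a more general ratio result is needed) and verify that the three resulting limits constitute an invertible system in $(\lambda, p_X, p_Z)$. A secondary subtlety lies in the second-stage induction, where one must carefully track the situations in which factors $c_{n,X}, c_{n,Z}$ or $c_{n+1,X}$ vanish—precisely the mechanism that caps the conclusion at $I_{N_{\max}}$ under \textbf{(H1)}.
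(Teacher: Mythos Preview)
Your two–stage strategy matches the paper's proof almost exactly: identify $\lambda$ from asymptotics, then match power-series coefficients (equivalently, derivatives at $t=0$) to run the induction on $c_{n,X},c_{n,Z}$. The second stage is identical in substance to what the paper calls ``evaluating all the derivatives of the density function at $t=0$''.

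The only genuine difference is your handling of \textbf{(H2.ii)}. You propose the three sub-hazards $f_{(Y,\Delta)}(t,\delta)/S_Y(t)$ and solve the resulting nonlinear system for $(\lambda,p_X,p_Z)$; the paper instead takes the \emph{density ratios} $f_{(Y,\Delta)}(t,2)/f_{(Y,\Delta)}(t,0)$ and $f_{(Y,\Delta)}(t,2)/f_{(Y,\Delta)}(t,1)$, whose limits $1/p_X-1$ and $1/p_Z-1$ isolate $p_X$ and $p_Z$ \emph{without} $\lambda$, and only afterwards recovers $\lambda$ from the ordinary hazard limit $\lambda(1-p_Xp_Z)$. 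The paper's route avoids your $3\times 3$ system entirely. Both routes, however, rest on the same Abelian extension of Lemma~\ref{lem:tech_resulu} (from $\sum a_{n+1}t^n/\sum a_n t^n$ to $\sum b_n t^n/\sum a_n t^n$ when $b_n/a_n\to L$); you correctly flag this as the main technical obstacle, while the paper invokes Lemma~\ref{lem:tech_resulu} somewhat loosely at the same spot. For \textbf{(H1)} you read off $\lambda$ from the polynomial hazard, whereas the paper compares the finite-sum densities $f_{(Y,\Delta),\theta}(t,1)$ directly via $\frac{\lambda'}{\lambda}e^{(\lambda'-\lambda)t}\sim C\,t^{N_{\min}-N'_{\min}}$; these are equivalent.
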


Theorem \ref{theo:identifiability} is a first step to obtain the identifiability of the model in three general cases. We obtain the identifiability of the intensity parameter $\lambda$, and depending on the class of model, $I_{\infty}$ or $I_n$ is verified for some $n \in \mathbb{N}$. It is worth-noting that \textbf{(H2)} ensures $I_{\infty}$ to be verified, guarantying the existence of an infinite dimensional system of equations that respectively link $\alpha$, $\alpha'$ and $\beta$, $\beta'$, which are finite dimensional vectors of parameters. Moreover, the property $I_n$ induces the identifiability of the jumping variable parameters  whenever $n\in \mathbb{N}^*$ is large enough. It therefore follows that a wide variety of type I models induce the identifiability from definition \eqref{eq:indent_prop}. The following example gives the number minimal of equations needed to obtain the identifiability of the jumping parameter.

\begin{exx} \label{exx:identifiable}
    Given $I_n$, we define $N_{\min} = \min \{ n \in \mathbb{N}^*, I_n \Rightarrow \alpha = \alpha' \}$.
If $X_1 \sim \mathcal{E}(\alpha)$ and $X_1' \sim \mathcal{E}(\alpha')$, $N_{\min} = 1$. If $X_1 \sim \mathcal{B}(p)$ and $X_1' \sim \mathcal{B}(p')$, $N_{\min} = \min\{ n \geq x\}$. If $X_1 \sim \mathcal{P}(\alpha)$ and $X_1' \sim \mathcal{P}(\alpha')$, $N_{\min} = 1$
where $\mathcal{E}$ denotes the exponential law, $\mathcal{B}$ denotes the Bernoulli law and $\mathcal{P}$ denotes the Poisson law.
\end{exx}

It is possible to build counter-examples such as Example \ref{count:non_identifiable_1}, where $I_n$ does not imply the jumping parameter identification $(\alpha,\beta)$.
\begin{countex}
Assume that the model lies under assumption {\bf (H1)}. Suppose that $Z_1 = Z_1' = 1$ a.s. and that $X_1$ and $X_1'$ belongs to the family of binomial laws $\{ \mathcal{B}(n,p), (n,p) \in \mathbb{N}^* \times ]0;1[ \}$. Finally suppose that $x=9$ and $z=3$. \\
If $X_1 \leadsto \mathcal{B}(n,p)$ and $X_1' \leadsto \mathcal{B}(n',p')$ it naturally comes from simple calculus that necessarily $\lambda = \lambda'$ but if $n=3$ and $n'=4$ only one equation links $p$ and $p'$ :
\begin{eqnarray*}
    \mathbb{P} [ X_1 + X_2 + X_3 < x] = \mathbb{P} [ X_1' + X_2' + X_3' < x] \iff  1 - p^9 = 1 - \sum_{k=9}^{12} \binom{12}{k} (p')^k (1-p')^{12-k}.
\end{eqnarray*}
And for any $p \in (0,1)$ fixed, there exists a unique $p'$ such that the the equality holds. While the initial processes are not the same (the laws of $X_1$ and $X_1'$ are not the same), they induce the same censoring density function. One can observe that in this counter-example, only one equation links $(n,p)$ to $(n',p')$.
\end{countex}
\noindent
Although the latter example shows the possible non-identifiability of some model, one can see that it has been precisely defined to permit only one equation to link the two unknown parameters. It illustrates that non-identifiability can occur in exotic models where the number of equation that link the parameters is strictly less than the dimension of the vector of parameters.
However for a great variety of examples, identifiability holds. See Example \ref{exx:identifiable} or examples of Section \ref{ssect:simulation}.

\section{Consistency and asymptotic normality of the MLE estimator}\label{sect:prop_estim}

From now on, Assumption \ref{ass:identifiability} holds, that is \textbf{(H1)}, \textbf{(H2.i)} or \textbf{(H2.ii)} is verified and we suppose that the identifiability equation \eqref{eq:indent_prop} is satisfied. 

In this section, we propose an MLE approach in order to estimate the parameter of our model $\theta$. All the results exposed in this section are based on convergence results from \cite{Whi82}. In his article, the author gives a list of six assumptions (denoted A1 to A6) to verify to obtain consistency and asymptotic normality of the MLE estimator. Verification must be done to see if the conditions of \cite{Whi82} are verified in our setting under mild assumptions. 

\paragraph{Notations}
Given a n-sample of i.i.d. realisations of the couple $(Y, \Delta )$ $\{ (Y_i,\Delta_i), \; i \in \{1,\ldots,n\}  \}$, induced by the vector of parameters $\theta^0 = (\lambda^0, \alpha^0, \beta^0)$, we define the log-likelihood function of the sample as:
\begin{align*}
    \theta \in \Theta \mapsto \mathcal{L}_n ((Y,\Delta),\theta) &  = \frac{1}{n} \sum_{i=1}^n \log f_{(Y,\Delta),\theta}((Y_i,\Delta_i))
\end{align*}
where $f_{(Y,\Delta),\theta}$ is defined in Lemma \ref{lem:dens_fct_censor} (here we emphasize the dependance w.r.t. parameter $\theta$),
and we define the maximum likelihood estimator as a parameter vector $\hat{\theta}_n$:
\begin{align*}
    \hat{\theta}_n & = \arg \underset{\theta \in \Theta}{\max} \mathcal{L}_n ((Y,\Delta),\theta)
\end{align*}

It is worth-noting that \cite{Whi82} defines an MLE estimator for the more general case of misspecified model. Our model lies in the family of well specified models, simplifying the verification of the assumptions proposed in the previous article. In order to lighten the main text, all the properties needed to ensure the existence, the consistency and the asymptotic normality of the MLE are properly defined and detailled in the mathematical appendix. To guarantee the existence of a measurable MLE, classical assumptions such as the compactness of the space of parameters {\bf (H3)} and continuity in parameters of the density functions {\bf (H4)} are made.
\begin{itemize}
    \item \textbf{(H3).} $\Lambda$, $\Theta_1$ and $\Theta_2$ are respectively compact subsets of $\mathbb{R}_+^*$, $\mathbb{R}^{d_1}$ and $\mathbb{R}^{d_2}$ so that $\Theta = \Lambda \times \Theta_1 \times \Theta_2$ is a compact subset of the Euclidean space $\mathbb{R}^{1+d_1+d_2}$. And the true parameter $\theta^0$ is in the interior of $\Theta$.
    \item \textbf{(H4).} See Appendix, Assumption \ref{ass:consistency}.
\end{itemize}
Given the structure of the density function (Lemma \ref{lem:dens_fct_censor}), the continuity in $\lambda$ is immediately deduced.
Assumption \textbf{(H4)} is then settled to ensure the continuity of the respective coefficient functions $\alpha \mapsto c_{n,X} (\alpha)$ and $\beta \mapsto c_{n,Z} (\beta )$, for any $n \in \mathbb{N}$. 

\begin{theo}[Existence] \label{thm:existence_MLE}
Under assumptions \textbf{(H3)}-\textbf{(H4)}, there exists a measurable MLE  $\hat{\theta}_n$, for any $n \in \mathbb{N}^*$.
\end{theo}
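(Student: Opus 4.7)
The plan is to apply a measurable maximum (or measurable selection) theorem, whose two hypotheses are (i) compactness of the parameter space $\Theta$, furnished by \textbf{(H3)}, and (ii) upper semi-continuity of $\theta\mapsto \mathcal{L}_n((Y,\Delta),\theta)$ on $\Theta$ for every sample, together with joint measurability. Once both are in place, a non-empty argmax exists pointwise and admits a measurable selection $\hat\theta_n$.

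First I would establish continuity of $\theta\mapsto f_{(Y,\Delta),\theta}(t,\delta)$ on $\Theta$ for each fixed $(t,\delta)$. Inspecting the explicit series expressions given in Lemma \ref{lem:dens_fct_censor}, each summand is continuous in $\lambda$ (immediate, as noted in the text), and continuous in $(\alpha,\beta)$ by \textbf{(H4)}, which precisely asserts the continuity of $\alpha\mapsto c_{n,X}(\alpha)$ and $\beta\mapsto c_{n,Z}(\beta)$ for every $n$. Since $c_{n,X}(\alpha),c_{n,Z}(\beta)\in[0,1]$, every summand is dominated by $\lambda e^{-\lambda t}(\lambda t)^n/n!$, and compactness of $\Lambda$ provides a uniform bound $\lambda\leq \lambda_{\max}$, giving a summable majorant uniform in $\theta\in\Theta$. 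The Weierstrass $M$-test then yields uniform convergence of the series on $\Theta$, hence continuity of the sum in $\theta$.

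Next, since $\log$ is continuous on $(0,+\infty)$ and extends to $-\infty$ at $0$, the map $\theta\mapsto \log f_{(Y,\Delta),\theta}(t,\delta)$ is upper semi-continuous on $\Theta$ (it is continuous wherever the density is positive and tends to $-\infty$ where it vanishes). A finite sum of u.s.c.\ functions being u.s.c., the log-likelihood $\theta\mapsto \mathcal{L}_n((Y,\Delta),\theta)$ is u.s.c.\ on the compact $\Theta$, so it attains its supremum and the argmax set is non-empty for each realisation of the sample.

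Finally, to secure measurability of the selector, I would observe that for each fixed $\theta$ the map $(Y,\Delta)\mapsto \mathcal{L}_n((Y,\Delta),\theta)$ is measurable as a composition of measurable and continuous maps, so $(\omega,\theta)\mapsto \mathcal{L}_n(\omega,\theta)$ is Carathéodory-type on $\Omega\times\Theta$. A standard measurable maximum theorem (for instance Theorem 18.19 of Aliprantis--Border, or the selection argument of Lemma 2 in \cite{Whi82} based on Jennrich) then provides a measurable $\hat\theta_n$ with $\mathcal{L}_n((Y,\Delta),\hat\theta_n)=\sup_{\theta\in\Theta}\mathcal{L}_n((Y,\Delta),\theta)$ almost surely. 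The main technical obstacle is the first step, namely the uniform-in-$\theta$ convergence of the series representing the density; the possible vanishing of $f_{(Y,\Delta),\theta}$ at some $\theta$ is harmless because upper semi-continuity on a compact set already guarantees attainment of the maximum.
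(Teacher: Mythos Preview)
Your approach is essentially the paper's: both reduce the existence of a measurable maximiser to (i) compactness of $\Theta$ via \textbf{(H3)} and (ii) continuity of $\theta\mapsto f_{(Y,\Delta),\theta}(t,\delta)$, then invoke a measurable-maximum result (the paper cites \cite[Theorem 2.1]{Whi82} after checking White's A1--A2; you cite the equivalent Jennrich/Aliprantis--Border selection). Your Weierstrass $M$-test argument for the series and your u.s.c.\ treatment of the logarithm are fine and slightly more explicit than the paper on those points.

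One small gap: \textbf{(H4)} as actually stated in the appendix does \emph{not} directly assert continuity of $\alpha\mapsto c_{n,X}(\alpha)$. It only posits that the jump-size density $\alpha\mapsto f_1(u,\alpha)$ is continuous for each $u$ and admits an integrable majorant $\overline{f_1}$ on $[0,x]$. The paper then writes
\[
c_{n,X}(\alpha)=\int_{\mathbb R_+^n}\mathbb 1_{\{x_1+\dots+x_n<x\}}\prod_{i=1}^n f_1(x_i,\alpha)\,d\nu_1(x_1)\cdots d\nu_1(x_n)
\]
and applies dominated convergence (with dominator $\prod_i \overline{f_1}(x_i)$) to obtain continuity in $\alpha$. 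You should insert this short step rather than treating it as an axiom; once you do, your proof is complete and matches the paper's.
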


\paragraph{}
Although the previous theorem ensures the existence of a MLE, an assumption on its uniqueness needs to be verified. The author recall in  \cite{Whi82} that well specified models admit a unique minimum to it. So that assumption A3)b) is always verified in our configuration. In order to obtain consistency, a uniform bound on the family of the log of the parametrized density functions is settled. A way to ensure this property is to consider the jumping variables $X_1$ and $Z_1$ with respectively regular support with respect to the parameter $\alpha$ and $\beta$. We then assume:
\begin{itemize}
    \item \textbf{(H5).}  The support of $X_1$ (resp. $Z_1$) does not depend on $\alpha$ (resp. $\beta$).
\end{itemize}
This assumption is also a common assumption made with parametrized family of distributions. It encompasses a wide class of distributions. Only few parametric family such as uniform laws on $[0,a]$, $a \in \Theta_1 \subset \mathbb{R}_+^*$, have varying support. Now that all the tools required are defined, we obtain the following theorem.
\begin{theo}[Consistency] \label{thm:consistency}
Assume that $\textbf{(H3)-(H5)}$ are verified. 
Considering either Model I under $\textbf{(H1)}$ or $\textbf{(H2.i)}$ or Model II under $\textbf{(H2.i)}$, we have for almost every sequence $((Y_i,\Delta_i))_{i \geq 1}$,  $\hat{\theta}_n \underset{n \to + \infty}{\longrightarrow{}} \theta^0$. 
\end{theo}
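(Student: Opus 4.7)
The approach will be to invoke White's (1982) consistency theorem for well-specified MLEs by checking the five conditions it requires: measurable existence of $\hat\theta_n$, identifiability of $\theta^0$ as the unique maximiser of $\theta\mapsto \mathbb{E}_{\theta^0}[\log f_{(Y,\Delta),\theta}(Y,\Delta)]$, compactness of $\Theta$, joint continuity of $\theta\mapsto f_{(Y,\Delta),\theta}$, and the existence of a $\mathbb{P}^{\theta^0}$-integrable envelope $b(y,\delta)\ge\sup_\theta|\log f_{(Y,\Delta),\theta}(y,\delta)|$. Four of these are essentially already in place: existence is Theorem~\ref{thm:existence_MLE}; the well-specified identifiability condition reduces to \eqref{eq:indent_prop}, which is exactly Theorem~\ref{theo:identifiability} in each of the three regimes; compactness is (H3); continuity in $\lambda$ is immediate from the Poisson series, while continuity in $(\alpha,\beta)$ follows from (H4) together with dominated convergence applied to the series of Lemma~\ref{lem:dens_fct_censor}, whose terms are controlled uniformly by $\lambda_{\max}e^{-\lambda_{\min}t}(\lambda_{\max}t)^n/n!$.

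The substantive step is constructing the envelope. An upper bound is immediate: since $c_{n,X}c_{n,Z}\le 1$, Lemma~\ref{lem:dens_fct_censor} gives $f_{(Y,\Delta),\theta}(t,\delta)\le\lambda\le\lambda_{\max}$ and hence $\log f_{(Y,\Delta),\theta}(t,\delta)\le\log\lambda_{\max}$. The lower bound is where (H5) is essential. Because the support of $X_1$ does not depend on $\alpha$, the set $I_X=\{n\in\mathbb{N}:c_{n,X}(\alpha)-c_{n+1,X}(\alpha)>0\}$ is independent of $\alpha\in\Theta_1$, and analogously $I_Z$ is independent of $\beta$; compactness of $\Theta_1,\Theta_2$ combined with the continuity from (H4) then yields $\inf_\alpha(c_{n,X}(\alpha)-c_{n+1,X}(\alpha))>0$ for every $n\in I_X$, and similarly on $I_Z$. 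Retaining the smallest admissible pair of indices in the series for $f_{(Y,\Delta),\theta}(t,\delta)$ (the selection depending on $\delta\in\{0,1,2\}$) produces a bound of the form
\begin{equation*}
f_{(Y,\Delta),\theta}(t,\delta)\ \ge\ \kappa\,\lambda_{\min}\,e^{-\lambda_{\max}t}\,\frac{(\lambda_{\min}t)^{n_0}}{n_0!},
\end{equation*}
with $\kappa>0$ and $n_0\in\mathbb{N}$ depending only on $\Theta$ and $\delta$ (under (H1) the series is actually a finite sum, which only simplifies the argument). Taking logarithms gives $\sup_\theta|\log f_{(Y,\Delta),\theta}(t,\delta)|\le a_0+a_1 t+a_2|\log t|$, an envelope that is $\mathbb{P}^{\theta^0}$-integrable because Lemma~\ref{lem:hazard} together with Theorem~\ref{theo:behaviour_c_(n+1)/c_n} yields $\lim_{t\to+\infty}h_Y(t)=\lambda^0(1-a)$ with $a<1$, so $Y$ has exponential tails, while $f_Y$ is bounded by $\lambda^0$ near the origin.

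With the envelope in hand, White's uniform law of large numbers gives
\begin{equation*}
\sup_{\theta\in\Theta}\bigl|\mathcal{L}_n((Y,\Delta),\theta)-\mathbb{E}_{\theta^0}[\log f_{(Y,\Delta),\theta}(Y,\Delta)]\bigr|\xrightarrow[n\to+\infty]{\mathrm{a.s.}}0,
\end{equation*}
and the limit, as a Kullback--Leibler criterion for a well-specified family, is uniquely maximised at $\theta^0$ by the identifiability step. A standard Argmax/continuous mapping argument then delivers $\hat\theta_n\to\theta^0$ almost surely. The main obstacle I foresee is precisely this uniform lower bound on the density: showing $\inf_\alpha(c_{n_0,X}(\alpha)-c_{n_0+1,X}(\alpha))>0$ requires (H5) to freeze the support of $X_1$, since otherwise an $\alpha_k$ approaching the boundary of $\Theta_1$ could move the support past the threshold $x$ and make these coefficients collapse, destroying the envelope; coordinating this uniformly across the three regimes (H1), (H2.i), (H2.ii) — which differ in whether the $c_{n,X}$ are eventually zero, vanish asymptotically, or converge to a strictly positive limit — is where most of the care in the proof will go.
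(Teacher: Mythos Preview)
Your proposal is correct and follows essentially the same route as the paper: verify White's conditions A1--A3, with the substantive work being the envelope for $|\log f_{(Y,\Delta),\theta}|$ obtained from a single-term lower bound whose coefficient is shown positive via (H5), continuity and compactness --- this is precisely the paper's argument through Lemma~\ref{lem:tech_behavior_c_n}. One minor slip: Theorem~\ref{theo:identifiability} only delivers identifiability of $\lambda$ together with the property $I_n$, not the full identifiability \eqref{eq:indent_prop}; the latter is taken as a standing assumption at the beginning of Section~\ref{sect:prop_estim}, so you should cite that assumption rather than the theorem.
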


The asymptotic normality requires more regularity on the family of parametrized density functions. In particular, it needs to be of class $\mathcal{C}^2$ on the space of parameter in a such way that derivatives of first and second orders admit proper uniform dominating integrable function. Assumptions are required to ensure the functions $\alpha \mapsto c_{n,X} (\alpha)$ and $\beta \mapsto c_{n,Z} (\beta ) $ to be of class $\mathcal{C}^2$:
\begin{itemize}
    \item {\bf (H6).} See details in Appendix, Assumption \ref{ass:normality}. 
\end{itemize}
A major difficulty is to obtain functions that uniformly dominate the respective derivatives of first and second order. The compactness of the space of parameters {\bf (H3)} and the regularity of the support of the jumping variables {\bf (H5)} constitute the main theoretical arguments that allow to justify the existence of such functions.

To study the asymptotic normality, when $\textbf{(H6)}$ is verified, we define the matrices
\begin{align} \nonumber
    A_n(\theta) = & \left( \frac{1}{n} \sum_{i=1}^n \partial^2 \log f_{(Y,\Delta),\theta}((Y_i,\Delta_i)) / (\partial \theta_j \partial \theta_k) \right)_{1\leq j,k \leq n} \\ \nonumber
    B_n(\theta) = & \left( \frac{1}{n} \sum_{i=1}^n \partial \log f_{(Y,\Delta),\theta}((Y_i,\Delta_i))/ \partial \theta_j \; . \; \partial \log f_{(Y,\Delta),\theta}((Y_i,\Delta_i))/ \partial \theta_k  \right)_{1\leq j,k \leq n} \\  \label{eq:matrix_A_theta}
    A(\theta) = & \left( \mathbb{E} [ \partial^2 \log f_{(Y,\Delta),\theta}((Y,\Delta))/ (\partial \theta_j \partial \theta_k) ] \right)_{1\leq j,k \leq n} \\ \label{eq:matrix_B_theta}
    B(\theta) = & \left( \mathbb{E} [ \partial \log f_{(Y,\Delta),\theta}((Y,\Delta))/ \partial \theta_j \; . \; \partial \log  f_{(Y,\Delta),\theta}((Y,\Delta))/ \partial \theta_k ] \right)_{1\leq j,k \leq n} \ . 
\end{align}
In our framework $A(\theta)=-B(\theta)$ (see Lemma \ref{lem:A_equal_-B}). When the appropriate inverses exist, define
$$
    C_n(\theta)  =  A_n(\theta)^{-1} B_n(\theta) A_n(\theta)^{-1}, \qquad 
    C(\theta)  = - A(\theta)^{-1}.
$$

\begin{theo}[Asymptotic normality]     \label{theo:assymptotic_normaility}
    Assume that $\textbf{(H3)-(H6)}$ are verified and that $A(\theta^0)$ is non-singular. 
    Considering either Model I under $\textbf{(H1)}$ or $\textbf{(H2.i)}$ or Model II under $\textbf{(H2.ii)}$, we have the following asymptotic normality: in law 
    \begin{eqnarray*}
        \lim_{n\to +\infty} \sqrt{n} ( \hat{\theta}_n - \theta^0) \overset{d}{=} \mathcal{N} ( 0, C(\theta^0))
    \end{eqnarray*}
    where $\mathcal{N} ( 0, C(\theta^0))$ denotes a normal distribution of expected value 0 and covariance matrix $C(\theta^0)$. 
    Moreover, $C_n(\hat{\theta}_n)$ converges a.s. to $C(\theta^0)$, element by element.
\end{theo}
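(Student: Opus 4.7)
The proof is an application of Theorem 3.2 of \cite{Whi82} on the asymptotic normality of the MLE in a well-specified parametric model. Consistency has already been established in Theorem \ref{thm:consistency}, so what remains is (i) to verify the regularity conditions (called A5 and A6 in \cite{Whi82}) that upgrade consistency to asymptotic normality, and (ii) to identify the asymptotic covariance as $C(\theta^0)$ using the information identity.

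First I would verify that $\theta \mapsto \log f_{(Y,\Delta),\theta}$ is of class $\mathcal{C}^2$ on a neighbourhood of $\theta^0$. Under \textbf{(H6)} the maps $\alpha \mapsto c_{n,X}(\alpha)$ and $\beta \mapsto c_{n,Z}(\beta)$ are $\mathcal{C}^2$ with derivatives admitting uniform-in-$n$ bounds on the compact sets $\Theta_1, \Theta_2$. Combined with the Poissonian weights $\lambda e^{-\lambda t}(\lambda t)^n/n!$ in the closed-form expression of Lemma \ref{lem:dens_fct_censor}, dominated convergence then justifies termwise differentiation up to order two. The more delicate point, which I expect to be the main obstacle, is the construction of integrable envelopes for the first and second partial derivatives of $\log f_{(Y,\Delta),\theta}$: the logarithm turns the series into a ratio whose denominator must be bounded away from zero uniformly in $\theta$, and the numerator derivatives involve series whose tails must be controlled uniformly. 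Here the regular-support assumption \textbf{(H5)} ensures that $f_{(Y,\Delta),\theta}$ is bounded below by a strictly positive quantity on the common support, while the compactness \textbf{(H3)} provides uniform polynomial bounds on the numerators; integrability against the true density then follows from the exponential decay of the Poisson weights.

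Once the envelopes are in place, the standard consequences follow. Differentiating $\int f_{(Y,\Delta),\theta} \, d\mu = 1$ twice under the integral sign yields $\mathbb{E}_{\theta^0}[\nabla_\theta \log f_{(Y,\Delta),\theta^0}(Y,\Delta)] = 0$ and the information identity $A(\theta^0) = -B(\theta^0)$ (Lemma \ref{lem:A_equal_-B}). The classical multivariate CLT applied to the i.i.d.\ scores gives
\begin{equation*}
\sqrt{n}\,\frac{1}{n}\sum_{i=1}^n \nabla_\theta \log f_{(Y,\Delta),\theta^0}(Y_i,\Delta_i) \xrightarrow{d} \mathcal{N}(0, B(\theta^0)),
\end{equation*}
while a uniform law of large numbers of Jennrich type, justified by the Hessian envelope, yields $\sup_{\theta \in \Theta} \|A_n(\theta) - A(\theta)\| \to 0$ almost surely. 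Because $\theta^0$ lies in the interior of $\Theta$ by \textbf{(H3)} and $\hat{\theta}_n \to \theta^0$ a.s., for $n$ large enough $\hat{\theta}_n$ is an interior critical point of $\mathcal{L}_n$, and a componentwise mean-value expansion of the score gives
\begin{equation*}
0 = \frac{1}{n}\sum_{i=1}^n \nabla_\theta \log f_{(Y,\Delta),\theta^0}(Y_i,\Delta_i) + A_n(\tilde{\theta}_n)(\hat{\theta}_n - \theta^0),
\end{equation*}
with $\tilde{\theta}_n$ on the segment between $\hat{\theta}_n$ and $\theta^0$.

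Since $A(\theta^0)$ is non-singular by assumption and $A_n(\tilde{\theta}_n) \to A(\theta^0)$ a.s.\ by continuity of $A$ and the uniform convergence above, $A_n(\tilde{\theta}_n)$ is invertible for $n$ large. Solving for $\sqrt{n}(\hat{\theta}_n - \theta^0)$ and applying Slutsky's lemma yields $\sqrt{n}(\hat{\theta}_n - \theta^0) \xrightarrow{d} \mathcal{N}(0, A(\theta^0)^{-1} B(\theta^0) A(\theta^0)^{-1})$, and the information identity collapses the asymptotic covariance to $-A(\theta^0)^{-1} = C(\theta^0)$. Finally, the a.s.\ convergence $C_n(\hat{\theta}_n) \to C(\theta^0)$ component-wise is a direct consequence of the uniform convergences of $A_n$ and $B_n$ on $\Theta$, the consistency of $\hat{\theta}_n$, and the continuity of matrix inversion at the non-singular point $A(\theta^0)$.
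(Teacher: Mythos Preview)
Your strategy is the same as the paper's: verify the remaining regularity conditions of \cite{Whi82} (A4--A6 there) and invoke his Theorem~3.2, together with the information identity $A(\theta^0)=-B(\theta^0)$ of Lemma~\ref{lem:A_equal_-B}. So the overall architecture is correct.

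Where you diverge from the paper is in the treatment of the envelope conditions (A5), which you correctly flag as the crux but then handle too loosely. Your assertion that under {\bf (H6)} the maps $\alpha\mapsto c_{n,X}(\alpha)$ have ``derivatives admitting uniform-in-$n$ bounds'' is not true as stated: differentiating under the integral sign gives $|\partial_{\alpha_i}c_{n,X}(\alpha)|\le n\,c_{n-1,X}(\alpha)\,\bar g_{1,i}$, so the bound grows linearly in $n$. Likewise, saying that {\bf (H5)} yields a uniform strictly positive lower bound on $f_{(Y,\Delta),\theta}$ skips the actual mechanism. The paper obtains the envelopes by an explicit decomposition of the series in the log-derivatives and by invoking an auxiliary ratio estimate (Lemma~\ref{lem::unif_domi}): there exist $C_1>1$ and $N_1$ (independent of $\alpha$, thanks to {\bf (H5)}) with $C_1 c_{n+1,X}(\alpha)\le c_{n,X}(\alpha)$ for all $n\ge N_1$, and similarly for the $Z$-coefficients. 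This geometric control absorbs the factor $n$ in $\partial_{\alpha_i}c_{n,X}$ and reduces each $\partial_{\theta_k}\log f_{(Y,\Delta),\theta}(t,\delta)$ to a ratio bounded by an explicit polynomial in $t$, uniformly over $\theta\in\Theta$; integrability against the true law then follows. The lower bound on the denominator also comes from {\bf (H5)} but indirectly, via Lemmas~\ref{lem:equality_probab_term} and~\ref{lem:tech_behavior_c_n}, which produce an index $N$ with $\widetilde m_{N,X}\,m_{N,Z}>0$. Once these polynomial envelopes are in place, A6 is immediate from {\bf (H3)}, the assumed non-singularity of $A(\theta^0)$, and Lemma~\ref{lem:A_equal_-B}, and the paper simply cites \cite[Theorem~3.2]{Whi82} rather than re-deriving the CLT/Taylor/Slutsky chain you wrote out.
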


\section{Simulation and real data analysis} 
\label{sect:simulation}
\subsection{A small simulation study}\label{ssect:simulation}
In this subsection, we illustrate the performance of the proposed MLE estimator $(\widehat{\lambda}_n, \widehat{\alpha}_n,\widehat{\beta}_n)$ and on the estimated distribution functions of the marginal up-crossing variables T and C. We denote $\widehat{F}_{T,\lambda,\alpha,n}$ and $\widehat{F}_{C,\lambda,\beta,n}$ those functions. We consider $N=100$ samples of sizes $n = 50,100,200$ under the following models:
\begin{enumerate}
    \item[a-] $\lambda = 1.42$, $X_1 \sim \mathcal{B}(\alpha)$, with $\alpha = 0.36$ and $Z_1 = 1$ a.s., with $(x,z) = (7,17)$,
    \item[b-] $\lambda = 1.42$, $X_1 \sim \mathcal{E}(\alpha)$, with $\alpha = 0.71$ and $Z_1 \sim \mathcal{E}(\beta)$, with $\alpha = 2.04$, with $(x,z) = (14,7)$,
    \item[c-] $\lambda = 1.42$, $X_1 \sim \mathcal{B}(\alpha)$, with $\alpha = 0.36$ and $Z_1 \sim \mathcal{P}(\beta)$, with $\beta = 1.23$, with $(x,z) = (7,19)$,
\end{enumerate}
where $\mathcal{B}$, $\mathcal{E}$ and $\mathcal{P}$ respectively denote the Bernoulli, the Exponential and the Poisson distributions. The choice of those three model is made in order to illustrate all the identifiable models proposed in Section \ref{sect::identifia}. Examples a, b and c respectively belong to (H1), (H2)i) and (H2)ii). In order to obtain a qualitative measure of the efficiency of the estimator, we drew the mean squared error denoted as $nx$ and the $bias$ function of the estimated marginal distributions with respect to the true distribution functions. They are defined as follows. We denote $(\widehat{F}_{T,\lambda,\alpha,n}^{(k)})_{k \in \{1, \ldots , N \}}$, $(\widehat{F}_{C,\lambda,\beta,n}^{(k)})_{k \in \{1, \ldots , N \}}$ the marginal estimated distribution functions and $F_T$, $F_C$ the true marginal distribution functions. The $nx$ and $bias$ functions are defined as follows:
\begin{eqnarray*}
    nx_T (t) = \frac{1}{N} \sum_{k=1}^N (\widehat{F}_{T,\lambda,\alpha,n}^{(k)} (t)- F_T (t))^2 \quad \quad\text{and} \quad \quad bias_T (t) = \frac{1}{N} \sum_{k=1}^N | \widehat{F}_{T,\lambda,\alpha,n}^{(k)} (t) - F_T (t) | \quad ; 
\end{eqnarray*}
$nx_C$ and $bias_C$ are defined the same way. \\
Boxplots from figures 1-3 illustrate the consistency of the estimator. Contrarily to case a, the cases b and c require to estimate three parameters instead of two, inducing a larger variance of the estimation. 
\begin{figure}[H]
\center
    \includegraphics[scale = 0.21]{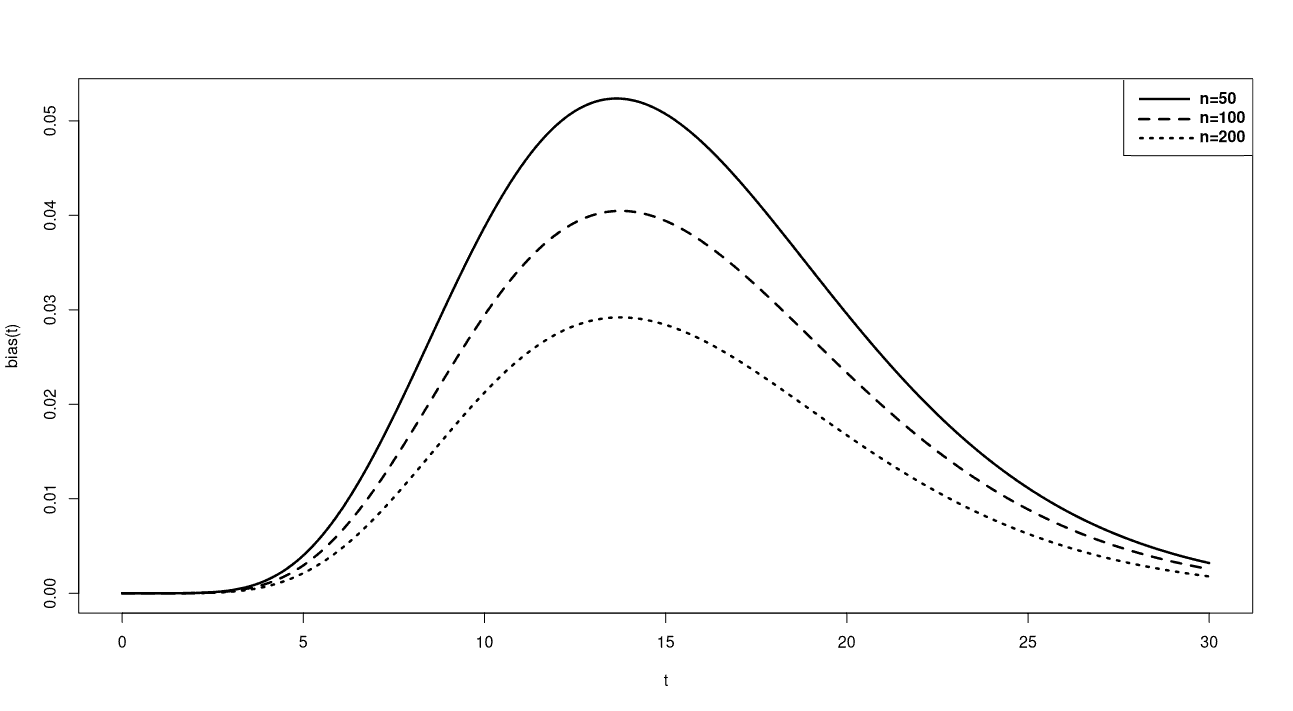}
    \includegraphics[scale = 0.21]{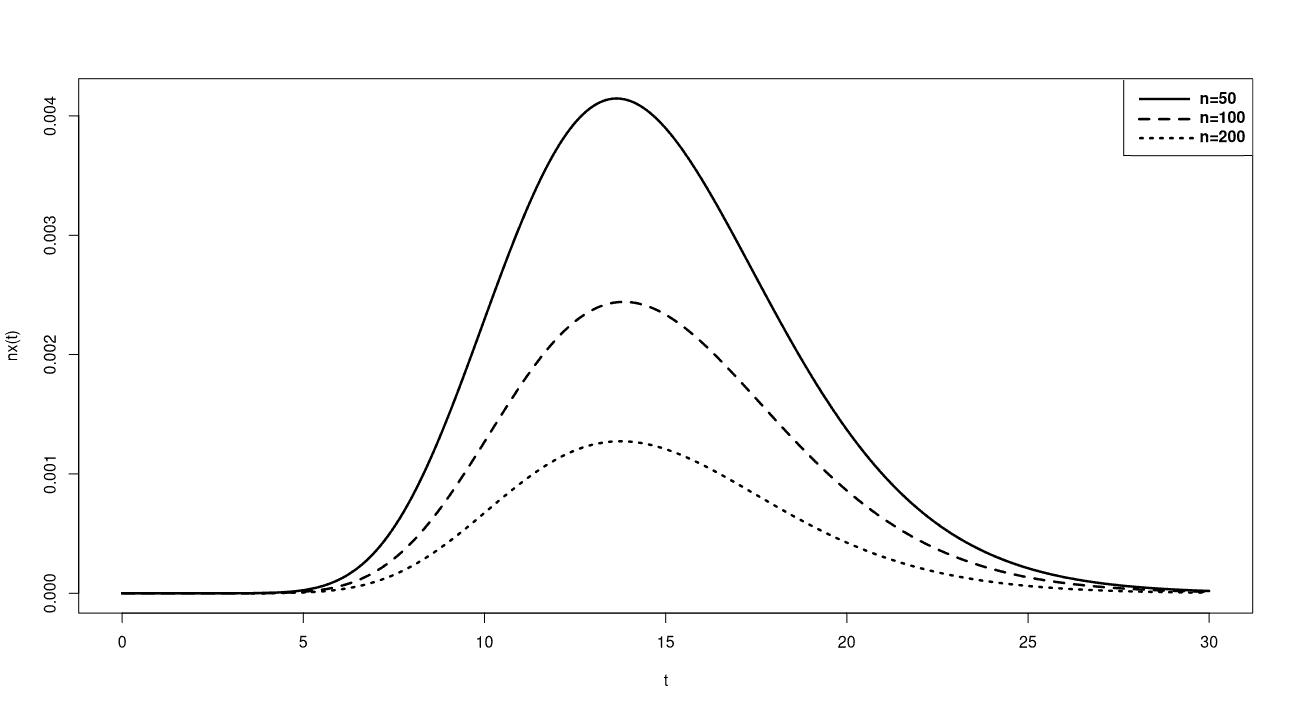}\\
    \includegraphics[scale = 0.3]{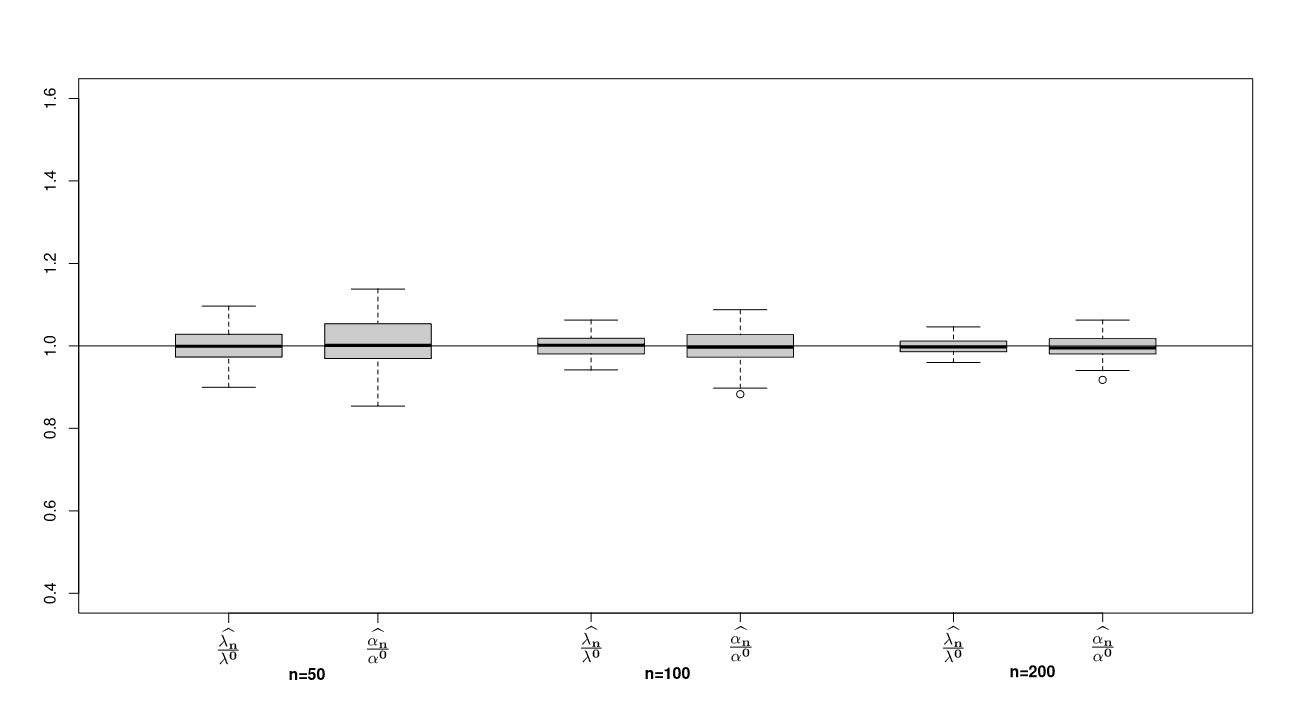}
    \caption{Case a - Bias function $bias_T(t)$ (left), mean squared error function $nx_T(t)$ (right) and boxplot of N= 100 realisations of the estimator divided by the true value,  for n=50, 100, 200.} 
\end{figure}

\begin{figure}[H]
\center
    \includegraphics[scale = 0.21]{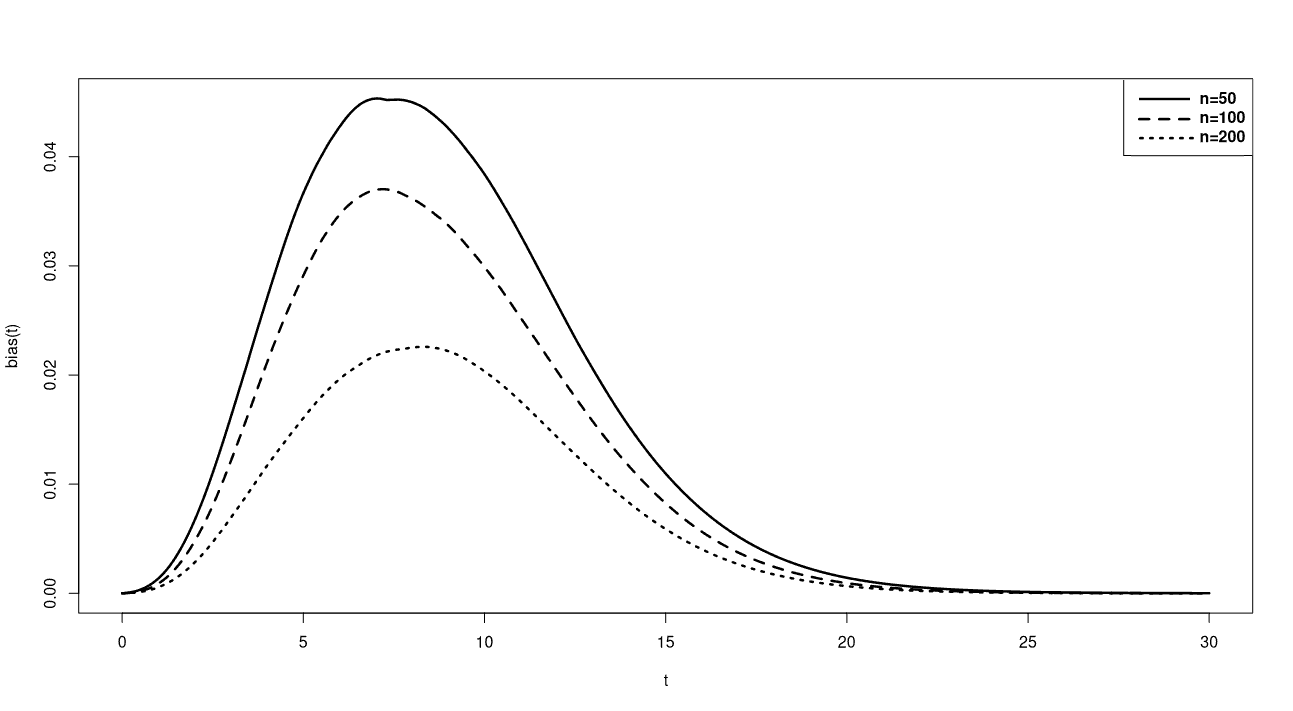}
    \includegraphics[scale = 0.21]{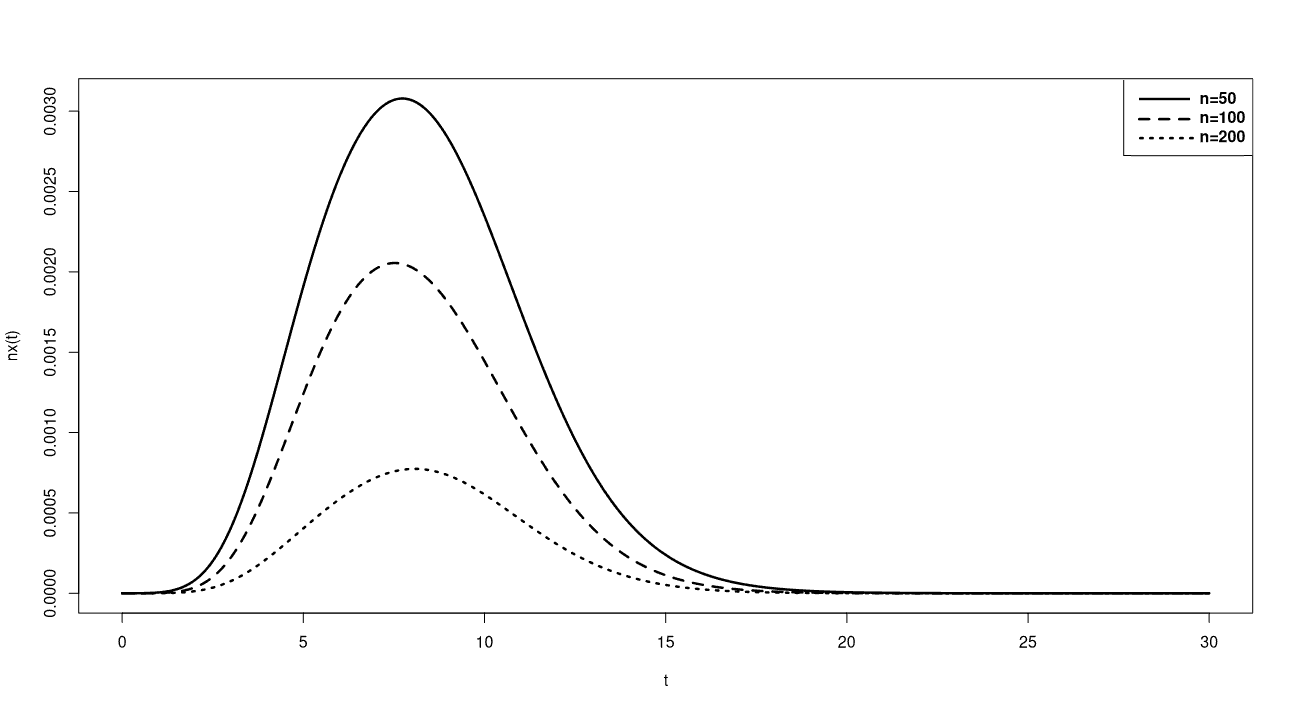} \\
    \includegraphics[scale = 0.3]{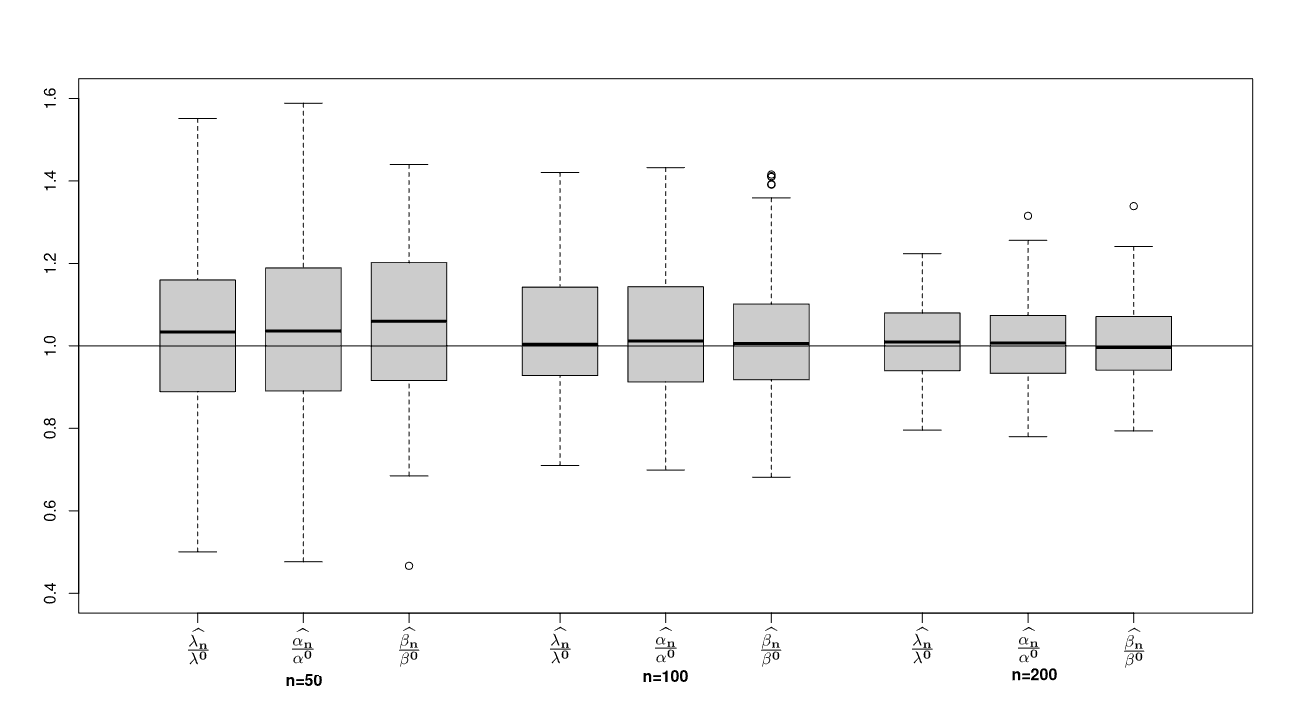}
    \caption{Case b - Bias function $bias_T(t)$ (left), mean squared error function $nx_T(t)$ (right) and boxplot of N= 100 realisations of the estimator divided by the true value, for n=50, 100, 200.}
\end{figure}

\begin{figure}[H]
\center
    \includegraphics[scale = 0.21]{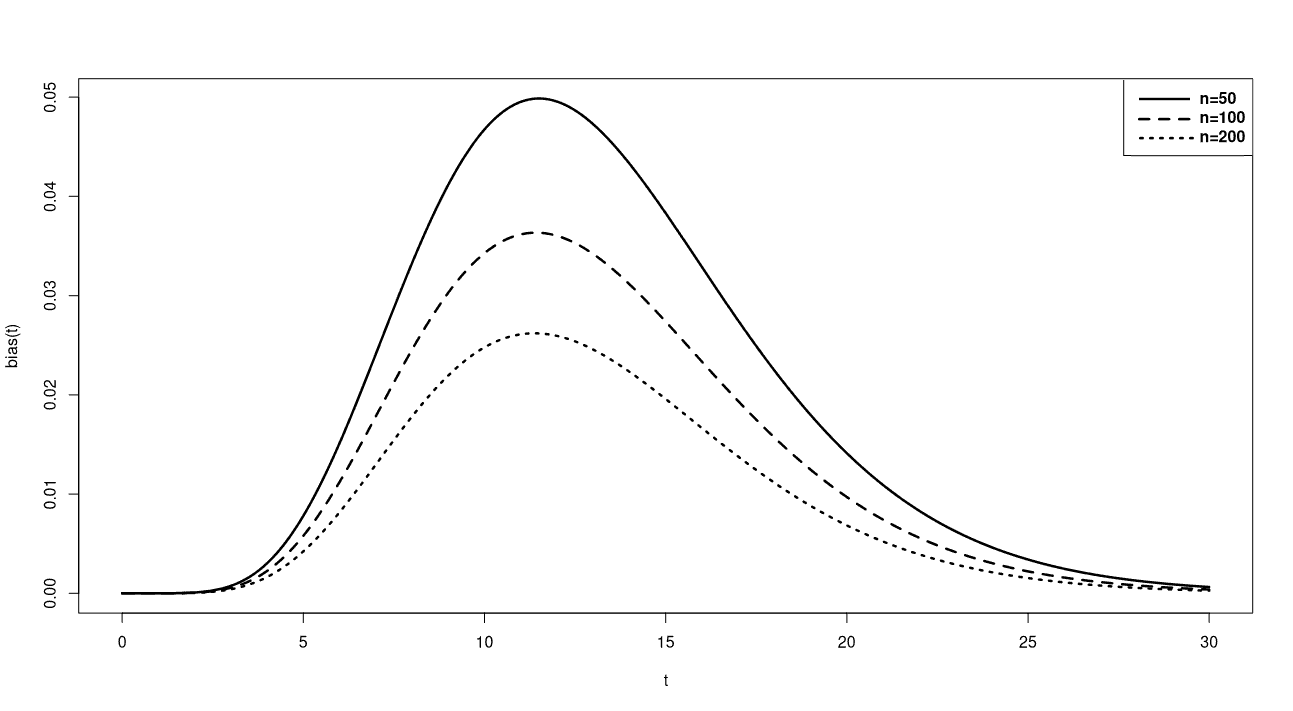}
    \includegraphics[scale = 0.21]{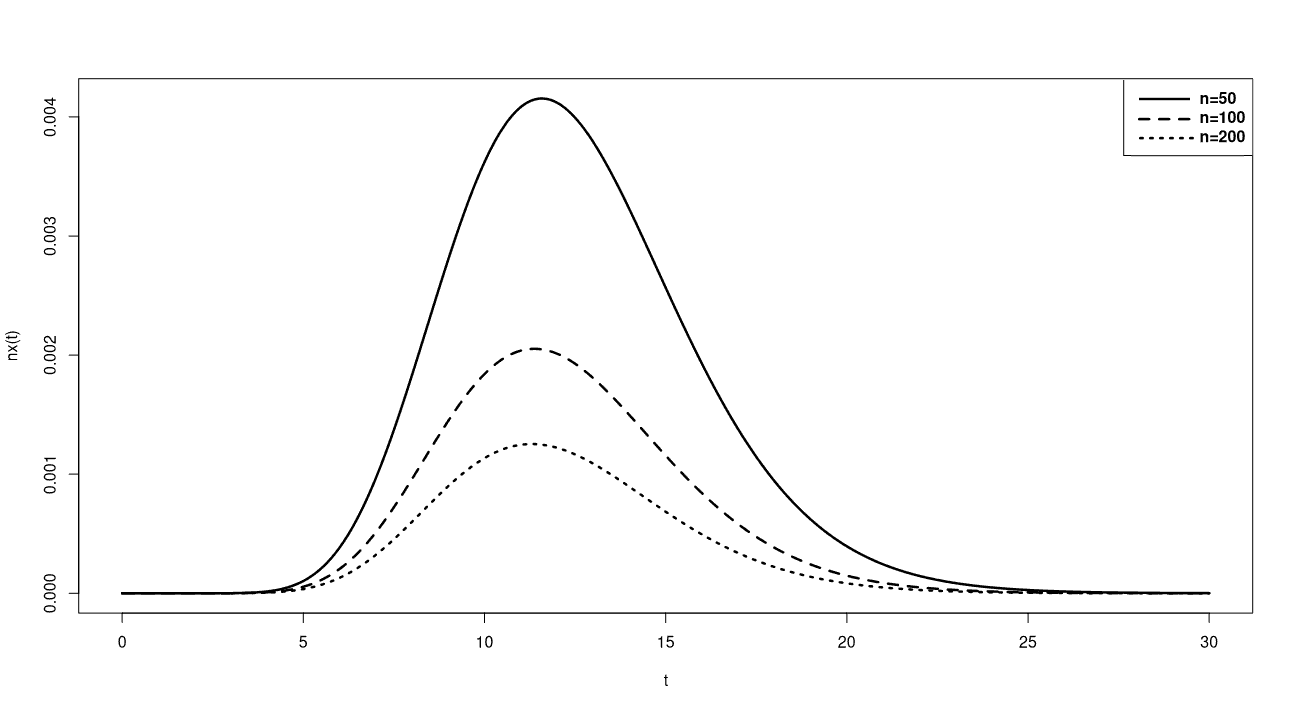} \\
    \includegraphics[scale = 0.3]{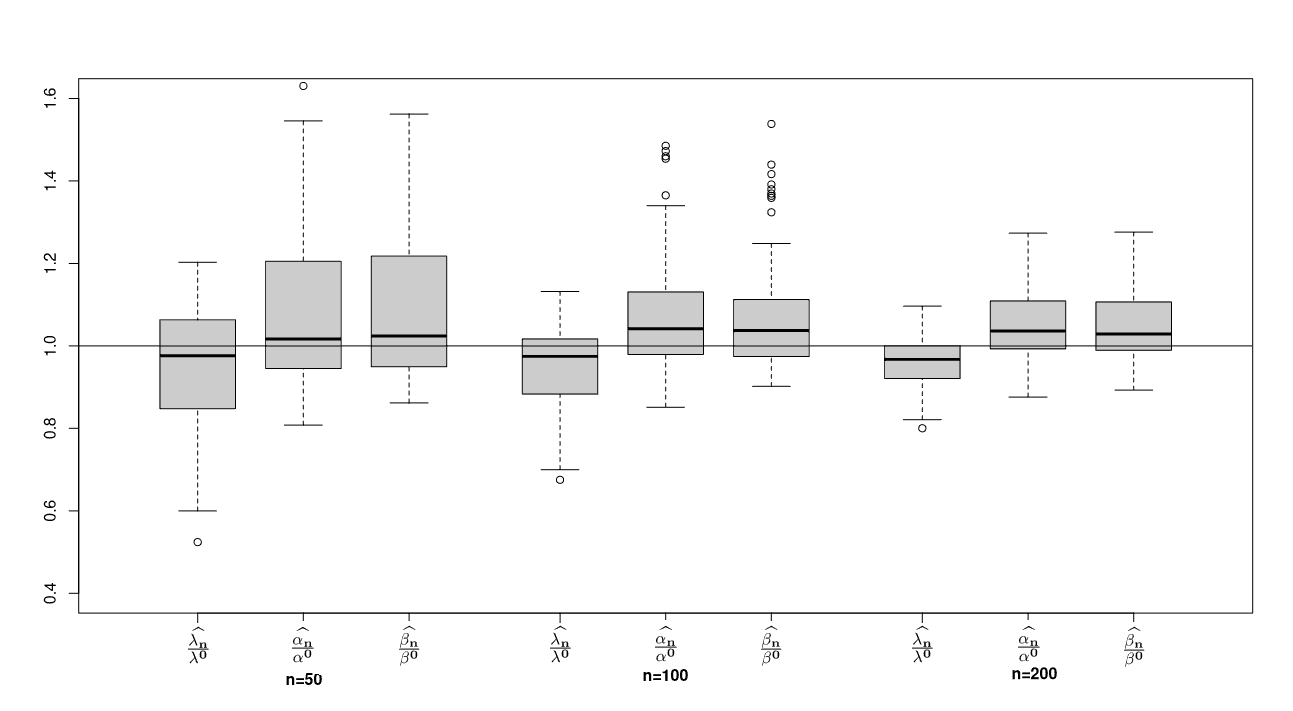}
    \caption{Case c - Bias function $bias_T(t)$ (left), mean squared error function $nx_T(t)$ (right) and boxplot of N= 100 realisations of the estimator  divided by the true value, for n=50, 100, 200.}
\end{figure}

\paragraph{Remark}
Although those examples perfectly show the consistency of the estimator, some remarks can be done. Firstly, the density of the model is composed of an infinite sum of terms. In practice, implementing the estimation requires to consider an approximation of the density function with only a finite number of terms in the sum. If we consider too less coefficients, the approximated density could be not enough close to the real density function, inducing a bad estimation, and if otherwise, we consider too much coefficients, it drastically extend the time of computation of the estimation. In those examples, we considered the first 20 terms of the sum. Secondly, each family of parameters induces a different variance, which is clearly illustrated in the last example, where the variance of the second case is visually higher than the first one.

\newpage

\subsection{Real data application: case of amanita poisoning}


\paragraph{Scientific context:} Poison control centers regularly respond to mushroom poisonings such as amanita. Toxins present in these mushrooms are rare but can significantly cause of acute liver failure. As an effective antidote, N-acetylcysteine is administered when a patient arrives at the hospital, preventing many cases of liver failure. Since it is rare to know the exact amount of amanita ingested, it becomes difficult to anticipate the level of risk. When the liver processes a toxin, it produces an enzyme called alanine aminotransferase (ALAT). Known thresholds for this enzyme are used to measure patient risk. Another commonly used prognostic indicator is Factor V (FV), expressed as a percentage. A healthy liver has a Factor V of $100\% $, while less than $30\%$ requires transplantation. After amanita ingestion, ALAT and FV levels show monotonic behavior until they reach their respective peaks. However, patient records of ALAT and FV levels are discontinuous and spaced at random times. When one of the measurements reaches critical levels, decisions must be made to operate the patient. Therefore, it is common to have censored times during follow-up.\\ 
To describe the evolution of ALAT and FV levels after ingestion, we consider two monotonic random processes. The measurements of these quantities are performed simultaneously. The times at which patients are tested are random, and we can assume that the inter-periods follow independent random variables with a common exponential distribution. These properties lead us to believe that our model is a suitable candidate to describe the dynamics of ALAT and FV.\\ 

\noindent
\paragraph{Mathematical description:} For each individual, we consider the processes $L_1$ and $L_2$ which describe the levels of ALAT and FV respectively. Since everyone has a different range of ALAT values, we assume $L_1$ represents rescaled levels between 0 and 1, so it maximizes near 1. Based on the patient protocol, a recorded time is of interest if it returns the moment a patient reaches his maximum ALAT or when his FV drops below 30 $\%$. Understanding the distribution of these moments therefore provides important information for patient monitoring and care. Because treatment effects typically occur a few days after administration, it is common to have censored data with sometimes equal recording times for ALAT and FV events.\\ 

In this study, we collected $N=81$ patient data with a complete record of time to ALAT maximum and FV drop. The latter are denoted $(T_i)_{1\leq i\leq N}$ and $(C_i)_{1\leq i\leq N}$, respectively, and are used to construct the censored dataset $\{(Y_i,\Delta_i), i \in \{1,\dot,N\}$ based on the \textbf{Model I}. The process are formally described by
\begin{eqnarray*}
    L_{t,1} = \sum_{n=1}^{N_t} X_n\quad\text{and}\quad L_{t,2} = \sum_{n=1}^{N_t} Z_n, \qquad \forall t \geq 0.
\end{eqnarray*} 
where we assume that the jump sizes follows exponential distributions with parameters $\alpha$ and $\beta$ respectively. According to the aforementioned protocol, we consider the thresholds $x=0.95$ and $z= 0.65$. It is worth-mentioning that in this configuration, the exponential parameters and the thresholds share some proportional stability, in the sense that $\alpha x$ and $\beta z$ are constants. This property follows from the stability of exponential distributions to scalar multiplications and was observed in our analysis.\\

Table \ref{tabl:: results} shows the numerical results of the estimators for different subsamples of the dataset. The abbreviations N, M, F, $A_<$, $A_>$, $CVR_-$ and $CVR_+$ respectively denote no-categorization, male, female, patients below 59 years old, patients above 60 years old, patients with and without history of cerebrovascular risk. The probabilities of censoring are estimated throughout the empirical proportions $\mathbb{P}_{\text{Emp}}[ T \leq C]$ or based on our model with $\mathbb{P}_{\text{Est}}[ T \leq C]$. The curves of the distribution functions obtained from our model are compared with their empirical counterparts in Figure \ref{fig:amanita_1}. Similar plots are shown in Figures \ref{fig:amanita_2} and \ref{fig:amanita_3}, where The population is divided into patients over and under the age of 59. In Figure \ref{fig:amanita_4}, we illustrate the dependence structure of the couple $(T,C)$ with an estimate of the bi-variate distribution function based on our model.

\begin{table}[ht]
\centering
\begin{tabular}{|c|c|l|r|r|r|r|}
\hline  
Threshold values $(x,z)$ &  Category  & $\widehat{\lambda}_n$ & $\widehat{\alpha}_n$ & $\widehat{\beta}_n$ & $\mathbb{P}_{\text{Emp}}[ T \leq C]$ & $\mathbb{P}_{\text{Est}}[ T \leq C]$ \\  
\hline 
$(9.5,6.5)$  & N & $7.889$ & $2.5673$ & $3.3602 $ & 0.3950617 & 0.381685\\ 
\hline 
$(4.75,3.9)$  & N & $7.889$ & $5.1346$ & $5.6003 $ & 0.3950617 & 0.381685\\ 
\hline
$(0.95,0.65)$ & N & $7.889$ & $25.673$ & $33.602 $ & 0.3950617 & 0.381685\\  
\hline
$(0.95,0.65)$ & M & $7.889$ & $25.673$ & $33.602 $ & 0.3684211 & 0.381685\\  
\hline 
$(0.95,0.65)$ & F & $7.889$ & $25.673$ & $33.602 $ &  0.4186047 & 0.381685\\  
\hline 
$(0.95,0.65)$ & $A_<$ & $7.889$ & $25.673$ & $33.602 $ & 0.5 & 0.381685 \\  
\hline 
$(0.95,0.65)$ & $A_>$ & $7.889$ & $25.673$ & $33.602 $ & 0.3023256 & 0.381685\\  
\hline 
$(0.95,0.65)$ & $CVR_-$ & $7.889$ & $25.673$ & $33.602 $ & 0.4 & 0.381685 \\  
\hline 
$(0.95,0.65)$ & $CVR_+$ & $7.889$ & $25.673$ & $33.602 $ & 0.3846154 & 0.381685 \\  
\hline 
\end{tabular}
\caption{Parameter estimation result in accordance with threshold values and population categorization}
\label{tabl:: results}
\end{table}

The first three lines of the table underline the proportional stability of the exponential law mentioned above. Furthermore, it is worth noting that the estimated parameter $\widehat{\lambda}_n$ remains unchanged when the thresholds are changed, since it only affects the distribution of the jump sizes of both processes. Such a phenomenon is coherent since the expected number of observations needed to observe ALAT or FV peaking remains unchanged throughout the records. The last four rows also illustrate the accuracy of the estimator whether the group is categorized or not, especially with subcategories containing only 26 patients. Although our model do not take on consideration the value of the peaks for each patient, those categorization show that parameters such as sex, age, history of cerebrovascular risk have no impact on the moment of reaching the peak of ALAT and FV. In particular, our estimator still provides a good estimate even with reduced dataset sizes. This is also observed with comparison between the empirical and modeled curves in Figures \ref{fig:amanita_1}, \ref{fig:amanita_2} and \ref{fig:amanita_3} since we observe few discrepancies between the results. The curves from our model seem to correctly intercept all event times, even though it is based on censored data, and particularly for the largest observations. Lastly, Figure \ref{fig:amanita_4} and \ref{fig:amanita_5} illustrate the dependence structure of the couple $(T,C)$. The property of non-absolute continuity of the pair can be observed in the left plot with abrupt changes in the curve trajectories in the set $\{(u,v)\in\mathbb{R}^2_+,\, u=v \}$. This was expected from the theoretical model construction and analysis, and reflects the possibility of equal recording times among the data.

\subsection*{Acknowledgement}
The authors thank the members of the Poisoning Center and the Toxicological Surveillance Unit of Angers (CHU/Univ), in particular Dr. Bruneau, Dr. Lecot, Dr. Cellier, Prof. Descatha and Dr. Le Roux, for providing data and helping with the real application part.

\begin{figure}[H]
\centering
    \includegraphics[scale = 0.46]{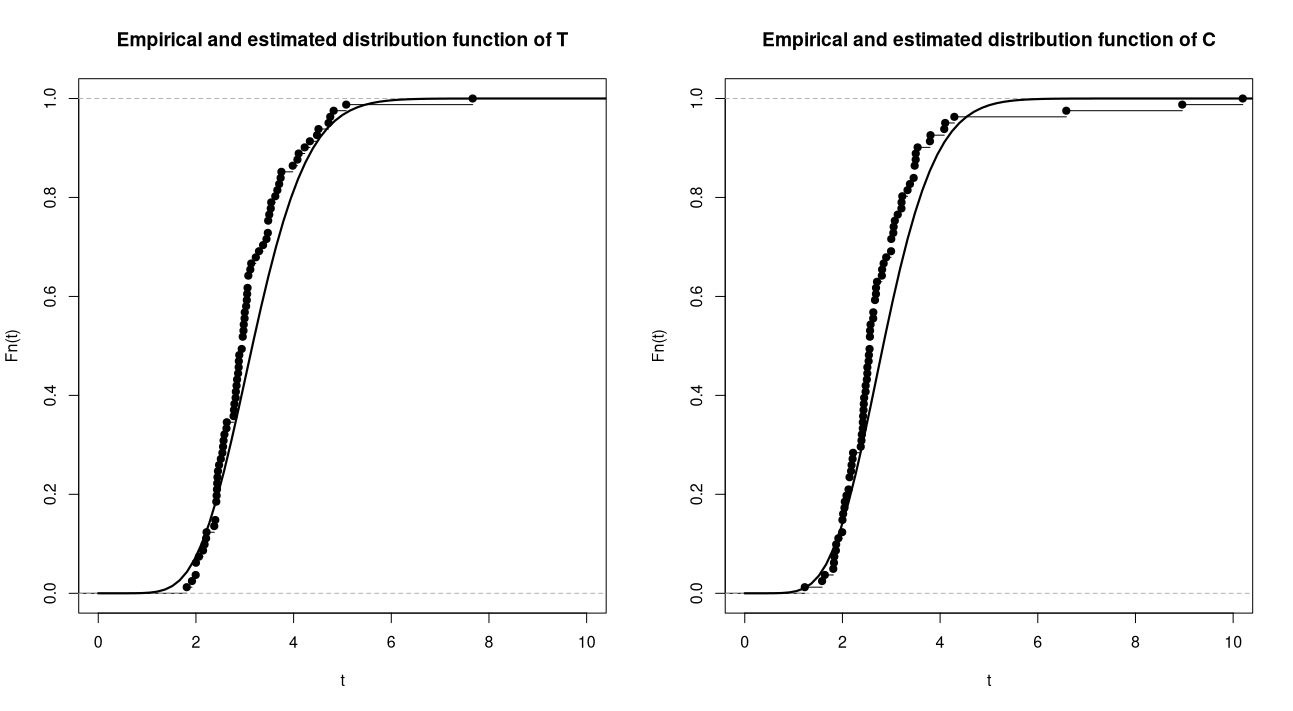}
    \caption{Plot of the estimated (line) and empirical (simple function) cumulative distribution functions of T (left) and C (right). }
    \label{fig:amanita_1}
\end{figure}

\begin{figure}[H]
\center
    \includegraphics[scale = 0.46]{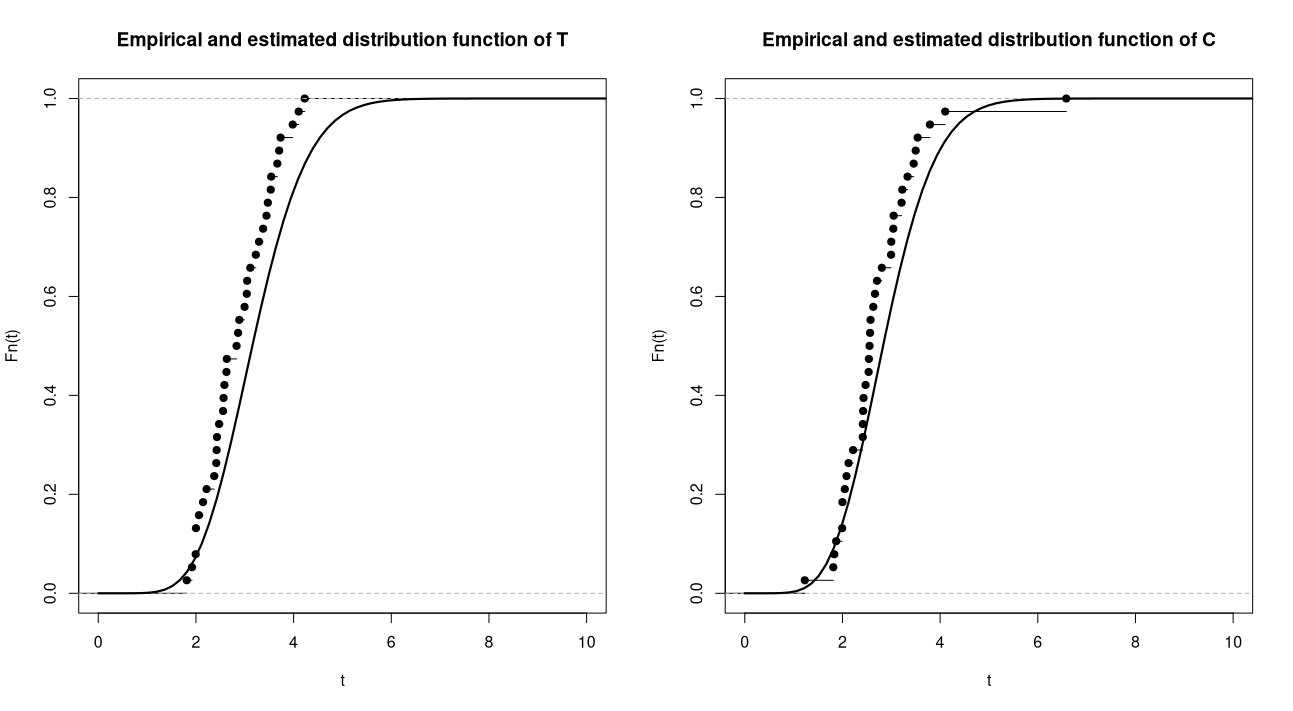}
    \caption{Plot of the estimated (line) and empirical (simple function) cumulative distribution functions of T (left) and C (right) for patient younger than 59 years old. }
    \label{fig:amanita_2}
\end{figure}

\begin{figure}[H]
\center
    \includegraphics[scale = 0.46]{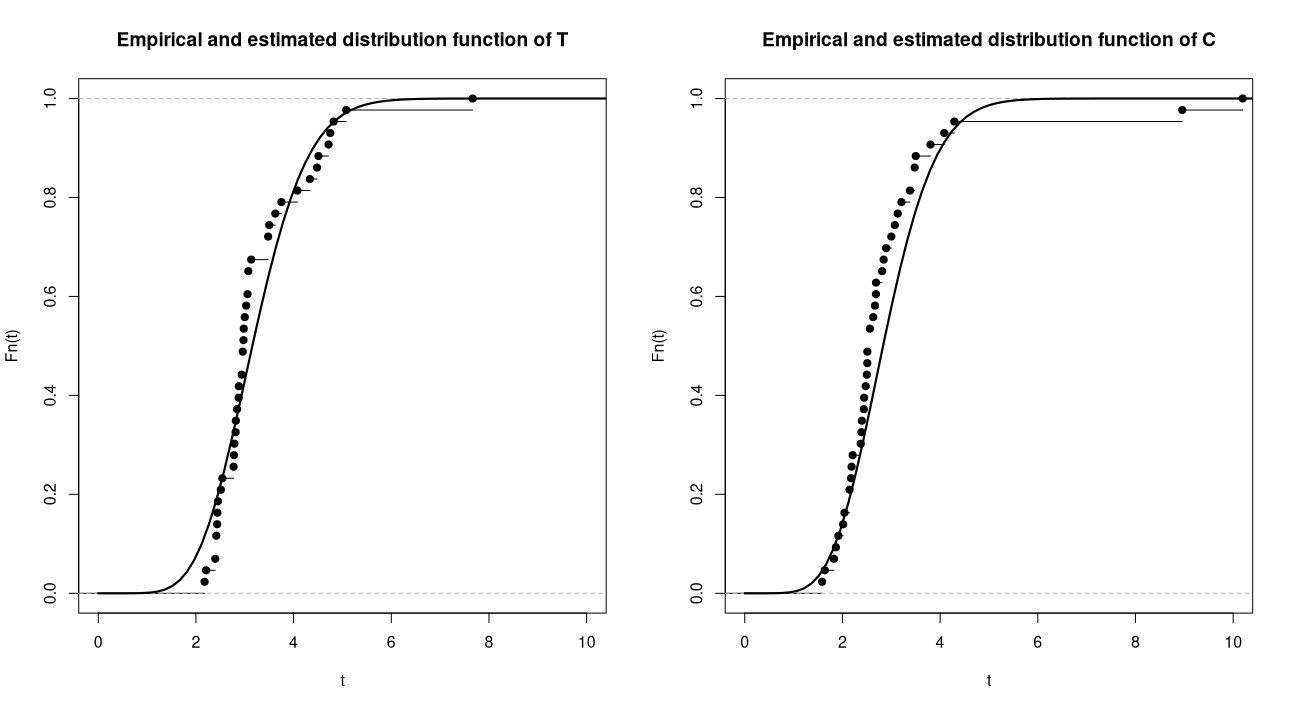}
    \caption{Plot of the estimated (line) and empirical (simple function) cumulative distribution functions of T (left) and C (right) for patient older than 60 years old. }
    \label{fig:amanita_3}
\end{figure}

\begin{figure}[H]
\center
    \includegraphics[scale = 0.35]{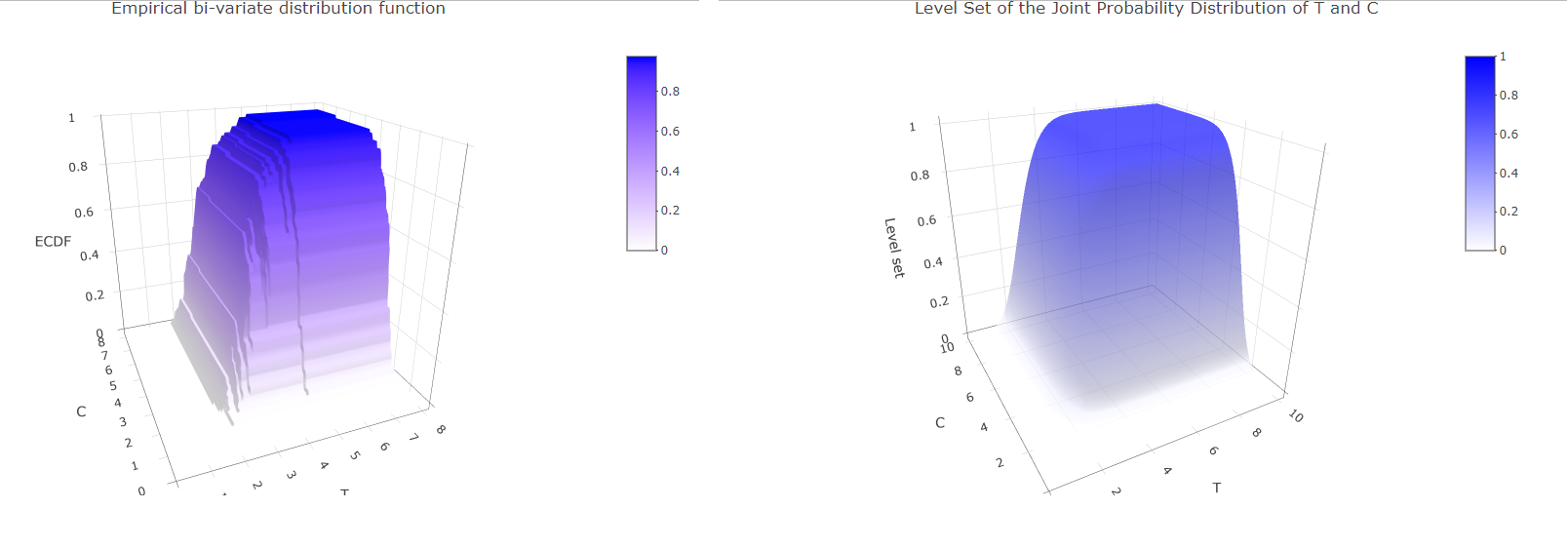}
    \caption{Plot of the empirical (left side) and estimated (right side) joint probability distribution of T and C}
    \label{fig:amanita_4}
\end{figure}

\begin{figure}[H]
\center
    \includegraphics[scale = 0.45]{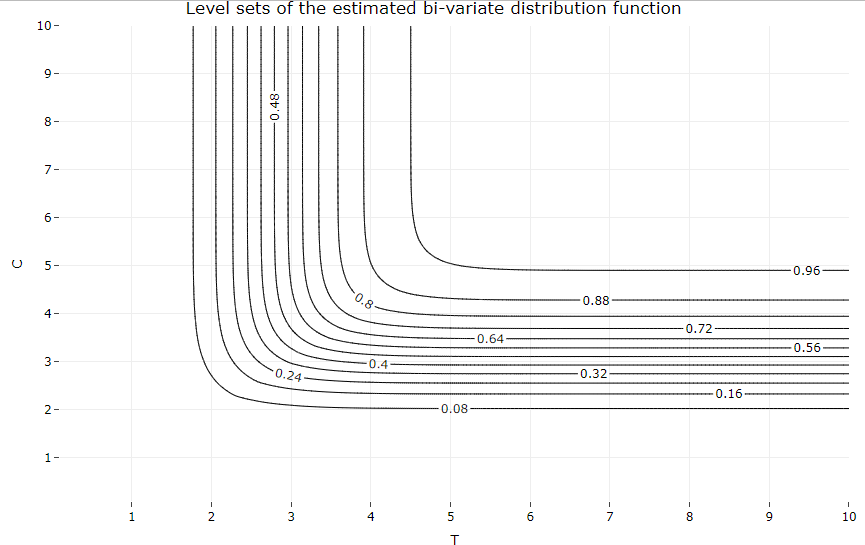}
    \caption{Plot of the level sets of the estimated joint probability distribution of T and C}
    \label{fig:amanita_5}
\end{figure}


\bibliographystyle{abbrv}
\bibliography{articleversion}

\section{Appendix - Mathematical proofs} \label{sect:appendix}

\paragraph{Proof of Lemma \ref{lem:dens_fct_hitting_times}}
Firstly, let us compute the distribution function of the couple of hitting time $(T,C)$. We have for $u \geq v$ \\
\begin{align*} \nonumber
    & \mathbb{P}[ T \leq u , C \leq v ] \\ 
    & \quad =   \sum_{n=1}^{+ \infty} \sum_{j=1}^{n} \mathbb{P} [X_1 + \ldots + X_n \geq x] \mathbb{P} [Z_1 + \ldots + Z_j \geq z] e^{- \lambda u} \frac{ (\lambda u)^n }{ n! } \binom{n}{j} \left( \frac{v}{u} \right)^j \left( 1 - \frac{v}{u} \right)^{n-j} ,
\end{align*}
and for $v > u$
\begin{align} \nonumber
    & \mathbb{P}[ T \leq u , C \leq v ] \\ \label{eq:cdf_T_C}
    &\quad = \sum_{j=1}^{+ \infty} \sum_{n=1}^{j} \mathbb{P} [X_1 + \ldots + X_n \geq x] \mathbb{P} [Z_1 + \ldots + Z_j \geq z] e^{- \lambda v} \frac{ (\lambda v)^j }{ j! } \binom{j}{n} \left( \frac{u}{v} \right)^n \left( 1 - \frac{u}{v} \right)^{j-n} . 
\end{align}
Since the proof use the same ideas on both of the subsets $\{0\leq u \leq v \}$ and $\{0\leq v < u \}$, we only prove it when $ u \geq v $. \\
Since both processes $L_1$ and $L_2$ are non-decreasing processes, using the total probability formula and using the fact that those processes are subordinators we obtain
\begin{align*}
    \mathbb{P}[ T \leq u , C \leq v ]  = & \mathbb{P}[ L_{u,1} \geq x , L_{v,2} \geq z  ] \\
    = & \sum_{n=1}^{+ \infty} \sum_{j=1}^{+ \infty} \mathbb{P} [ L_{u,1} \geq x ; L_{v,2} \geq z ; N_{v} = j ; N_{u} = n ] \\
    = & \sum_{n=1}^{+ \infty} \sum_{j=1}^{n} \mathbb{P} [ L_{u,1} \geq x ; L_{v,2} \geq z | N_{v} = j ; N_{u} = n ] \mathbb{P} [ N_{v} = j ; N_{u} = n ] \\
    = & \sum_{n=1}^{+ \infty} \sum_{j=1}^{n} \mathbb{P} [ X_1 + \ldots + X_n  \geq x ; Z_1 + \ldots + Z_j \geq z ] \mathbb{P} [ N_{v} = j ; N_{u} - N_{v} = n -j ]
\end{align*}
The change of index in the second sum comes from the fact that since $u \geq v$, necessarily $(N_t)_{t \geq 0}$ has jumped more at the time $u$ than at the time $v$.
By independence of $(X_i)_{i \geq 1}$ and $(Z_i)_{i \geq 1}$ and by the definition of a Poisson process, we get
\begin{align*}
= & \sum_{n=1}^{+ \infty} \sum_{j=1}^{n} \mathbb{P} [ X_1 + \ldots + X_n  \geq x ] \mathbb{P} [ Z_1 + \ldots + Z_j \geq z ] e^{- \lambda v} \frac{ (\lambda v)^j }{ j! } e^{- \lambda (u - v)} \frac{ (\lambda (u - v))^{n-j} }{ (n-j)! }  \\
= & \sum_{n=1}^{+ \infty} \sum_{j=1}^{n} \mathbb{P} [X_1 + \ldots + X_n \geq x] \mathbb{P} [Z_1 + \ldots + Z_j \geq z] e^{- \lambda u} \frac{ (\lambda u)^n }{ n! } \binom{n}{j} \left( \frac{v}{u} \right)^j \left( 1 - \frac{v}{u} \right)^{n-j}  
\end{align*}

\medskip

The law of the couple $(T,C)$ is not absolutely continuous with respect to the Lebesgue measure on $\mathbb{R}^2$. It is the case outside of the diagonal $\{ (u,u), \; u \in \mathbb{R}_+ \}$. A simple differentiation is enough to obtain the density function. Along the diagonal, another proof needs to be done. 

\paragraph{\it Density function of the couple $(T , C )$ outside of the diagonal.}
We want to prove that the density function $f_{ac}$ of the couple $(T , C )$ on the subset  $\{ u \ne v \}$ is given by \eqref{eq:density_abs_cont_part}.
In fact, $ (u , v) \in (\mathbb{R}_+)^2  \mapsto   \mathbb{P}[ T \leq u , C \leq v ] $ has a similar structure on $\{ u < v \} $ and $ \{ u > v \} $. Indeed the previous formulas for $\mathbb{P}[ T \leq u , C \leq v ]$ are symmetric on $(X_n)_{n \in \mathbb{N}}$ - $(Z_n)_{n \in \mathbb{N}}$ and $u$ - $v$. Thus only the proof on the subset $\{  u < v \}$ is detailed.\\


\noindent
On the subset $\{ u < v \}$, the distribution function is given by \eqref{eq:cdf_T_C} and can be rewritten as
$$
 \sum_{n=1}^{+ \infty} \sum_{j=1}^{n} [1 - c_{j,X}] [1 - c_{n,Z}] e^{- \lambda v} \frac{ (\lambda (v - u))^{n-j} }{ (n-j)! } \frac{ ( \lambda u)^j }{ j! } .
$$
To facilitate the calculus, we can separate this formula  in two terms: a first one where $j<n$ and a second one where $j=n$. Then, we get 
\begin{align*}
    \mathbb{P}[ T \leq u , C \leq v ] = & \sum_{n=2}^{+ \infty} \sum_{j=1}^{n-1} [1 - c_{j,X}] [1 - c_{n,Z}] e^{- \lambda v} \frac{ (\lambda (v - u))^{n-j} }{ (n-j)! } \frac{ ( \lambda u)^j }{ j! } \\
    & + \; \sum_{n=1}^{+ \infty} [1 - c_{n,X}] [1 - c_{n,Z}] e^{- \lambda v} \frac{ ( \lambda u)^n }{ n! } .
\end{align*}
Denote $A(u,v)$ the first term and $B(u,v)$ the second term. We obtain
\begin{align*}
 \partial_{v} A(u,v) \; = \; \sum_{n=2}^{+ \infty} \sum_{j=1}^{n-1} [1 - c_{j,X}] [1 - c_{n,Z}] \lambda e^{- \lambda v} \left[  \frac{( \lambda (v - u) )^{n-j-1}}{(n-j-1)!}  -  \frac{( \lambda (v - u) )^{n-j}}{(n-j)!}    \right]\frac{ ( \lambda u)^j}{j!} . 
\end{align*}
By distributing in to distinct sums and applying the index change $n' = n-1$ to the first sum we get: 
\begin{align*}
 \partial_{v} A(u,v) \; = &\; \sum_{n'=1}^{+ \infty} \sum_{j=1}^{n'} [1 - c_{j,X}] [1 - c_{n' +1,Z}] \lambda e^{- \lambda v}  \frac{( \lambda (v - u) )^{n'-j}}{(n'-j)!} \frac{ ( \lambda u)^j}{j!} \\
 & - \sum_{n=2}^{+ \infty} \sum_{j=1}^{n-1} [1 - c_{j,X}] [1 - c_{n,Z}] \lambda e^{- \lambda v}  \frac{( \lambda (v - u) )^{n-j}}{(n-j)!} \frac{ ( \lambda u)^{j}}{j!} . \end{align*}
 Moreover 
 $$\partial_{v} B(u,v) \; =  \; - \sum_{n=1}^{+ \infty} [1 - c_{n,X}] [1 - c_{n,Z}] \lambda e^{- \lambda v} \frac{ ( \lambda u)^{n}}{n!} .$$
Summing the second member of $ \partial_{v} A(u;v) $ and $ \partial_{v} B(u;v) $, we obtain 
\begin{eqnarray*}
 =  -  \sum_{n=1}^{+ \infty} \sum_{j=1}^{n} [1 - c_{j,X}] [1 - c_{n,Z}] \lambda e^{- \lambda v}  \frac{( \lambda (v - u) )^{n-j}}{(n-j)!} \frac{ ( \lambda u)^{j}}{j!} 
\end{eqnarray*}
Finally
\begin{align*}
\partial_{v}  \mathbb{P}[ T \leq u , C \leq v ] &  =  \partial_{v} A(u,v) + \partial_{v} B(u,v) \\
 & = \; \sum_{n=1}^{+ \infty} \sum_{j=1}^{n} [1 - c_{j,X}] [c_{n,Z} -c_{n+1,Z}] \lambda e^{- \lambda v}  \frac{( \lambda (v - u) )^{n-j}}{(n-j)!} \frac{ ( \lambda u)^{j}}{j!}  
\end{align*}
The same way as the differentiation on $v$, to differentiate on $u$ we can concatenate $ \partial_{v}  \mathbb{P}[ T \leq u , C \leq v ] $ in two parts, a first member where ${j=n}$ and another one where ${j<n}$ 
\begin{align*}
\partial_{v}  \mathbb{P}[ T \leq u , C \leq v ] \; = & \sum_{n=1}^{+ \infty} [1 - c_{n,X}] [c_{n,Z} -c_{n+1,Z}] \lambda e^{- \lambda v}  \frac{ ( \lambda u)^{n}}{n!}  \\
& + \sum_{n=2}^{+ \infty} \sum_{j=1}^{n-1} [1 - c_{j,X}] [c_{n,Z} -c_{n+1,Z}] \lambda e^{- \lambda v}  \frac{( \lambda (v - u) )^{n-j}}{(n-j)!} \frac{ ( \lambda u)^{j}}{j!} 
\end{align*}
The same method (differentiation, separation in two members, change of index) allows us to prove the wanted result for $u < v$
$$
    f_{ac}(u, v) \; = \; \sum_{n=1}^{+ \infty} \sum_{j=1}^{n} \big[ c_{j-1,X} - c_{j,X} \big] \big[ c_{n,Z} - c_{n+1,Z} \big] 
    \lambda^2 e^{ - \lambda v} \frac{ (\lambda (v - u))^{n-j} }{ (n-j)!  } \frac{( \lambda u )^{j-1}}{(j-1)!}.
$$

We are not able to use the same method (derivation of the distribution function) to determine the density along the diagonal. Indeed the law of the couple involves a part that is absolutely continuous with respect to the Lebesgue measure and a part that is singular.

\paragraph{\it Determination of the density function along the diagonal.}
The first step of the proof consists of calculating the following distribution function: $u \mapsto \mathbb{P} [ T \leq u, T = C ]$. We have
\begin{align} \nonumber
    \mathbb{P} [ T \leq u, & T = C ] \\ \nonumber
    = & \sum_{k=1}^{+ \infty} \mathbb{P} [ T \leq u, T = C, N_u = k ] \\  
    = & \sum_{n=1}^{+ \infty} \mathbb{P} [ T \leq u, T = C | N_u = k ] \mathbb{P} [N_u = k] \\  \nonumber
    = & \sum_{k=1}^{+ \infty} \mathbb{P} [ \exists n \in \{ 1 , \ldots , k \}, X_0 + \ldots + X_{n-1} < x \leq X_0 + \ldots + X_n ; \\  \nonumber
    & \qquad \qquad Z_0 + \ldots + Z_{n-1} < z \leq Z_0 + \ldots + Z_n ] \mathbb{P} [N_u = k] \\  \nonumber
    = & \sum_{k=1}^{+ \infty} \sum_{n = 1}^k [c_{n-1,X} -c_{n,X}] [c_{n-1,Z} -c_{n,Z}] \mathbb{P} [N_u = k] \\  \nonumber
    = & \sum_{n=1}^{+ \infty} \sum_{k = n}^{+ \infty} [c_{n-1,X} -c_{n,X}] [c_{n-1,Z} -c_{n,Z}] \mathbb{P} [N_u = k] \\  \nonumber
    = & \sum_{n=0}^{+ \infty} \sum_{k = n + 1} ^{+ \infty} [c_{n,X} -c_{n+1,X}] [c_{n,Z} -c_{n+1,Z}] \mathbb{P} [N_u = k] \\ \label{eq:cdf_T_equal_C}
    = & \sum_{n=0}^{+ \infty} [c_{n,X} -c_{n+1,X}] [c_{n,Z} -c_{n+1,Z}] e^{-\lambda u}  \sum_{k = n+1}^{+ \infty} \frac{(\lambda u)^k}{k!}.
\end{align}
Differentiating this function leads to the announced result (Equation \eqref{eq:density_sing_part}) and achieves the proof of Lemma \ref{lem:dens_fct_hitting_times}.
\qed

\bigskip

In order to prove Lemma \ref{lem:null_prob_induces_non_censoring}, we need the following lemma that describes the behavior of the probability terms that appear as coefficient in the density function. 
\begin{lemma}
\label{lem:equality_probab_term}
    Let $(X_n)_{n \in \mathbb{N}^*}$ be a sequence of real non-negative i.i.d. random variables. Let $X_0 := 0 \; \mathbb{P}$-as be independent of the sequence $(X_n)_{n \in \mathbb{N}^*}$ . Let $x>0$. Assume that for some $n \in \mathbb{N}$: 
    \begin{eqnarray*}
        \mathbb{P} [ X_0 + \ldots + X_n < x] = \mathbb{P} [ X_0 + \ldots + X_{n+1} < x]
    \end{eqnarray*}
Then $\mathbb{P} [ X_0 + \ldots + X_n < x] = 0$ or $\mathbb{P} [ X_0 + \ldots + X_n < x] = 1 $.
\end{lemma}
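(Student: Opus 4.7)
The plan is to set $S_n := X_0 + \ldots + X_n = X_1 + \ldots + X_n$ and $p_n := \mathbb{P}[S_n < x]$, denote $\mu$ the law of $X_1$, and introduce $M := \sup \mathrm{supp}(\mu) \in [0,\infty]$ and $m := \inf \mathrm{supp}(\mu) \in [0,M]$. Monotonicity $S_{n+1} \ge S_n$ lets me rewrite $p_n = p_{n+1}$ as $\mathbb{P}[S_n < x \le S_{n+1}] = 0$, i.e., $\mathbb{P}[X_{n+1} \ge x - S_n,\, S_n < x] = 0$. By independence of $S_n$ and $X_{n+1}$ and Fubini, for $\mathbb{P}_{S_n}$-a.e. $s \in [0,x)$ we have $\mathbb{P}[X_1 \ge x - s] = 0$; using the equivalence $\mathbb{P}[X_1 \ge t] = 0 \iff t > M$, this becomes $s < x - M$, and hence $\mathbb{P}[S_n \in (x-M, x)] = 0$. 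Since $(x-M,x)$ is open, this is equivalent to $\mathrm{supp}(\mathbb{P}_{S_n}) \cap (x-M, x) = \emptyset$.

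I would then dispatch the boundary cases: if $M \ge x$ (including $M = +\infty$), the condition $s < x - M$ is vacuous and forces $p_n = 0$; if $M = 0$, then $X_1 = 0$ a.s., so $S_n = 0 < x$ and $p_n = 1$. The substantive case is $0 < M < x$, which I handle by contradiction: assume $p_n \in (0,1)$. Then $\mathbb{P}[S_n \ge x] > 0$ gives $nM \ge x$ (as $S_n \le nM$ a.s.), while $\mathbb{P}[S_n < x] > 0$ gives $nm < x$ (as $S_n \ge nm$ a.s.).

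The heart of the argument uses that $\mathrm{supp}(\mathbb{P}_{S_n})$ contains every Minkowski sum of $n$ elements of $\mathrm{supp}(\mu)$; in particular $(n - k)M + km \in \mathrm{supp}(\mathbb{P}_{S_n})$ for every $k \in \{0, \ldots, n\}$, and each such sum must avoid $(x-M, x)$. I would show by induction on $k$ that $(n - k)M + km \ge x$: the base $k = 0$ is $nM \ge x$, and at the inductive step consecutive sums differ by $M - m$, so if $(n - k - 1)M + (k + 1)m < x$ the support constraint forces it to be $\le x - M$, which combined with the inductive hypothesis yields $m \le 0$. Under $m > 0$ this closes the induction and taking $k = n$ gives $nm \ge x$, contradicting $nm < x$.

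The main obstacle is the remaining subcase $m = 0$, where the inductive step may instead produce the exact equality $(n - k_0)M = x$ for some $k_0 \in \{0, \ldots, n-1\}$; setting $j := n - k_0$, one gets $x = jM$ with $1 \le j \le n$. If any third support point $M' \in \mathrm{supp}(\mu) \cap (0, M)$ existed, the sum $(j - 1)M + M' + (n - j)\cdot 0 = x - M + M'$ would lie in $\mathrm{supp}(\mathbb{P}_{S_n}) \cap (x-M, x)$, a contradiction. Hence $\mathrm{supp}(\mu) = \{0, M\}$, so $\mu = q \delta_0 + (1-q) \delta_M$ with $q \in (0,1)$, and $S_n$ has the law of $M \cdot K$ with $K \sim \mathrm{Bin}(n, 1-q)$. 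A direct computation using $\mathrm{Bin}(n+1, 1-q) \stackrel{d}{=} \mathrm{Bin}(n, 1-q) + \mathrm{Bern}(1-q)$ (independent) then yields
\[
p_n - p_{n+1} = (1-q)\binom{n}{j-1}(1-q)^{j-1} q^{n-j+1} > 0,
\]
contradicting $p_n = p_{n+1}$. Therefore $p_n \in \{0, 1\}$.
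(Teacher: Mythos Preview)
Your proof is correct and takes a genuinely different route from the paper's. Both arguments open the same way: from $p_n=p_{n+1}$ one extracts, via Fubini and independence, the gap $\mathrm{supp}(\mathbb{P}_{S_n})\cap(x-M,x)=\emptyset$. After that the strategies diverge. The paper turns the gap into $\mathbb{P}[S_n<x]=\mathbb{P}[S_n<x-M]$, then shows $\mathbb{P}[S_n<x-M]=\mathbb{P}[S_{n-1}<x-M]$, so the hypothesis regenerates with $(n-1,x-M)$ in place of $(n,x)$; iterating $n-1$ times keeps $p_n\in(0,1)$ while forcing $(k+1)M<x$ at every step, and at $k=n-1$ this yields $nM<x$, contradicting $nM\ge x$. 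This recursion is uniform in $m=\inf\mathrm{supp}(\mu)$ and needs no case split there. Your approach is instead geometric: you place the explicit support points $(n-k)M+km$ and induct to trap them all in $[x,\infty)$, which contradicts $nm<x$ when $m>0$; the price is a separate $m=0$ branch, where you reduce to a two-point law and compute $p_n-p_{n+1}$ explicitly as a binomial term. Your argument is more concrete and constructive; the paper's is shorter and avoids the Bernoulli computation.

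One small imprecision worth noting: the stated equivalence $\mathbb{P}[X_1\ge t]=0\iff t>M$ fails at $t=M$ when $\mu(\{M\})=0$ (e.g.\ $\mu$ uniform on $[0,1]$). This does not affect your key conclusion $\mathbb{P}[S_n\in(x-M,x)]=0$, but it leaves a tiny gap in the boundary case $M=x$: the corrected condition still permits $s=0$ when $\mu(\{x\})=0$, so you only get $p_n=\mathbb{P}[S_n=0]$ rather than $p_n=0$ outright. This is easily closed with your own support machinery: if $\mathbb{P}[S_n=0]>0$ then $0\in\mathrm{supp}(\mu)$, while $\mu(\{M\})=0$ forces some $c\in\mathrm{supp}(\mu)\cap(0,M)$, whence $c=(n-1)\cdot 0+c\in\mathrm{supp}(\mathbb{P}_{S_n})\cap(0,x)$, contradicting the gap.
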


\paragraph{Proof of Lemma \ref{lem:equality_probab_term}}
Define $S_n = \sum_{k=0}^n X_k, n \in \mathbb{N}$. Assume that for some $n \in \mathbb{N}$:
\begin{eqnarray*}
    \mathbb{P} [ S_n < x] = \mathbb{P} [ S_{n+1} < x].
\end{eqnarray*}
To prove this result, we are reasoning reductio ad absurdum. Assume that $0< \mathbb{P} [ S_n < x] < 1 $. 

A first observation that the support of $X_1$ is a subset of $[0,x[$. If it is not the case, then:
$$
    \mathbb{P} [ S_{n+1} < x] \leq \mathbb{P} [ X_0 + \ldots + X_n < x] \mathbb{P} [ X_1 < x] < \mathbb{P} [ S_n < x].
$$
Let us define $M_{\max} = \sup \{ t \in Supp\{ X_1 \} \} $. In our setting, $M_{\max} > 0$ ($X_1$ is not reduced to zero) and we just proved that $M_{\max} \leq x$.

We also claim that $Supp\{S_n \} \cap [0,x[ \subset [0,x - M_{\max}[$ . If it is not the case, that is if 
\begin{equation} \label{eq:proof_lem_2_2_assump}
\mathbb P [x - M_{\max} \leq S_n < x] > 0,
\end{equation}
then
\begin{align} \nonumber
    \mathbb{P} [ S_{n+1} < x] & = \mathbb{P} [ S_{n+1} < x; S_n < x - M_{\max} ]  + \mathbb{P} [ S_{n+1} < x; S_n \geq x - M_{\max} ] \\ \label{eq:proof_lem_2_2}
    & = \mathbb{P} [ S_{n} < x - M_{\max} ] + \mathbb{P} [ S_{n+1} < x; S_n \geq x - M_{\max} ] .
\end{align}
The study of the second term must be treated by distinguishing the two following cases. 
\begin{itemize}
\item Assume that 
$\mathbb{P} [ x - M_{\max} \leq S_n < x ] > \mathbb{P} [ S_n = x - M_{\max} ]$. Then there necessarily exists $t \in ]x - M_{\max}, x[$ such that $\mathbb{P} [ x - M_{\max} \leq S_n < x ] > \mathbb{P} [ t \leq S_n < x ]$ and we deduce that: 
\begin{align*}
    \mathbb{P} [ S_{n+1} < x; S_n \geq x - M_{\max} ] & = \mathbb{P} [ S_{n+1} < x; S_n \geq t ] + \mathbb{P} [ S_{n+1} < x; x - M_{\max} \leq S_n < t ] \\
    & \leq  \mathbb{P} [  X_{n+1} < x-t; t \leq S_n < x ] + \mathbb{P} [ S_{n+1} < x; x - M_{\max} \leq S_n < t ] \\
    & = \mathbb{P} [  X_{n+1} < x-t] \mathbb{P} [  t \leq S_n < x ] + \mathbb{P} [ S_{n+1} < x; x - M_{\max} \leq S_n < t ] \\
    & < \mathbb{P} [  t \leq S_n < x ] + \mathbb{P} [ S_{n+1} < x; x - M_{\max} \leq S_n < t ] \\
    & < \mathbb{P} [  x - M_{\max} \leq S_n < x ].
\end{align*} 
We used that $\mathbb{P} [  X_{n+1} < x-t] < 1 $, since $x-t < M_{\max}$. With \eqref{eq:proof_lem_2_2}, we obtatin that $\mathbb{P} [ S_{n+1} < x] < \mathbb{P} [ S_{n} < x]$, which contradicts the assumption of this lemma. 

As a consequence we obtain that \eqref{eq:proof_lem_2_2_assump} becomes 
$$\mathbb P [x - M_{\max} \leq S_n < x] = \mathbb P [x - M_{\max} = S_n] > 0.$$

\item Assume that $\mathbb{P} [ x - M_{\max} \leq S_n < x ] = \mathbb{P} [ S_n = x - M_{\max} ] > 0 $. Define $ \text{Atom}(X_1) = \{ t \geq 0, \mathbb{P}[X_1 = t ] >0 \} $ the set of atoms of $X_1$. Remark that $\mathbb{P} [ S_n = x - M_{\max} ] \neq 0$ implies that $\text{Atom}(X_1)$ is not empty. Moreover we obtain the following equality
\begin{align*}
    \mathbb{P} [ S_{n+1} < x; S_n \geq x - M_{\max} ] & = \mathbb{P} [ X_{n+1} < M_{\max}; S_n = x - M_{\max} ] \\
    & = \mathbb{P} [ X_{n+1} < M_{\max}] \mathbb{P} [ S_n = x - M_{\max} ].
\end{align*}
If $M_{\max}$ is an atom of $X_1$, then $\mathbb{P} [ X_{n+1} < M_{\max}] < 1 $ and thus with \eqref{eq:proof_lem_2_2}, $\mathbb{P} [ S_{n+1} < x] < \mathbb{P} [ S_{n} < x]$, which is again absurd. Therefore $M_{\max}$ is not an atom of $X_1$ and for any atom $a$, $\mathbb P (X_1 = a) < \mathbb P(a \leq X_1 < M_{\max}) $. Define the set $A_n = \{ (x_1,\ldots,x_n) \in \text{Atom}(X_1)^n; x_1 + \ldots + x_n = x - M_{\max} \}$. We get
\begin{align*}
    \mathbb{P} [S_n = x - M_{\max} ] & = \sum_{(x_1,\ldots,x_n) \in A_n} \mathbb{P} [X_1 = x_1 ; \ldots ; X_n = x_n ] \\
    & < \sum_{(x_1,\ldots,x_n) \in A_n} \mathbb{P} [X_1 = x_1 ; \ldots ; x_n \leq X_n < M_{\max} ] \\
    & \leq \mathbb{P} [ x - M_{\max} \leq S_n < x ].
\end{align*}
This contradicts the fact that $\mathbb{P} [ x - M_{\max} \leq S_n < x ] = \mathbb{P} [ S_n = x - M_{\max} ]$.
\end{itemize}
We conclude that  $Supp\{S_n \} \cap [0,x[ \subset [0,x - M_{\max}[$, i.e. $ \mathbb{P} [ S_n < x ] = \mathbb{P} [ S_n < x - M_{\max} ]$. Since $0 < \mathbb{P} [ S_n < x] $, we must have $M_{\max} < x$.

Our assumption that $\mathbb{P} [ S_n < x] < 1$ implies that $n M_{\max} \geq x $. But we have
\begin{align*}
    \mathbb{P} [S_n < x - M_{\max} ] & \leq \mathbb{P} [ S_{n-1}< x - M_{\max} ] \\
    & = \mathbb{P} [ S_{n-1}< x - M_{\max} ] \mathbb{P} [ X_n \leq M_{\max}] \\
    &= \mathbb{P} [ S_{n-1}< x - M_{\max} ; X_n \leq M_{\max}] \\
    & \leq  \mathbb{P} [ S_n < x ] = \mathbb{P} [ S_n < x - M_{\max} ].
\end{align*}
Hence  
$$0<  \mathbb{P} [ S_n < x - M_{\max} ] =  \mathbb{P} [ S_{n-1}< x - M_{\max} ] < 1 .$$
Using the same ideas of the beginning of the proof, we obtain that $Supp\{S_{n-1} \} \cap [0,x -M_{\max}[ \subset [0,x - 2M_{\max}[$ and $\mathbb{P} [ S_{n-1}< x - M_{\max} ] = \mathbb{P} [ S_{n-1}< x - 2M_{\max} ]$.
Recursively we deduce that $\mathbb{P} [ S_n < x ] =  \mathbb{P} [ X_{1}< x - nM_{\max} ] = 0 $, or that $x \leq nM_{\max} < x$, which is absurd. 
The conclusion of the lemma follows:
\begin{eqnarray*}
    \mathbb{P} [ S_n < x] = \mathbb{P} [ S_{n+1} < x] = 0 \quad \text{or} \quad 1.
\end{eqnarray*}
\qed

\paragraph{Proof of Lemma \ref{lem:null_prob_induces_non_censoring}}
Formula \eqref{eq:prob_T_equal_C}
\begin{eqnarray*}
    \mathbb{P} \left[ T = C   \right] = \underset{n \in \mathbb{{N}}}{\sum} [c_{n,X} - c_{(n+1),X}] [c_{n,Z} - c_{(n+1),Z} ] 
\end{eqnarray*}
is obtained by integrating the density function \eqref{eq:density_sing_part} in Lemma \ref{lem:dens_fct_hitting_times} along the diagonal $\{ (u,v),u,v\geq 0, u=v \}$ or computing the limit at infinity in \eqref{eq:cdf_T_equal_C} (note that the function $u \mapsto \sum_{k = n+1}^{+ \infty} e^{-\lambda u} \frac{(\lambda u)^k}{k!}$ is the cumulative distribution function of the Erlang law with parameters $(\lambda,n+1)$). 

\medskip 

Assume that $T < C$ almost surely. The case $T>c$ almost surely is treated the same way. We naturally have $\mathbb{P} \left[ T = C   \right] = 0$.
Assume that $\mathbb{P} \left[ T = C   \right] = 0$. Since all the terms of the series
\begin{eqnarray*}
    \underset{n \in \mathbb{{N}}}{\sum} [c_{n,X}- c_{n+1,X}] [c_{n,Z} - c_{n+1,Z} ] 
\end{eqnarray*}
are non-negative, we necessarily have for any $n \in \mathbb{N}$:
\begin{align*}
    & c_{n,X} = c_{n+1,X} \quad \quad \text{or} \quad \quad c_{n,Z} = c_{n+1,Z}
\end{align*}
This situation can be separated in three different cases:
\begin{itemize}
    \item {\it Case 1}. If for all $n \in \mathbb{N}$, $\mathbb{P} [ X_0 + \ldots + X_n < x] = \mathbb{P} [ X_0 + \ldots + X_{n+1} < x]$, then $\mathbb{P} [ X_0 + \ldots + X_n < x] = \mathbb{P} [ X_0 < x] = 1, \forall n \in \mathbb{N}$ and since $X_1 \geq 0$ a.s. we must have $X_1 = 0$ a.s. which is not possible by assumption made at the beginning of Section \ref{sect:model_setting}. 
The same happens if  for all $n \in \mathbb{N}$, $\mathbb{P} [ Z_0 + \ldots + Z_n < z] = \mathbb{P} [ Z_0 + \ldots + Z_{n+1} < z]$. 
\item {\it Case 2}. Suppose that there exists $n_1 < n_2$ with $n_1,n_2 \in \mathbb{N}$ such that $c_{n_1,X} \neq c_{n_1+1,X}$ and $c_{n_2} = c_{n_2+1,X}$. From Lemma \ref{lem:equality_probab_term} we deduce that $c_{n_2,X} = 0$ or $c_{n_2,X} = 1$. But in the second case, we would have for any $n \leq n_2$, $c_{n,X} = 1$ and in particular $c_{n_1,X} = c_{n_1+1,X}$. Therefore $c_{n_2,X} = 0$ and thus for any $n \geq n_2$, $c_{n,X}  = 0$ and $X_1 \geq \frac{x}{n_2}$ a.s. 

Notice that since $c_{n_1,X}  \neq c_{n_1+1,X}$, necessarily $c_{n_1+1} < 1$ and thus for any $n > n_1$, $c_{n,X}< 1$.

Now we define the set $J=\{n \in \mathbb{N}, c_{n,X}\neq c_{n+1,X}\}$. This set is non empty ($n_1 \in J$) and bounded (by $n_2$), hence $J$ admits a maximum $N_{\max}$ and a minimum $N_{\min}$. Again from Lemma \ref{lem:equality_probab_term}, we deduce that $J = \{ N_{\min},\ldots,N_{\max}\}$. Remark that the hitting time $T$ occurs almost surely at or after the $N_{\min}$-th jump of the Poisson process and almost surely at or before the $N_{\max}+1$-th jump of the Poisson process.


In this configuration, for any $n \in J$, $c_{n,Z}= c_{n+1,Z}$. From Lemma \ref{lem:equality_probab_term} we deduce that 
$c_{N_{\min},Z} = 0 $ or 1. 
In the first case, 
$Z_1 \geq \frac{z}{N_{\min}}$ and thus the hitting time $C$ occurs almost surely strictly before the  $N_{\min}$-th jump of the Poisson process, that is 
$C < T$ almost surely. If $c_{N_{\min},Z} = 1$, then $c_{N_{\max},Z} = 1$ and a.s. $C$ occurs after the $N_{\max}+1$-th jump of the Poisson process, that $C\geq T$ almost surely. 

\item {\it Case 3.} If for any $N \in \mathbb{N}$, there exists $n \geq N$, $c_{n,X} \neq c_{n+1,X}$, using the same type of arguments as in Case 2, we obtain that $C < T$ almost surely.
\end{itemize}
Lemma \ref{lem:null_prob_induces_non_censoring} is proved. 
\qed

\paragraph{Proof of Lemma \ref{lem:dens_fct_censor}}
The proof for Model II is immediately deduced from Model I proof. Consequently we only prove the result for the Model I. The first step to determine the density function of the censoring couple is to determine the cumulative distribution function $F_{(Y, \Delta)}$. \\
Let us determine $F_{(Y, \Delta)} (t, 1) $ for $t \geq 0$. We have:
\begin{align*}
   F_{(Y, \Delta)} (t, 1) = \mathbb{P} [ T \leq t , T \leq C  ] = \mathbb{P} [ T \leq t] - \mathbb{P} [ T \leq t , T > C  ] .
\end{align*}
By similar arguments as in the proof of Lemma \ref{lem:dens_fct_hitting_times}, we obtain that
\begin{eqnarray*}
    \mathbb{P} [ T \leq t  ] = \sum_{n=1}^{+ \infty} [ 1 - c_{n,X}] e^{-\lambda t} \frac{(\lambda t)^n}{n!}. 
\end{eqnarray*}
Then using the density function $f_{ac}$ given by \eqref{eq:density_abs_cont_part}
\begin{align*}
    & \mathbb{P} [ T \leq t , T > C  ] = \int_{\mathbb{R}^2} \mathbb{1}_{ \{ 0 \leq v < u \leq t \} } f_{ac} (u,v) du dv   \\
    & \quad = \int_0^t \int_0^{u} \sum_{n=1}^{+ \infty} \sum_{j=1}^{n} [ c_{n,X} - c_{n+1,X} ] [ c_{j-1,Z} - c_{j,Z} ] \lambda^2 e^{ - \lambda u} \frac{ (\lambda (u - v))^{n-j} }{ (n-j)!  } \frac{( \lambda v )^{j-1}}{(j-1)!}  du dv 
\end{align*}
Computing this integral immediately gives 
\begin{align*}
    F_{(Y, \Delta)} (t, 1) = & \sum_{n=1}^{+ \infty} [ 1 - c_{n,X}] e^{-\lambda t} \frac{(\lambda t)^n}{n!} 
    - \sum_{n=1}^{+ \infty} [c_{n,X} - c_{n+1,X} ] [ 1 - c_{n,Z}] \left[ 1 - \sum_{k=0}^n e^{-\lambda t} \frac{(\lambda t)^k}{k!} \right] .
\end{align*}
The determination of $F_{(Y, \Delta)} (t, 0) $ is done the same way
\begin{align*}
    F_{(Y, \Delta)} (t, 0) = \sum_{n=1}^{+ \infty} [ 1 - c_{n,Z}] e^{-\lambda t} \frac{(\lambda t)^n}{n!} 
    - \sum_{n=0}^{+ \infty} [c_{n,Z} - c_{n+1,Z}] [ 1 - c_{n+1,X}] \left[ 1 - \sum_{k=0}^n e^{-\lambda t} \frac{(\lambda t)^k}{k!} \right] .
\end{align*}
By deifferentiating, we obtain the density function and conclude the proof of this lemma.
\qed

\paragraph{Proof of Lemma \ref{lem:tech_resulu}.}
The statement of the lemma implicitly assumes that the function $t \mapsto \sum_{n=0}^{\infty} a_n t^n$ is well defined on $\mathbb{R}_+$ and that $a_n > 0$ for any $n$. 

Now suppose that $\lim_{n \xrightarrow{} + \infty} \dfrac{a_{n+1}}{a_{n}} = a$.
Then for all $\varepsilon > 0 $, there exists $N \in \mathbb{N}$ such that $\forall n \geq N$, $ \left| \dfrac{a_{n+1}}{a_n} - a \right| \leq \dfrac{\varepsilon}{2}$. Then for all $t > 0 $,
\begin{align*}
    | g(t) - a | = & \left| \frac{\sum_{n=0}^{+ \infty} a_{n+1}  t^n}{ \sum_{n=0}^{+ \infty} a_{n} t^n } - a \frac{\sum_{n=0}^{+ \infty} a_{n} t^n}{ \sum_{n=0}^{+ \infty} a_{n} t^n }  \right| \\
    = & \left| \frac{\sum_{n=0}^{N-1} a_{n+1} t^n}{ \sum_{n=0}^{+ \infty} a_{n} t^n } + \frac{\sum_{n=N}^{+ \infty} a_{n+1} t^n}{ \sum_{n=0}^{+ \infty} a_{n} t^n } -  a \frac{\sum_{n=0}^{N-1} a_{n} t^n}{ \sum_{n=0}^{+ \infty} a_{n} t^n } - a \frac{\sum_{n=N}^{+ \infty} a_{n} t^n}{ \sum_{n=0}^{+ \infty} a_{n} t^n } \right| \\
    \leq & \left| \frac{\sum_{n=0}^{N-1} a_{n+1} t^n}{ \sum_{n=0}^{+ \infty} a_{n} t^n } - a \frac{\sum_{n=0}^{N-1} a_{n} t^n}{ \sum_{n=0}^{+ \infty} a_{n} t^n } \right| + \left|  \frac{\sum_{n=N}^{+ \infty} a_{n+1} t^n}{ \sum_{n=0}^{+ \infty} a_{n} t^n }  - a \frac{\sum_{n=N}^{+ \infty} a_{n} t^n}{ \sum_{n=0}^{+ \infty} a_{n} t^n }  \right| 
    \\
    = & \left| \frac{\sum_{n=0}^{N-1} (a_{n+1} -a a_n) t^n}{ \sum_{n=0}^{+ \infty} a_{n} t^n } \right| + \left|  \frac{\sum_{n=N}^{+ \infty} \left( \frac{a_{n+1}}{a_n} - a \right) a_n t^n}{ \sum_{n=0}^{+ \infty} a_{n} t^n } \right| \\
    \leq & \frac{\sum_{n=0}^{N-1} | a_{n+1} -a a_n | t^n}{ \sum_{n=0}^{N} a_{n} t^n } + \frac{\varepsilon}{2} \frac{\sum_{n=N}^{+ \infty} a_n t^n}{ \sum_{n=0}^{+ \infty} a_{n} t^n } \\
    \leq & \frac{\varepsilon}{2} + \frac{\varepsilon}{2}.
\end{align*}
where the last inequality comes from the fact that the function is the quotient of a polynomial function of degree $N-1$ and of a polynomial function of degree $N$, so that
\begin{eqnarray*}
    \lim_{t \xrightarrow{} + \infty} \frac{\sum_{n=0}^{N-1} |a_{n+1} - a a_n| t^n}{ \sum_{n=0}^{N} a_{n} t^n } = 0.
\end{eqnarray*}
Finally $\lim_{t \xrightarrow{} + \infty} g(t) = a $.
\qed

\paragraph{Proof of Theorem \ref{theo:behaviour_c_(n+1)/c_n}}
 \underline{Assume that $(X_n)_{n \in \mathbb{N}^*} \in F_2$}. Denote $\phi$ the density of $X_1$ and $\phi_{S_n}$ the density function of $S_n = \sum_{k=1}^{n} X_k$, $n \in \mathbb{N}^*$. Let $n \in \mathbb{N}$. Firstly we have
\begin{equation}\label{eq:proof_thm_4_6}
    \frac{\mathbb{P}[ S_{n+1} < x  ] }{ \mathbb{P}[ S_{n} < x  ] } = \int_0^x \phi(t) \frac{\mathbb{P}[ S_{n} < x - t  ]}{\mathbb{P}[ S_{n} < x  ]} dt.
\end{equation}
We can study the quotient $\dfrac{\mathbb{P}[ S_{n} < x - t  ]}{\mathbb{P}[ S_{n} < x  ]}$. 
Consider $N_0$ be the minimal integer such that $\phi_{S_N}$ is non-decreasing on $[0,x]$.
Let $n \in \mathbb{N}$, such that $n \geq N_0$. Set up the Euclidean division of $n$ by $N_0$: $n = qN_0 + r$, where $q>0$ and $0 \leq r < N_0$. Then 
\begin{align*}
& \mathbb{P} [ X_1 + \ldots + X_n < x - t ] \\
 &\quad  =  \mathbb{P} [ X_1 + \ldots + X_{N_0} + \ldots + X_{N_0 (q-1) +1} + \ldots + X_{N_0 q} + X_{N_0 q+1} + \ldots + X_{N_0 q+r}  < x - t ] 
\end{align*}
Since the sequence $(X_i)_{i \in \mathbb{N}^*}$ is i.i.d., the variables $(X_1 + \ldots + X_{N_0})$, ... , $(X_{N_0 (q-1) +1} + \ldots + X_{N_0q })$, $X_{N_0 q+1}$, ... , $X_{N_0 q+r}$ are independent. Then:
\begin{align*} 
    & \mathbb{P} [ X_1 + \ldots + X_{N_0} + \ldots + X_{N_0 (q-1) +1} + \ldots + X_{N_0 q} + X_{N_0 q+1} + \ldots + X_{N_0 q+r}  < x - t ] \\
    &\quad = \int_{(\mathbb{R}+)^{q+r}} \phi_{S_{N_0}}(x_1) \ldots \phi_{S_{N_0}}(x_q) \phi(x_{q+1}) \ldots \phi(x_{q+r}) \mathbb{1}_{ \{ 0 \leq x_1 + \ldots + x_{q+r} < x-t \} } dx_1 \ldots dx_{q+r} \\
    &\quad = \int_{(\mathbb{R}+)^{q+r}} \phi_{S_{N_0}}(x_1) \ldots \phi_{S_{N_0}}(x_q) \phi(x_{q+1}) \ldots \phi(x_{q+r}) \mathbb{1}_{ \{ 0 \leq \frac{x}{x-t} x_1 + \ldots + \frac{x}{x-t} x_{q+r} < x \} } dx_1 \ldots dx_{q+r}.
\end{align*}
By the change of variables $ y_i = \frac{x}{x-t} x_i, i \in \{0, ..., q \}$ et $ y_i = x_i, i \in \{q+1, ..., q+r \}$, we get: 
\begin{align*}
    &  \int_{(\mathbb{R}+)^{q+r}} \phi_{S_{N_0}}(x_1) \ldots \phi_{S_{N_0}}(x_q) \phi(x_{q+1}) \ldots \phi(x_{q+r}) \mathbb{1}_{ \{ 0 \leq \frac{x}{x-t} x_1 + \ldots + \frac{x}{x-t} x_{q+r} < x \} } dx_1 \ldots dx_{q+r} \\
    & \quad = \left( \frac{x-t}{x} \right)^q \int_{(\mathbb{R}+)^{q+r}} \phi_{S_{N_0}} \left( \frac{x-t}{x} y_1 \right) \ldots \phi_{S_{N_0}} \left( \frac{x-t}{x} y_q \right) \phi ( y_{q+1} ) \ldots \phi(y_{q+r}) \\
     & \hspace{3cm}  \times \mathbb{1}_{ \{ 0 \leq y_1 + \ldots + y_q + \frac{x}{x-t} y_{q+1} + \ldots + \frac{x}{x-t} y_{q+r} < x \} } dy_1 \ldots dy_{q+r} .
\end{align*}
The condition $ \{ 0 \leq y_1 + \ldots + y_q + \frac{x}{x-t} y_{q+1} + \ldots + \frac{x}{x-t} y_{q+r} < x \} $ implies that for any $i$, $y_i \in [0,x]$. The non-decreasing property of $\phi_{S_{N_0}}$ on $[0,x]$ implies: 
\begin{align*}
    & \left( \frac{x-t}{x} \right)^q \int_{(\mathbb{R}+)^{q+r}} \phi_{S_{N_0}} \left( \frac{x-t}{x} y_1 \right) \ldots \phi_{S_{N_0}} \left( \frac{x-t}{x} y_q \right) \phi ( y_{q+1} ) \ldots \phi(y_{q+r}) \\
     & \hspace{3cm}  \times \mathbb{1}_{ \{ 0 \leq y_1 + \ldots + y_q + \frac{x}{x-t} y_{q+1} + \ldots + \frac{x}{x-t} y_{q+r} < x \} } dy_1 \ldots dy_{q+r} \\
   &    \leq \left( \frac{x-t}{x} \right)^q \int_{(\mathbb{R}+)^{q+r}} \phi_{S_{N_0}} (  y_1 ) \ldots \phi_{S_{N_0}} (  y_q ) \phi ( y_{q+1} ) \ldots \phi(y_{q+r}) \\
    & \hspace{3cm}   \times \mathbb{1}_{ \{ 0 \leq y_1 + \ldots + y_q + \frac{x}{x-t} y_{q+1} + \ldots + \frac{x}{x-t} y_{q+r} < x \} } dy_1 \ldots dy_{q+r}.
\end{align*}
Finally, $ \{ 0 \leq y_1 + \ldots + y_q + \frac{x}{x-t} y_{q+1} + \ldots + \frac{x}{x-t} y_{q+r} < x \} \subset \{ 0 \leq y_1 + \ldots + y_{q+r} < x \} $, since $\frac{x}{x-t} > 1 $.
So,
\begin{align*}
   &  \left( \frac{x-t}{x} \right)^q \int_{(\mathbb{R}+)^{q+r}} \phi_{S_{N_0}} (  y_1 ) \ldots \phi_{S_{N_0}} (  y_q ) \phi( y_{q+1} ) \ldots f(y_{q+r}) \\
   & \hspace{3cm}   \times \mathbb{1}_{ \{ 0 \leq y_1 + \ldots + y_q + \frac{x}{x-t} y_{q+1} + \ldots + \frac{x}{x-t} y_{q+r} < x \} } dy_1 \ldots dy_{q+r} \\
    &  \quad \leq \left( \frac{x-t}{x} \right)^q \int_{(\mathbb{R}+)^{q+r}} \phi_{S_{N_0}} (  y_1 ) \ldots \phi_{S_{N_0}} (  y_q ) \phi ( y_{q+1} ) \ldots \phi(y_{q+r}) \mathbb{1}_{ \{ 0 \leq y_1 + \ldots + y_{q+r} < x \} } dy_1 \ldots dy_{q+r} \\
     & \quad\leq \left( \frac{x-t}{x} \right)^q \mathbb{P} [ X_1 + \ldots + X_{Nq+r} < x] = \left( \frac{x-t}{x} \right)^q \mathbb{P} [ X_1 + \ldots + X_n < x]  .
\end{align*}
To conclude, we obtain
$$
    \frac{\mathbb{P} [ X_1 + \ldots + X_n < x - t ]}{\mathbb{P} [ X_1 + \ldots + X_n < x]} \leq \left( \frac{x-t}{x} \right)^q .
$$
But $ q = \frac{n-r}{N_0} > \frac{n-N_0}{N_0} \geq 0$, and since $0 \leq \frac{x-t}{x} \leq 1 $, we finally get:
\begin{eqnarray*}
    \frac{\mathbb{P} [ X_1 + \ldots + X_n < x - t ]}{\mathbb{P} [ X_1 + \ldots + X_n < x]} \leq 
    \left( \frac{x-t}{x} \right)^{q} \leq  \left( \frac{x-t}{x} \right)^{ \frac{n}{N_0}-1 }.
\end{eqnarray*}
By applying the dominated convergence theorem inside equation \eqref{eq:proof_thm_4_6}, we obtain the desired result. \\
\underline{Assume that $(X_n)_{n \in \mathbb{N}^*} \in F_3$}. Let $x>0$ and $Atom^* = \{ t > 0, \mathbb{P}[X_1 = t ] > 0 \}  = \{x_i, \ i \in \mathbb N\}$ be the set of positive atoms of $X_1$. Define for any $t>0$
\begin{eqnarray*}
x_{\min} & =&  \inf \{ x_i  \in Atom^* \},\quad M_{\max,t} = \left\lceil \frac{t}{x_{\min}} \right\rceil ,\quad \Xi_t = \{ x_i \in Atom^* ; x_i < t \}  \\
\Diamond_t & = & \left\{ (x_{i_1}, \dots ,x_{i_k}) \in (Atom^*)^k; \; k \leq M_{\max,t} \; \text{and} \; \sum_{j=1}^k x_{i_j} < t \right\} .
\end{eqnarray*}
To simplify the formulas, let us re-write for $n \geq M_{\max,t}$, $k \leq M_{\max,t}$ and $ (x_{i_1}, \dots ,x_{i_k}) \in \Diamond_t$:
\begin{eqnarray*}
    B_{n,t,(x_{i_1}, \dots ,x_{i_k})} = \{ \exists j_1<\ldots<j_k \leq n; X_{j_1} = x_{i_1}, \ldots ,X_{j_k} = x_{i_k} \} \cap \{ \forall i \leq n+1;\forall h \leq k; i \neq j_h; X_i = 0 \}.
\end{eqnarray*}
The sets $\{ B_{n,t,(x_{i_1}, \dots ,x_{i_k})}, (x_{i_1}, \dots ,x_{i_k}) \in \Diamond_t \}$ are defined to constitute a partition of the event $\{ S_n < t \}$.  
Then consider $n \in \mathbb{N}^*$, $n \geq M_{\max,x}$. We obtain 
\begin{align} \nonumber
    \mathbb{P} [ S_{n+1} < x] = & \mathbb{P} [ S_{n+1} < x; X_{n+1} = 0 ] + \sum_{x_i \in \Xi_x} \mathbb{P} [ S_{n+1} < x; X_{n+1} = x_i] \\ \nonumber
    = & \mathbb{P} [X_1 = 0] \mathbb{P} [ S_{n} < x ] + \sum_{x_i \in \Xi_x} \mathbb{P} [X_1 = x_i] \mathbb{P} [ S_{n} < x - x_i] \\ \nonumber
    \leq & \mathbb{P} [X_1 = 0] \mathbb{P} [ S_{n} < x ] + \mathbb{P} [ S_{n} < x - x_{\min}] \sum_{x_i \in \Xi_x} \mathbb{P} [X_1 = x_i] \\ \label{eq:proof_atomic_1}
    \leq & \mathbb{P} [X_1 = 0] \mathbb{P} [ S_{n} < x ] + \mathbb{P} [ S_{n} < x - x_{\min}] . 
\end{align}
We prove that there exists $C_x>0$ such that
\begin{eqnarray*}
    \frac{\mathbb{P} [ S_{n} <  x - x_{\min}]}{\mathbb{P} [ S_{n} < x]} \leq \frac{C_x}{n} .
\end{eqnarray*}
Firstly we have 
\begin{align*}
    \mathbb{P} [ S_{n} < x - x_{\min}] = &  \sum_{k=0}^{M_{\max,x-x_{\min}}} \sum_{(x_{i_1}, \dots ,x_{i_k}) \in \Diamond_{x - x_{\min}}} \mathbb{P} [ S_{n} < x - x_{\min}; B_{n,x - x_{\min},(x_{i_1}, \dots ,x_{i_k})}] \\
    = & \sum_{k=0}^{M_{\max,x-x_{\min}}} \sum_{(x_{i_1}, \dots ,x_{i_k}) \in \Diamond_{x - x_{\min}}} \mathbb{P} [ B_{n,x - x_{\min},(x_{i_1}, \dots ,x_{i_k})}] \\
    = & \sum_{k=0}^{M_{\max,x-x_{\min}}} \sum_{(x_{i_1}, \dots ,x_{i_k}) \in \Diamond_{x - x_{\min}}} \frac{\mathbb{P} [ B_{n,x - x_{\min},(x_{i_1}, \dots ,x_{i_k})}]}{\mathbb{P} [ B_{n,x,(x_{i_1}, \dots ,x_{i_k}, x_{\min})}]} \mathbb{P} [ B_{n,x,(x_{i_1}, \dots ,x_{i_k}, x_{\min})} ].
\end{align*}
Note that
\begin{align*}
    \mathbb{P} [ B_{n,x - x_{\min},(x_{i_1}, \dots ,x_{i_k})}] = & \sum_{1 \leq j_1 < \dots < j_k \leq n} \mathbb{P} [ X_{j_1} = x_{i_1}, \ldots ,X_{j_k} = x_{i_k} ; \forall i \leq n+1;\forall h \leq k; i \neq j_h; X_i = 0 ] \\
    = & \sum_{1 \leq j_1 < \dots < j_k \leq n} \mathbb{P} [X_1 = 0]^{n-k} \prod_{j=1}^k \mathbb{P} [X_1 = x_{i_j}] \\
    = & \binom{n}{k} \mathbb{P} [X_1 = 0]^{n-k} \prod_{j=1}^k \mathbb{P} [X_1 = x_{i_j}].
\end{align*}
Moreover
\begin{align*}
    & \mathbb{P} [ B_{n,x,(x_{i_1}, \dots ,x_{i_k}, x_{\min})}] \\
    & =  \sum_{1 \leq j_1 < \dots < j_{k+1} \leq n} \mathbb{P} [ X_{j_1} = x_{i_1}, \ldots ,X_{j_k} = x_{i_k}, X_{j_{k+1}} = x_{\min} ; \forall i \leq n+1;\forall h \leq k; i \neq j_h; X_i = 0 ] \\
    & =  \sum_{1 \leq j_1 < \dots < j_{k+1} \leq n} \mathbb{P} [X_1 = 0]^{n-k-1} \mathbb{P} [X_1 = x_{\min}] \prod_{j=1}^k \mathbb{P} [X_1 = x_{i_j}] \\
    & =  \binom{n}{k+1} \mathbb{P} [X_1 = 0]^{n-k-1} \mathbb{P} [X_1 = x_{\min}] \prod_{j=1}^k \mathbb{P} [X_1 = x_{i_j} ].
\end{align*}
Therefore
\begin{eqnarray*}
    \frac{\mathbb{P} [ B_{n,x - x_{\min},(x_{i_1}, \dots ,x_{i_k})}]}{\mathbb{P} [ B_{n,x,(x_{i_1}, \dots ,x_{i_k}, x_{\min})}]} = \frac{k+1}{n-k} \frac{\mathbb{P} [X_1 = 0]}{\mathbb{P} [X_1 = x_{\min}]} \leq \frac{ M_{\max,x-x_{\min}} +1 }{n- M_{\max,x-x_{\min}}} \frac{\mathbb{P} [X_1 = 0]}{\mathbb{P} [X_1 = x_{\min}]}.
\end{eqnarray*}
To conclude we obtain
\begin{align*}
    \mathbb{P} [ S_{n} < x - x_{\min}] \leq & \frac{ M_{\max,x-x_{\min}} +1 }{n- M_{\max,x-x_{\min}}} \frac{\mathbb{P} [X_1 = 0]}{\mathbb{P} [X_1 = x_{\min}]} \sum_{k=0}^{M_{\max,x-x_{\min}}} \sum_{(x_{i_1}, \dots ,x_{i_k}) \in \Diamond_{x - x_{\min}}}  \mathbb{P} [ B_{n,x,(x_{i_1}, \dots ,x_{i_k}, x_{\min})}] \\
    \leq & \frac{ M_{\max,x-x_{\min}} +1 }{n- M_{\max,x-x_{\min}}} \frac{\mathbb{P} [X_1 = 0]}{\mathbb{P} [X_1 = x_{\min}]} \mathbb{P} [ S_{n} < x] \\
    \leq & \frac{  (M_{\max,x-x_{\min}} +1)^2 }{n} \frac{\mathbb{P} [X_1 = 0]}{\mathbb{P} [X_1 = x_{\min}]} \mathbb{P} [ S_{n} < x]
\end{align*}
since $ n \geq M_{\max,x} = M_{\max,x-x_{\min}} + 1$. We only have to consider $C_x =  M_{\max,x-x_{\min}} (M_{\max,x-x_{\min}} + 1) \frac{\mathbb{P} [X_1 = 0]}{\mathbb{P} [X_1 = x_{\min}]}$. Dividing \eqref{eq:proof_atomic_1} by $\mathbb{P} [S_n < x ]$ we finally get the announced result
\begin{eqnarray*}
    \left| \frac{\mathbb{P} [ S_{n+1} < x] }{ \mathbb{P} [ S_{n} < x] } - \mathbb{P} [X_1 = 0] \right| \leq \frac{C_x}{n}.
\end{eqnarray*}
\qed

\paragraph{Proof of Theorem \ref{theo:identifiability}}
Given $\theta , \theta' \in \Theta$, assume that:
\begin{eqnarray*}
    f_{(Y,\Delta),\theta} (t, \delta) = f_{(Y,\Delta),\theta'} (t, \delta), \qquad \forall t \geq 0, \forall \delta \in \{0;1\}.
\end{eqnarray*}
\medskip 
The remainder of the proof is given as follows: we use the asymptotic behaviour of the hazard function to prove the intensity parameter $\lambda$. We then use the previous identification to deduce that $I_n$ is verified. 
\noindent {\bf Step 1.} The first step consists of proving that $\lambda = \lambda'$. 

\noindent Under {\bf (H1)}, there exists a positive constant $\kappa_1 >0$ such that $X_1$ or $Z_1$ is greater than $\kappa_{1}$ almost surely. A priori $\kappa_1=\kappa_1(\alpha,\beta)$ depends on the parameter $\alpha$ or $\beta$.  

Now if we define $N_x(\alpha) = \inf \{ n \in \mathbb{N}^* \; | \; \; X_1 + \ldots + X_n \geq x, \; \; \text{a.s.} \}$ and $N_z(\beta) = \inf \{ n \in \mathbb{N}^* \; | \; \; Z_1 + \ldots + Z_n \geq z, \; \; \text{a.s.} \}$, the quantity $N_{\min}(\alpha,\beta)  = \min (N_x(\alpha),N_z(\beta))$ is finite, whatever the values of the parameters $\alpha$ and $\beta$. Although $\kappa_1(\alpha,\beta)$ and $\kappa_1(\alpha',\beta')$ could different, we will see that necessarily $N_{\min}(\alpha,\beta) =N_{\min}(\alpha',\beta')$. From Lemma \ref{lem:dens_fct_censor} we can re-write the density function as
 \begin{align*}
     f_{(Y, \Delta),\theta} (t, 1) = & \sum_{n=0}^{N_{\min}(\alpha,\beta) - 1 }  \big[ c_{n,X}(\alpha) - c_{n+1,X}(\alpha)  \big] c_{n,Z}(\beta)  \lambda e^{- \lambda t} \frac{(\lambda t)^n}{n!}.
\end{align*}
Then for all $t$, $f_{(Y,\Delta),\theta} (t, 1) = f_{(Y,\Delta),\theta'} (t, 1) $ if and only if 
\begin{align*}
  &    \sum_{n=0}^{N_{\min}(\alpha,\beta) -1}  [c_{n,X}(\alpha) - c_{n+1,X}(\alpha) ] c_{n,Z}(\beta) \lambda e^{- \lambda t} \frac{(\lambda t)^n}{n!} \\
  & \quad =  \sum_{n=0}^{N_{\min}(\alpha',\beta') -1 }  [c_{n,X}(\alpha') - c_{n+1,X}(\alpha') ] c_{n,Z}(\beta') \lambda' e^{- \lambda' t} \frac{(\lambda' t)^n}{n!}, 
\end{align*}
which implies for all $t \geq 0$
\begin{align*}
\dfrac{\lambda'}{\lambda}e^{ (\lambda' - \lambda)t  } & =   \frac{\sum_{n=0}^{N_{\min}(\alpha,\beta) -1}  [c_{n,X}(\alpha) - c_{n+1,X}(\alpha) ] c_{n,Z}(\beta)  \frac{(\lambda t)^n}{n!}}{\sum_{n=0}^{N_{\min}(\alpha',\beta') -1 }  [c_{n,X}(\alpha') - c_{n+1,X}(\alpha') ] c_{n,Z}(\beta')  \frac{(\lambda' t)^n}{n!}} \\
\Rightarrow{} \dfrac{\lambda'}{\lambda}e^{ (\lambda' - \lambda)t  } & \underset{t \xrightarrow{} + \infty}{ \sim } \frac{  [c_{N_{\min}(\alpha,\beta) -1,X}(\alpha) - c_{N_{\min}(\alpha,\beta),X}(\alpha) ] c_{N_{\min}(\alpha,\beta) -1,Z}(\beta)  \frac{(\lambda t)^{N_{\min}(\alpha,\beta) -1}}{(N_{\min}(\alpha,\beta) -1)!}}{  [c_{N_{\min}(\alpha',\beta') -1,X}(\alpha') - c_{N_{\min}(\alpha',\beta'),X}(\alpha') ] c_{N_{\min}(\alpha',\beta') -1,Z}(\beta')  \frac{(\lambda' t)^{N_{\min}(\alpha',\beta') -1}}{(N_{\min}(\alpha',\beta') -1)!}} \\
& \underset{t \xrightarrow{} + \infty}{ \sim } C t^{N_{\min}(\alpha,\beta) - N_{\min}(\alpha',\beta')}
\end{align*}
for some constant $C>0$. This equivalence at infinity is only possible if $\lambda = \lambda'$ and $N_{min}(\alpha,\beta)=N_{\min}(\alpha',\beta')$. 

\medskip
\noindent Under Condition {\bf (H2.i)}, from Theorem \ref{theo:behaviour_c_(n+1)/c_n}, we deduce that for any $(\alpha, \beta) \in \Theta_1 \times \Theta_2$:
\begin{eqnarray*}
    \lim_{n \xrightarrow{} + \infty }  \frac{\mathbb{P}_{\alpha} [ X_1 + \ldots + X_{n+1} < x] \mathbb{P}_{\beta} [ Z_1 + \ldots + Z_{n+1} < z]}{ \mathbb{P}_{\alpha} [ X_1 + \ldots + X_{n} < x] \mathbb{P}_{\beta} [ Z_1 + \ldots + Z_{n} < z]} = 0
\end{eqnarray*}
Since 
\begin{eqnarray*}
    f_{(Y,\Delta),(\lambda,\alpha,\beta)} (t, \delta) = f_{(Y,\Delta),(\lambda',\alpha',\beta')} (t, \delta), \quad \quad \quad \quad \forall t \geq 0, \forall \delta \in \{0;1\}
\end{eqnarray*}
we have for all $t \geq 0$
$ h_{Y,(\lambda,\alpha,\beta)} (t) = h_{Y,(\lambda',\alpha',\beta')} (t)$. Thus Lemma \ref{lem:tech_resulu} ensures that the hazard function $h_{Y,\theta}$ given by \eqref{eq:hazard-fct} converges to $\lambda$ as $t$ tends to infinity. We deduce that $\lambda = \lambda'$.  

\medskip
\noindent If Condition {\bf (H2.ii)} holds, from Theorem \ref{theo:behaviour_c_(n+1)/c_n}, we deduce that for any $(\alpha, \beta) \in \Theta_1 \times \Theta_2$:
$$
    \lim_{n \xrightarrow{} + \infty }  \frac{\mathbb{P}_{\alpha} [ X_1 + \ldots + X_{n+1} < x] }{ \mathbb{P}_{\alpha} [ X_1 + \ldots + X_{n} < x] } =\lim_{n \xrightarrow{} + \infty }  \frac{c_{n+1,X}(\alpha) }{ c_{n,X}(\alpha) }= \mathbb{P}_{\alpha} [X_1 = 0]  $$
    and 
$$  \lim_{n \xrightarrow{} + \infty }  \frac{ \mathbb{P}_{\beta} [ Z_1 + \ldots + Z_{n+1} < z]}{ \mathbb{P}_{\beta} [ Z_1 + \ldots + Z_{n} < z]} =\lim_{n \xrightarrow{} + \infty }  \frac{c_{n+1,Z}(\beta) }{ c_{n,Z}(\beta) }= \mathbb{P}_{\beta} [Z_1 = 0] .$$
Here we consider the censoring indicator from Model II. Thus, in order to use Lemma \ref{lem:tech_resulu}, we consider the functions: for all $t \geq 0$
\begin{align*}
    \frac{f_{(Y,\Delta),\theta}(t,2)}{f_{(Y,\Delta),\theta}(t,0)} = & \frac{\sum_{n=0}^{+ \infty}  [c_{n,X}(\alpha) - c_{n+1,X}(\alpha) ] [c_{n,Z}(\beta) - c_{n+1,Z}(\beta) ] \lambda e^{- \lambda t} \frac{(\lambda t)^n}{n!}}{\sum_{n=0}^{+ \infty} [c_{n,Z}(\beta) - c_{n+1,Z}(\beta) ] c_{n+1,X}(\alpha) \lambda e^{- \lambda t} \frac{(\lambda t)^n}{n!}} \\
    \frac{f_{(Y,\Delta), \theta}(t,2)}{f_{(Y,\Delta),\theta}(t,1)} = & \frac{\sum_{n=0}^{+ \infty}  [c_{n,X}(\alpha) - c_{n+1,X}(\alpha) ] [c_{n,Z}(\beta) - c_{n+1,Z}(\beta) ] \lambda e^{- \lambda t} \frac{(\lambda t)^n}{n!}}{\sum_{n=0}^{+ \infty} [c_{n,X}(\alpha) - c_{n+1,X}(\alpha) ] c_{n+1,Z}(\beta) \lambda e^{- \lambda t} \frac{(\lambda t)^n}{n!}}
\end{align*}
Then from Lemma \ref{lem:tech_resulu}, we have:
\begin{align*}
    \lim_{t \xrightarrow{} + \infty} \frac{f_{(Y,\Delta),\theta}(t,2)}{f_{(Y,\Delta),\theta}(t,0)} = & \lim_{n \xrightarrow{} + \infty} \frac{[c_{n,X}(\alpha) - c_{n+1,X}(\alpha) ] [c_{n,Z}(\beta) - c_{n+1,Z}(\beta) ]}{[c_{n,Z}(\beta) - c_{n+1,Z}(\beta) ] c_{n+1,X}(\alpha)} =\dfrac{1}{ \mathbb{P}_{\alpha} [ X_1 = 0]} - 1 \\
    \lim_{t \xrightarrow{} + \infty} \frac{f_{(Y,\Delta),\theta'}(t,2)}{f_{(Y,\Delta),\theta'}(t,0)} = &  \lim_{n \xrightarrow{} + \infty} \frac{[c_{n,X}(\alpha') - c_{n+1,X}(\alpha') ] [c_{n,Z}(\beta') - c_{n+1,Z}(\beta') ]}{[c_{n,Z}(\beta') - c_{n+1,Z}(\beta') ] c_{n+1,X}(\alpha')} = \dfrac{1}{\mathbb{P}_{\alpha'} [ X_1 = 0]} - 1 
\end{align*}
Consequently $\mathbb{P}_{\alpha} [ X_1 = 0] = \mathbb{P}_{\alpha'} [ X_1 = 0]$. \\
The same way using the limit at infinity of $ \frac{f_{(Y,\Delta),\theta}(t,2)}{f_{(Y,\Delta),\theta}(t,1)} $, we get $\mathbb{P}_{\beta} [ Z_1 = 0] = \mathbb{P}_{\beta'} [ Z_1 = 0]$. 
Finally, studying the limit at infinity of the hazard function of the random variable $Y$, we obtain that $ \lambda(1 - \mathbb{P} [X_1 = 0] \mathbb{P} [Z_1 = 0]) = \lambda'(1 - \mathbb{P}_{\alpha'} [X_1 = 0] \mathbb{P}_{\beta'} [Z_1 = 0])$ and since $\mathbb{P}_{\alpha} [X_1 = 0] \mathbb{P}_{\beta} [Z_1 = 0] = \mathbb{P}_{\alpha'} [X_1 = 0] \mathbb{P}_{\beta'} [Z_1 = 0] < 1$, we deduce that $\lambda = \lambda'$.

\bigskip
\noindent {\bf Step 2.} Since $\lambda = \lambda'$, from evaluating all the derivatives of the density function at $t=0$, an induction gives us the property $I_{\infty}$ if we are under assumption {\bf (H2)}, while under {\bf (H1)}, $I_{N_{\max}}$ is verified. 
\qed

\bigskip 

As explained in Section \ref{sect:prop_estim}, theorems from section \ref{sect:prop_estim} are based on convergence results from \cite{Whi82}. Six assumptions are introduced in this paper and can be verified to obtain the existence, its consistency and its asymptotic normality of the QMLE. 
In order to prove those properties, we define the assumptions {\bf (H3)} (compactness of the set of parameters) and {\bf (H4)} (regularity conditions) detailed here: 
\begin{assumption}[Condition {\bf (H4)}] \label{ass:consistency}
There exist two measures $\nu_1$ and $\nu_2$ respectively defined on the measurable space $(\mathbb{R}_+, \mathcal{B}(\mathbb{R}_+))$ such that:
\begin{enumerate}
    \item For all $\alpha \in \Theta_1$, $\mathbb{P}_{\alpha}$ admits a Radon-Nikodym measure with respect to $\nu_1$, $f_1( \; . \; ,\alpha) = d \mathbb{P}_{\alpha} / d \nu_1$ such that: 
    \begin{enumerate}
         \item For all $u \in Dom(f_1) \cap [0,x]$; $\alpha \mapsto f_1(u,\alpha)$ is continuous on $\Theta_1$. 
          \item There exists a $\nu_1$-measurable and integrable function $\overline{f_1}$ on $[0,x]$ such that $\forall \alpha \in \Theta_1$, $f_1(u,\alpha) \leq \overline{f_1} (u)$, for almost every $u \in [0,x]$.
    \end{enumerate}
    \item For all $\beta \in \Theta_2$, $\mathbb{P}_{\beta}$ admits a Radon-Nikodym measure with respect to $\nu_2$, $f_2( \; . \; ,\beta) = d \mathbb{P}_{\beta} / d \nu_2$ such that:
   \begin{enumerate}
    \item For all $u \in Dom(f_2) \cap [0,z]$; $\beta \mapsto f_2(u,\beta)$ is continuous $\Theta_2$.
    \item There exists a $\nu_2$-measurable and integrable function $\overline{f_2}$ on $[0,z]$ such that $\forall \beta \in \Theta_2$, $f_2(u,\beta) \leq \overline{f_2} (u)$, for almost every $u \in [0,z]$.
    \end{enumerate}
\end{enumerate}
\end{assumption}

\paragraph{Proof of Theorem \ref{thm:existence_MLE}.}
In the proof of this theorem, we only discuss the case where we consider the censoring couple $(Y,\Delta)$ from Model I. The proof with the strict censoring couple from Model II uses the exact same arguments.
To prove this theorem, we verify if A1 and A2 from \cite{Whi82} are verified.

\medskip 

\noindent \underline{Assumption A1.}
In our case, the random vector $(Y,\Delta)$ takes values in the Euclidean measurable space $(\mathbb{R}^2, \mathcal{B}(\mathbb{R}^2))$. We consider the measure $v= m \otimes d$, where $m$ denotes the Lebesgue measure on $(\mathbb{R}, \mathcal{B}(\mathbb{R}))$ and $d = \delta_0 + \delta_1$, (where $\delta_a$ is the Dirac measure in $a$). It is clear that the random vector $(Y,\Delta)$ admits a measurable Radon-Nikodym density function with respect to $v$, which is $f_{(Y, \Delta)}= f_{(Y,\Delta),\theta}$ given in Lemma \ref{lem:dens_fct_censor}.

\medskip 

\noindent \underline{Assumption A2.}
As recalled in our statistical context, the model is fully determined by the parameter $\lambda \in \mathbb{R}^*_+ $, the respective families of distribution functions $\{ \mathbb{P}_\alpha , \alpha \in \Theta_1 \}$ and $\{ \mathbb{P}_\beta , \beta \in \Theta_2 \}$, with $\Theta_1$ and $\Theta_2$ respectively subsets of $\mathbb{R}^{d_1}$ and $\mathbb{R}^{d_2}$. 

To ensure Assumption A2, it is required to have a compact subset of parameter, which justifies Condition {\bf (H3)}. It guarantees that $\Lambda \times \Theta_1 \times \Theta_2$ is a compact subset of the Euclidean space $\mathbb{R}^{1+d_1+d_2}$. For any $(\lambda,\alpha,\beta) \in \Lambda \times \Theta_1 \times \Theta_2$, we denote by $F_{(Y,\Delta),(\lambda,\alpha,\beta)}$ the distribution function induced by the parameters vector $\theta=(\lambda,\alpha,\beta)$ and by $\theta^0=(\lambda^0,\alpha^0,\beta^0) \in \Lambda \times \Theta_1 \times \Theta_2$ the true vector of parameters model. 
For any $(\lambda,\alpha,\beta) \in \Lambda \times \Theta_1 \times \Theta_2$, $F_{(Y,\Delta),(\lambda,\alpha,\beta)}$ admits a measurable Radon-Nikodym function with respect to $v$, $f_{(Y,\Delta),(\lambda,\alpha,\beta)} = dF_{(Y,\Delta),(\lambda,\alpha,\beta)} / dv$ (Lemma \ref{lem:dens_fct_censor}). 

The continuity in $\lambda$ of the density function $f_{((Y,\Delta),(\lambda,\alpha,\beta))}$ comes naturally for any $(t,\delta) \in \mathbb{R}_+ \times \{0;1\}$ (it is a uniformly convergent series of function on any compact). To obtain the continuity in $\alpha$ and $\beta$, we need to prove that for any $n \in \mathbb{N}$:
\begin{eqnarray*}
    \alpha \mapsto c_{n,X}(\alpha)= \mathbb{P}_{\alpha} \left[ \sum_{i=0}^n X_i < x \right] \quad \quad \quad \quad \text{and} \quad \quad \quad \quad \beta \mapsto c_{n,Z}(\beta)= \mathbb{P}_{\beta} \left[ \sum_{i=0}^n Z_i < z \right]
\end{eqnarray*}
are continuous functions. Since the proofs use the same arguments, we will only prove the result for the first function. \\
If $n=0$, the functions are constant functions equal to 1, so they are continuous. Let $n \in \mathbb{N}^*$. We can re-write:
\begin{eqnarray*}
        \mathbb{P}_{\alpha} \left[ \sum_{i=0}^n X_i < x \right] = \int_{\mathbb{R}^n_+} \mathbb{1}_{ \{ 0 \leq x_1 + \ldots x_n < x \} } \prod_{i=1}^n f_1(x_i,\alpha) d \nu_1 ( x_1) \ldots d \nu_1 ( x_n). 
\end{eqnarray*}
From \textbf{(H4)} \ref{ass:consistency}-1.(a), it comes that for any $(x_1, \ldots , x_n) \in \{ (y_1, \ldots , y_n) \in Dom(f_1); \; 0 \leq y_1 + \ldots + y_n < x \} $, $ \alpha \mapsto \prod_{i=1}^n f_1(x_i,\alpha)$ is a continuous function. Moreover from {\bf (H4)}-1.(b), for almost every $(x_1, \ldots , x_n) \in \{ (y_1, \ldots , y_n) \in Dom(f_1); \; 0 \leq y_1 + \ldots + y_n < x \} $:
\begin{eqnarray*}
    \prod_{i=1}^n f_1(x_i,\alpha)  \leq \prod_{i=1}^n \overline{f_1}(x_i)
\end{eqnarray*}
where 
\begin{eqnarray*}
    (x_1, \ldots , x_n) \mapsto \prod_{i=1}^n \overline{f_1}(x_i)
\end{eqnarray*}
is an integrable function on the domain $\{ (y_1, \ldots , y_n) \in Dom(f_1); \; 0 \leq y_1 + \ldots + y_n < x \} $. Finally, due to the dominated convergence theorem (applied to the continuity of parametrized integrals), we deduce that $\alpha \mapsto \mathbb{P} [ \sum_{i=0}^n X_i < x]$ is continuous. 
That proves that assumption A2 from \cite{Whi82} is verified. Thus, we can apply \cite[Theorem 2.1] {Whi82} and the statement of Theorem \ref{thm:existence_MLE} is proved. 
\qed

\paragraph{Auxiliary lemma.} In order to prove Theorem \ref{thm:consistency}, we need the following lemma.

\begin{lemma} \label{lem:tech_behavior_c_n}
Assume {\bf (H5)} is verified and that for some $N \in \mathbb{N}$, $\alpha^* \in \Theta_1$, we have:
\begin{eqnarray*}
    \mathbb{P}_{\alpha^*} [X_1 + \ldots + X_N < x ] = \mathbb{P}_{\alpha^*} [X_1 + \ldots + X_{N+1} < x ]
\end{eqnarray*}
Then, $\forall \alpha \in \Theta_1$, $\mathbb{P}_{\alpha} [X_1 + \ldots + X_N < x ] = \mathbb{P}_{\alpha} [X_1 + \ldots + X_{N+1} < x ] $.    
\end{lemma}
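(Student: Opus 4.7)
The plan is to combine Lemma \ref{lem:equality_probab_term} applied at the parameter $\alpha^*$ with the fact that, by (H5), the support of $X_1$ does not depend on $\alpha$. Writing $S_n = X_1 + \ldots + X_n$ and $K = \mathrm{supp}(X_1)$, I first observe that, since the $X_i$ are i.i.d., the support of $S_n$ equals the closed $n$-fold Minkowski sum $\overline{K + \ldots + K}$, so in particular $\inf \mathrm{supp}(S_n) = n \inf K$ and $\sup \mathrm{supp}(S_n) = n \sup K$; by (H5) these quantities are independent of $\alpha$.

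Applying Lemma \ref{lem:equality_probab_term} under $\mathbb{P}_{\alpha^*}$ yields the dichotomy $\mathbb{P}_{\alpha^*}[S_N < x] \in \{0,1\}$, and the hypothesis then forces $\mathbb{P}_{\alpha^*}[S_{N+1} < x]$ to take the same value. If $\mathbb{P}_{\alpha^*}[S_N < x] = 0$, the open set $[0,x)$ has zero mass under $S_N$, so by the topological characterization of the support $\mathrm{supp}(S_N) \subset [x, +\infty)$, i.e.\ $\inf K \geq x/N$. This condition depends only on $K$, hence by (H5) it persists under any $\alpha \in \Theta_1$: under $\mathbb{P}_\alpha$, $S_N \geq x$ a.s., so $\mathbb{P}_\alpha[S_N < x] = 0$, and the a.s.\ monotonicity $S_{N+1} \geq S_N$ gives $\mathbb{P}_\alpha[S_{N+1} < x] = 0$.

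If on the contrary $\mathbb{P}_{\alpha^*}[S_{N+1} < x] = 1$, independence of the increments forces $(N+1) \sup K \leq x$, i.e.\ $\sup K \leq x/(N+1)$. Under any $\alpha$, this immediately gives $S_N \leq N \sup K \leq N x/(N+1) < x$ a.s., hence $\mathbb{P}_\alpha[S_N < x] = 1$; similarly $S_{N+1} \leq x$ a.s. What remains is to rule out $\mathbb{P}_\alpha[S_{N+1} = x] > 0$, which would otherwise break the equality.

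The main obstacle is precisely this borderline configuration $\sup K = x/(N+1)$: an atom of $X_1$ at the top of its support under $\alpha$ could a priori generate an atom of $S_{N+1}$ at $x$. The hypothesis $\mathbb{P}_{\alpha^*}[S_{N+1} < x] = 1$ already rules out any atom of $X_1$ at $\sup K$ under $\alpha^*$; to transfer this no-atom property to every $\alpha$, I would revisit the structural analysis carried out in the proof of Lemma \ref{lem:equality_probab_term} (its Case 2), which characterizes the admissible configurations of atoms of $X_1$ in terms of the common support $K$ alone, and then conclude via (H5) that no such atom can appear under any $\alpha$.
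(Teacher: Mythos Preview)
Your dichotomy via Lemma \ref{lem:equality_probab_term} and your treatment of the case $\mathbb{P}_{\alpha^*}[S_N<x]=0$ are correct and match the paper's argument. The gap is in the case $\mathbb{P}_{\alpha^*}[S_{N+1}<x]=1$ with $\sup K = x/(N+1)$. Hypothesis {\bf (H5)} only transfers the \emph{support} of $X_1$ across parameters, not the location of atoms: if under $\alpha^*$ the law of $X_1$ has no atom at $\sup K$, nothing in {\bf (H5)} prevents another $\alpha$ from putting an atom there (e.g.\ $K=[0,1]$, $\alpha^*$ uniform, $\alpha$ a mixture with a Dirac at $1$). Your proposed fix --- extracting from Case~2 of the proof of Lemma \ref{lem:equality_probab_term} a characterization of atoms ``in terms of the common support $K$ alone'' --- does not work: that case shows $M_{\max}$ is not an atom, but the argument is measure-specific (it uses $\mathbb{P}[x-M_{\max}\le S_n<x]=\mathbb{P}[S_n=x-M_{\max}]$ for the given law), not a support-only statement.

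The paper sidesteps the atom issue entirely. From the proof of Lemma \ref{lem:equality_probab_term} one also gets, under the hypothesis $\mathbb{P}_{\alpha^*}[S_N<x]=\mathbb{P}_{\alpha^*}[S_{N+1}<x]$, the support inclusion
\[
\mathrm{Supp}(S_N)\cap[0,x)\ \subset\ [0,\,x-M_{\max}),\qquad M_{\max}=\sup K,
\]
which \emph{is} a statement about the support only and hence holds for every $\alpha$ by {\bf (H5)}. Then for any $\alpha$,
\[
\mathbb{P}_\alpha[S_N<x]=\mathbb{P}_\alpha[S_N<x-M_{\max}]=\mathbb{P}_\alpha[S_N<x-M_{\max},\ X_{N+1}\le M_{\max}]\le \mathbb{P}_\alpha[S_{N+1}<x],
\]
and the reverse inequality is trivial. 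This two-line argument replaces your borderline analysis and requires no control on atoms.
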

\begin{proof}
Assume that for some $N \in \mathbb{N}$, $\alpha^* \in \Theta_1$, we have:
\begin{eqnarray*}
     \mathbb{P}_{\alpha^*} [X_1 + \ldots + X_N < x ] = \mathbb{P}_{\alpha^*} [X_1 + \ldots + X_{N+1} <   x].
\end{eqnarray*}
From lemma \ref{lem:equality_probab_term}, this quantity is zero or one. If the lemma assumption is verified, then for some $\alpha^*$, $\textit{Supp}\{ X_1 + \ldots + X_N \} \subset [x, + \infty[$ and consequently, $\textit{Supp}\{ X_1 \} \subset [x/N, + \infty[$.
Since the support of $X_1$ does not depend on $\alpha^*$, then the same property is verified for any $\alpha \in \Theta_1$ and we obtain
\begin{eqnarray*}
    0 \leq \mathbb{P}_{\alpha} [X_1 + \ldots + X_{N+1} < x ] \leq \mathbb{P}_{\alpha} [X_1 + \ldots + X_{N} < x ] = 0.
\end{eqnarray*}

If $\mathbb{P}_{\alpha^*} [X_1 + \ldots + X_N < x ] = 1$, as in the proof of Lemma \ref{lem:equality_probab_term}, we define $M_{\max} = \sup \{ t \in \textit{Supp}\{ X_1 \} \}$ ; note that $M_{\max}$ does not depend on $\alpha^*$. Once again from the proof of Lemma \ref{lem:equality_probab_term}, we have $M_{\max} \leq x$ and that $\textit{Supp} \{ X_1 + \ldots + X_N \} \cap [0,x[ \subset [0,x-M_{\max}[ $.

Now let $\alpha \in \Theta_1$. Since $\textit{Supp} \{ X_1 + \ldots + X_N \} \cap [0,x[ \subset [0,x-M_{\max}[ $, 
and the support of $X_1$ does not depend on the parameter. We have
\begin{align*}
    & \mathbb{P}_{\alpha} [X_1 + \ldots + X_{N} < x ] \\
    & = \mathbb{P}_{\alpha} [X_1 + \ldots + X_{N} < x - M_{\max} ] \mathbb{P}_{\alpha} [X_1 \leq M_{\max} ] \\
    & = \mathbb{P}_{\alpha} [X_1 + \ldots + X_{N} <x - M_{\max} ; X_{N+1} < M_{\max} ]  \leq \mathbb{P}_{\alpha} [X_1 + \ldots + X_{N+1} < x].
\end{align*}
So $\mathbb{P}_{\alpha} [X_1 + \ldots + X_{N} < x ] = \mathbb{P}_{\alpha} [X_1 + \ldots + X_{N+1} < x ]$, and the proof is done. 
\end{proof}

\paragraph{Notations}
In order to enlighten the formulas in the next theorem, we define
\begin{align*}
    \lambda_{\min} & = \min \{ \lambda \in \Lambda \}, \qquad  \lambda_{\max}  = \max \{ \lambda \in \Lambda \} \\ 
    m_{n,X} & = \min \{ c_{n,X}(\alpha), \quad \alpha \in \Theta_1  \},  \quad
    \overset{\sim}{m}_{n,X}  = \min \{ c_{n,X}(\alpha) - c_{n+1,X}(\alpha), \quad \alpha \in \Theta_1  \}  \\
    m_{n,Z} & = \min \{ c_{n,Z}(\beta), \quad \beta \in \Theta_2 \}  , \quad
    \overset{\sim}{m}_{n,Z} = \min \{ c_{n,Z}(\beta) - c_{n+1,Z}(\beta), \quad \beta \in \Theta_2  \}.
\end{align*}
Under Conditions {\bf (H3)} (compactness) and {\bf (H4)} (regularity), these quantities are finite and non-negative. 

\paragraph{Proof of Theorem \ref{thm:consistency}.}
In the proof of this theorem, we will only discuss on the case where we consider the censoring couple $(Y,\Delta)$ from Model I. The proof for the censoring couple $(Y,\Delta)$ from Model II uses the exact same arguments since the density of the model has a similar structure. \\
To prove this theorem, we need to check Conditions A1 to A3 of \cite{Whi82}. A1 and A2 have already been proved in the proof of Theorem \ref{thm:existence_MLE}. \\
We firstly prove that there exist $N_1$, $N_2$, $N_3$ in $\mathbb{N}$ such that $\overset{\sim}{m}_{N_1,X} m_{N_1,Z} > 0 $, $\overset{\sim}{m}_{N_2,Z} m_{N_2 +1,X} > 0 $ and if $f_{(Y,\Delta)}( \, . \, ,0) \neq 0$, $\overset{\sim}{m}_{N_3,X} \overset{\sim}{m}_{N_3,Z} > 0 $. 
To prove it, we use the fact that $\mathbb{P} \left[ T = C   \right] $ is positive. Hence, we deduce from the formula of Lemma \ref{lem:null_prob_induces_non_censoring} the existence of $N_1 \in \mathbb{N}$ such that
$$
(c_{N_1,X}(\alpha^0) - c_{N_1+1,X}(\alpha^0))(c_{N_1,Z}(\beta^0)-c_{N_1+1,Z}(\beta^0) ) >  0.
$$
And by applying Lemma \ref{lem:tech_behavior_c_n} using a reductio ad absurdum, we deduce that $\overset{\sim}{m}_{N_1,X} \overset{\sim}{m}_{N_1,Z} > 0 $.
Since 
$$
    (c_{n,X}(\alpha^0) - c_{n+1,X}(\alpha^0))(c_{n,Z}(\beta^0)-c_{n+1,Z}(\beta^0)) \leq 
    (c_{n,X}(\alpha^0) - c_{n+1,X}(\alpha^0))c_{n,Z}(\beta^0),
$$
and since the support of $Z_1$ does not depend on $\beta$, a similar proof justifies that $\overset{\sim}{m}_{N_1,X} m_{N_1,Z} > 0 $.
To prove the third inequality $\overset{\sim}{m}_{N_2,Z} m_{N_2 +1,X} > 0 $ for some $N_2 \in \mathbb{N}$  when $f_{(Y,\Delta)}( \, . \, ,0) \neq 0$, assume ad absurdum that it is not the case. We can prove from Lemma \ref{lem:tech_behavior_c_n} and from the fact that the support of $X_1$ does not depend on $\alpha$ that
$$
     f_{(Y,\Delta),\theta}(t,0) = 0 \quad \quad \quad \forall t \geq 0, \ \forall \theta \in \Theta.
$$
Hence $T \leq C$ a.s., which is also absurd.
\\
\noindent \underline{Assumption A3.} Uniqueness of the minimum of the Kullback-Leibler information criterion (hypothesis A3.b) is guaranteed here, by our parametric setting and our identifiability condition. \\
\noindent To have Condition A3.a of \cite{Whi82}, we need to obtain a function that uniformly bounds the family of density functions $f_{(Y,\Delta),\theta}$ on $\Lambda \times \Theta_1 \times \Theta_2$, independently of $\theta$. It is now easy to see that, if $\delta = 1$, for any $\theta \in \Lambda \times \Theta_1 \times \Theta_2$ and any $t > 0$
\begin{eqnarray*}
    e^{- \lambda_{\max} t} \sum_{n \geq 0} \overset{\sim}{m}_{n,X} m_{n,Z} \frac{(\lambda_{\min}t)^n}{n!} \leq f_{(Y,\Delta),\theta} (t,1) \leq e^{- \lambda_{\min} t} e^{ \lambda_{\max} t}.
\end{eqnarray*}
Since there exists $N_1 \in \mathbb{N}$ such that $\overset{\sim}{m}_{N_1,X} m_{N_1,Z} > 0 $, then for any $t > 0$
$$
    e^{- \lambda_{\max} t} \overset{\sim}{m}_{N_1,X} m_{N_1,Z} \frac{(\lambda_{\min}t)^{N_1}}{N_1!} \leq f_{(Y,\Delta),\theta} (t,1) \leq e^{(\lambda_{\max} - \lambda_{\min}) t}.
$$
and thus for any $t > 0$
$$
| \log f_{(Y,\Delta),\theta} (t,1) | \leq \max \left\{ | \lambda_{\max} - \lambda_{\min} | t ; \left| - \lambda_{\max} t + \log \left( \frac{\overset{\sim}{m}_{N_1,X} m_{N_1,Z}}{N_1!} \right) N_1 \log (\lambda_{\min} t ) \right|  \right\}.
$$
The bounding function is independent of $\theta \in \Lambda \times \Theta_1 \times \Theta_2$ and is of linear growth w.r.t. $t$. Hence it is integrable with respect to $dF_{(Y,\Delta),(\lambda^0,\alpha^0,\beta^0)}$, the density function with true parameter $(\lambda^0,\alpha^0,\beta^0)$. Similarly the fact that there exists $N_2$ such that $\overset{\sim}{m}_{N_2,Z} m_{N_2 +1,X} > 0 $ ensures the existence of a similar dominating function when $\delta = 0$ and  when $f_{(Y,\Delta)}( \, . \, ,0) \neq 0$. 

Since in our setting the function $g$ in \cite{Whi82} is equal to $f_{(Y,\Delta),(\lambda^0,\alpha^0,\beta^0)}$, from the previous bounds, it follows that $\mathbb{E}[ \log( g(Y,\Delta) ]$ exists. This achieves the proof of Theorem \ref{thm:consistency}. 
\qed

\bigskip
The following lemma will help to obtain proper uniform dominations of the parametric derivatives of the density functions in the proof of theorem \ref{theo:assymptotic_normaility}.

\begin{lemma} \label{lem::unif_domi}
    Assume that {\bf (H3)}, {\bf (H4)} and {\bf (H5)} are verified. Then there exist $C_1 > 1$ and $ N_1 \in \mathbb{N}^*$ such that 
    $$
   \forall n \geq N_1, \forall \alpha \in \Theta_1, \quad C_1 \mathbb{P}_{\alpha} [X_1 + \ldots + X_{n+1} < x ] \leq \mathbb{P}_{\alpha} [X_1 + \ldots + X_{n} < x ].$$
The same property holds for the sequence $(Z_n)_{n\in \mathbb N}$.    
\end{lemma}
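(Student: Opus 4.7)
The claim is equivalent to exhibiting $\gamma < 1$ and $N_1 \in \mathbb{N}^*$ such that $c_{n+1,X}(\alpha) \leq \gamma\, c_{n,X}(\alpha)$ for every $n \geq N_1$ and every $\alpha \in \Theta_1$: then $C_1 := 1/\gamma > 1$ works, the degenerate case $c_{n,X}(\alpha) = 0$ forcing $c_{n+1,X}(\alpha) = 0$ by the nesting of events. Under \textbf{(H5)} the support of $X_1$ is $\alpha$-independent, so the family classification of Definition \ref{def:families} is constant on $\Theta_1$, and I would split the argument according to whether $X_1 \in F_1$, $F_3$, or $F_2$.

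The $F_1$ case is immediate: if $X_1 \geq \kappa > 0$ a.s., then $c_{n,X}(\alpha) \equiv 0$ as soon as $n > x/\kappa$. For $F_3$, the proof of Theorem \ref{theo:behaviour_c_(n+1)/c_n} already delivers the estimate
\begin{equation*}
    \frac{c_{n+1,X}(\alpha)}{c_{n,X}(\alpha)} \leq \mathbb{P}_\alpha[X_1 = 0] + \frac{C_x(\alpha)}{n},
\end{equation*}
so it suffices to upgrade both terms to uniform bounds. Under \textbf{(H5)} the atoms of $X_1$, in particular $x_{\min}$ and $M_{\max, x - x_{\min}}$, do not depend on $\alpha$; under \textbf{(H4)} the maps $\alpha \mapsto \mathbb{P}_\alpha[X_1 = 0]$ and $\alpha \mapsto \mathbb{P}_\alpha[X_1 = x_{\min}]$ are continuous and strictly positive on $\Theta_1$ (the standing assumption $\mathbb{P}_\alpha[X_1 = 0] < 1$ and the membership of $x_{\min}$ in the common support). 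Compactness of $\Theta_1$ then yields $p^* := \sup_\alpha \mathbb{P}_\alpha[X_1 = 0] < 1$ and $C^* := \sup_\alpha C_x(\alpha) < \infty$; choosing $\gamma := (1 + p^*)/2$ and $N_1 \geq 2 C^*/(1 - p^*)$ closes this case.

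For $F_2$, the same proof provides
\begin{equation*}
    \frac{c_{n+1,X}(\alpha)}{c_{n,X}(\alpha)} \leq \int_0^x \phi(t, \alpha) \left( \frac{x-t}{x} \right)^{n/N_0(\alpha) - 1} dt,
\end{equation*}
where $N_0(\alpha)$ is the smallest integer for which $\phi(\cdot, \alpha)^{*N_0(\alpha)}$ is non-decreasing on $[0,x]$. Bounding $\phi(\cdot, \alpha) \leq \overline{f_1}$ via \textbf{(H4)}, dominated convergence yields uniform decay of the right-hand side to $0$ provided $N_0(\alpha)$ can be bounded uniformly in $\alpha$. Obtaining this uniform control is the main obstacle, since $\alpha \mapsto N_0(\alpha)$ is not automatically continuous. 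I would address it by a compactness-and-contradiction scheme: assuming no uniform $\gamma$ exists, extract $n_k \to \infty$ and $\alpha_k \to \alpha^* \in \Theta_1$ with $c_{n_k+1,X}(\alpha_k)/c_{n_k,X}(\alpha_k) \to 1$, and derive a contradiction with $\lim_n c_{n+1,X}(\alpha^*)/c_{n,X}(\alpha^*) = 0$ through an interchange-of-limits argument justified by the $L^1$-continuity of $\alpha \mapsto \phi(\cdot, \alpha)^{*N}$ inherited from \textbf{(H4)}; alternatively one may strengthen \textbf{(H4)} so that $\{\alpha : N_0(\alpha) \leq N\}$ is open for each $N$ and extract a finite subcover. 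The symmetric statement for $(Z_n)$ is proved identically after exchanging the roles of the $X$- and $Z$-data.
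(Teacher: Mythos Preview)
Your treatment of the $F_1$ and $F_3$ cases is essentially the paper's own argument: in $F_1$ the sequence terminates at an $\alpha$-independent index by \textbf{(H5)}, and in $F_3$ you correctly observe that the explicit constant $C_x(\alpha)$ from the proof of Theorem \ref{theo:behaviour_c_(n+1)/c_n} depends on $\alpha$ only through $\mathbb{P}_\alpha[X_1=0]$ and $\mathbb{P}_\alpha[X_1=x_{\min}]$, both continuous on the compact $\Theta_1$, so the bound is made uniform exactly as you say.

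The $F_2$ case, however, has a genuine gap. Your route requires a uniform bound on $N_0(\alpha)$, and you acknowledge that this is not furnished by \textbf{(H3)}--\textbf{(H5)}. The compactness-and-contradiction sketch does not close: if $\alpha_k\to\alpha^*$ and $n_k\to\infty$ with $c_{n_k+1,X}(\alpha_k)/c_{n_k,X}(\alpha_k)\to 1$, the pointwise fact $c_{n+1,X}(\alpha^*)/c_{n,X}(\alpha^*)\to 0$ is of no help without a \emph{uniform-in-$n$} continuity of the ratio in $\alpha$, which is precisely what you are trying to establish. The $L^1$-continuity of $\alpha\mapsto\phi(\cdot,\alpha)^{*N}$ is for each fixed $N$ and does not control the diagonal sequence. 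Proposing to strengthen \textbf{(H4)} is an admission that the stated hypotheses are insufficient for your line of argument.

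The paper circumvents $N_0(\alpha)$ altogether via a different decomposition. Fix a single integer $N_1$ (uniform by \textbf{(H5)}) with $c_{N_1,X}(\alpha)<1$ for all $\alpha$, and for $n\geq N_1$ write both $c_{n+1,X}(\alpha)$ and $c_{n,X}(\alpha)$ as convolutions against the \emph{same} density $\phi_{S_{n+1-N_1},\alpha}$:
\[
\frac{c_{n+1,X}(\alpha)}{c_{n,X}(\alpha)}
=\frac{\displaystyle\int_0^x \phi_{S_{n+1-N_1},\alpha}(x-t)\,\mathbb{P}_\alpha[S_{N_1}<t]\,dt}
      {\displaystyle\int_0^x \phi_{S_{n+1-N_1},\alpha}(x-t)\,\mathbb{P}_\alpha[S_{N_1-1}<t]\,dt}.
\]
The ratio of integrands is $h_{N_1}(t,\alpha)=\mathbb{P}_\alpha[S_{N_1}<t]/\mathbb{P}_\alpha[S_{N_1-1}<t]$, which extends continuously to $[0,x]\times\Theta_1$, is strictly less than $1$ everywhere (by Lemma \ref{lem:equality_probab_term} and the choice of $N_1$), and hence is bounded by some $K_1<1$ by compactness. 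This yields $c_{n+1,X}(\alpha)/c_{n,X}(\alpha)\leq K_1$ directly, with the entire $n$-dependence absorbed into the common weight and no reference to the monotonicity index $N_0(\alpha)$. This is the missing idea in your $F_2$ argument.
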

\begin{proof}
We prove the result on the sequence $(X_n)_{n \in \mathbb{N}^*}$; the proof is the same for $(Z_n)_{n \in \mathbb{N}^*}$. Assumption
{\bf (H3)} guarantees that $\Theta_1$ is a compact subset of $\mathbb{R}^{d_1}$ and assumption {\bf (H5)} guarantees that the support of $X_1$ does not depend on $\alpha$.
We now need to distinguish whether $(X_n)_{n \in \mathbb{N}^*}$ belongs to the family $F_1$, $F_2$ or $F_3$ (see Definition \ref{def:families}). If $(X_n)_{n \in \mathbb{N}^*} \in F_1$, then $c_{n,X}(\alpha) = 0$ after some threshold $N_1$, which does not depend on $\alpha$ thanks to {\bf(H5)}. Any constant $C_1>1$ does the job. \\
If $(X_n)_{n \in \mathbb{N}^*} \in F_2$, let $N_1(\alpha) = \min \{ n \geq 2, c_{n-1,X}(\alpha)  < 1  \} $. Note that under assumptions {\bf (H3)} and {\bf(H4)}, there exists $N_1\geq 2$ such that for all $\alpha \in \Theta_1$ and $n\geq N_1$, $c_{n,X}(\alpha) < 1$. Furthermore if $c_{N_1,X}(\alpha) = 0$, then for all $ n\geq N_1$, $c_{n,X}(\alpha)= 0$ and any $C_1>1$ is suitable. Now we suppose that $0 < c_{N_1,X}(\alpha) < 1$. Then for all  $n \geq N_1$ and $\alpha \in \Theta_1$
\begin{align*}
    \frac{\mathbb{P}_{\alpha} [X_1 + \ldots + X_{n+1} < x ] }{\mathbb{P}_{\alpha} [X_1 + \ldots + X_{n} < x ]} & = \frac{ \int_0^x \phi_{S_{n+1-N_1},\alpha} (x - t) \mathbb{P}_{\alpha} [X_1 + \ldots + X_{N_1} <  t ]  dt }{ \int_0^x \phi_{S_{n+1-N_1},\alpha} (x-t) \mathbb{P}_{\alpha} [X_1 + \ldots + X_{N_1 -1} <  t ]  dt } \\
    & = \dfrac{ \int_0^x \phi_{S_{n+1-N_1},\alpha} (x-t) \mathbb{P}_{\alpha} [X_1 + \ldots + X_{N_1 -1} < t ] 
    \frac{\mathbb{P}_{\alpha} [X_1 + \ldots + X_{N_1} <  t ]}{\mathbb{P}_{\alpha} [X_1 + \ldots + X_{N_1 -1} < t ] }  dt }{ \int_0^x \phi_{S_{n+1-N_1},\alpha} (x-t) \mathbb{P}_{\alpha} [X_1 + \ldots + X_{N_1 -1} < t ]  dt }.
\end{align*}
The function 
$ h_{N_1} : (t,\alpha) \mapsto \frac{\mathbb{P}_{\alpha} [X_1 + \ldots + X_{N_1} <  t ]}{\mathbb{P}_{\alpha} [X_1 + \ldots + X_{N_1 -1} < t ] }$ is continuous on $]0,x] \times \Theta_1$ and for any $\alpha \in \Theta_1$, $ \lim_{t \xrightarrow{} 0} h_{N_1}(t,\alpha) = 0$. 
Thus, the function can be extended to a continuous function on the compact subset $[0,x] \times \Theta_1$, which is still denoted by $h_{N_1}$. \\
We have for any fixed $\alpha \in \Theta_1$, $\forall t \in [0,x]$, $h_{N_1} (t) < 1$. If it is not the case, for some $t \in ]0,x]$, it follows that $\mathbb{P}_{\alpha} [X_1 + \ldots + X_{N_1} <  t ] = \mathbb{P}_{\alpha} [X_1 + \ldots + X_{N_1-1} <  t ]$ and Lemma \ref{lem:equality_probab_term} gives that $\mathbb{P}_{\alpha} [X_1 + \ldots + X_{N_1} <  t ] \in \{0 ; 1 \}$. Since $X_n$ is in $F_2$, it cannot be zero. If it is one, then $c_{N_1,X}(\alpha) =1$, which contradicts our hypothesis. Thus $h_{N_1}$ is a continuous function strictly dominated by 1 on $[0,x] \times \Theta_1$. Hence we deduce that there exists $K_1 < 1$ such that, $h_{N_1}(t, \alpha) \leq K_1$, $\forall (t,\alpha) \in [0,x] \times \Theta_1$. Consequently
\begin{eqnarray*}
    \frac{\mathbb{P}_{\alpha} [X_1 + \ldots + X_{n+1} < x ] }{\mathbb{P}_{\alpha} [X_1 + \ldots + X_{n} < x ]} \leq K_1
\end{eqnarray*}
and $C_1= 1/K_1$ is a right constant. \\
If $(X_n)_{n \in \mathbb{N}^*} \in F_3$, in proof of Theorem \ref{theo:behaviour_c_(n+1)/c_n}, one can see that we obtained:
\begin{eqnarray*}
    \left| \frac{\mathbb{P}_{\alpha} [ S_{n+1} < x] }{ \mathbb{P}_{\alpha} [ S_{n} < x] } - \mathbb{P}_{\alpha} [X_1 = 0] \right| \leq \frac{C_x}{n}, \quad \quad \quad \forall n \geq M_{\max,x}
\end{eqnarray*}
where $C_x = (M_{\max,x-x_{\min}} + 1) \frac{\mathbb{P}_{\alpha} [X_1 = 0]}{\mathbb{P}_{\alpha} [X_1 = x_{\min}]}$, with $ M_{\max,x}$ and $M_{\max,x-x_{\min}}$ that do not depend on $\alpha$. Therefore
\begin{eqnarray*}
    \frac{\mathbb{P}_{\alpha} [ S_{n+1} < x] }{ \mathbb{P}_{\alpha} [ S_{n} < x] }  \leq (M_{\max,x-x_{\min}} + 1) \frac{\mathbb{P}_{\alpha} [X_1 = 0]}{\mathbb{P}_{\alpha} [X_1 = x_{\min}]} \frac{1}{n} \mathbb{P}_{\alpha} [X_1 = 0], \quad \quad \quad \forall n \geq M_{\max,x}
\end{eqnarray*}
where $ \alpha \mapsto \frac{\mathbb{P}_{\alpha} [X_1 = 0]}{\mathbb{P}_{\alpha} [X_1 = x_{\min}]}$ and $\alpha \mapsto \mathbb{P}_{\alpha} [X_1 = 0]$ are continuous functions on the compact subset $\Theta_1$. We can consider $C_1^{-1} = \sup_{ \alpha \in \Theta_1} \{ (\mathbb{P}_{\alpha} [X_1 = 0] +1)/2 \} $ as a uniform dominating constant.
\end{proof}

\begin{assumption}[Condition {\bf (H6)}] \label{ass:normality}
With the notations of Condition {\bf(H4)} (see Assumption \ref{ass:consistency}):
\begin{enumerate}
    \item For all $u \in Dom(f_1) \cap [0,x]$ the function $\alpha \mapsto f_1(u,\alpha)$ is twice differentiable on $ \overset{\circ}{\Theta_1}$ with:
    \begin{enumerate}
        \item $\forall \alpha \in \overset{\circ}{\Theta_1}$, $u \in ]0,x]  \mapsto  \partial_{\alpha} f_1(u,\alpha) $ is measurable.
        \item For any $i_1,i_2 \in \{1, \ldots , d_1 \}$, there exists $g_{1,i_1}$ and $h_{1,i_1,i_2}$ two integrable functions with respect to $\nu_1$ on $[0,x]$ such that: 
    \begin{eqnarray*}
        \left| \frac{\partial f_1(u,\alpha)}{\partial \alpha_{i_1}} \right| \leq g_{1,i_1} ( u) \quad \quad \text{and} \quad \quad \left| \frac{\partial^2 f_1(u,\alpha)}{\partial \alpha_{i_1} \partial \alpha_{i_2} } \right| \leq h_{1,i_1,i_2} ( u)  \quad \quad \nu_1 - a.e.
    \end{eqnarray*}
    \end{enumerate}
    
    \item For all $u \in Dom(f_2) \cap [0,z]$, the function $\beta \mapsto f_2(u,\beta)$ is two times differentiable on $ \overset{\circ}{\Theta_2}$ with:
    \begin{enumerate}
    \item $\forall \beta \in \overset{\circ}{\Theta_2}$, $u \in ]0,z]  \mapsto  \partial_{\beta} f_2(u,\beta) $ is measurable.
    \item For any $j_1,j_2 \in \{1, \ldots , d_2 \}$, there exists $g_{2,j_1}$ and $h_{2,j_1,j_2}$ two integrable functions with respect to $\nu_2$ on $[0,z]$ such that
    \begin{eqnarray*}
        \left| \frac{\partial f_2(u,\beta)}{\partial \beta_{j_1}} \right| \leq g_{2,j_1} ( u) \quad \quad \text{and} \quad \quad \left| \frac{\partial^2 f_2(u,\beta)}{\partial \beta_{j_1} \partial \beta_{j_2} } \right| \leq h_{2,j_1,j_2} ( u)  \quad \quad \nu_2 - a.e.
    \end{eqnarray*}
\end{enumerate}
\end{enumerate}

\end{assumption}

\paragraph{Proof of Theorem \ref{theo:assymptotic_normaility}.}
In the proof of this theorem, we only discuss the case where we consider the censoring couple $(Y,\Delta)$ for Model I. The proof for the strict censoring couple for Model II uses the exact same arguments since the density of the second couple has a similar structure. 

To prove this theorem, we need to verify assumption A1-A6 in \cite{Whi82}. The assumptions A1-A3 have already been proved in the previous proofs, so only the analysis of A4-A6 are made. 

\smallskip
\noindent \underline{Assumption A4.} Let us recall (Lemma \ref{lem:dens_fct_censor}) that, for any $(\lambda,\alpha,\beta) \in \Lambda \times \Theta_1 \times \Theta_2, t > 0$,
\begin{align*}
    f_{(Y, \Delta),(\lambda,\alpha,\beta)} (t, 1) & =  \sum_{n=0}^{+ \infty}  [c_{n,X}(\alpha) - c_{n+1,X}(\alpha) ] c_{n,Z}(\beta) \lambda e^{- \lambda t} \frac{(\lambda t)^n}{n!}   \\
    f_{(Y, \Delta),(\lambda,\alpha,\beta)} (t, 0) &=  \sum_{n=0}^{+ \infty} [c_{n,Z}(\beta) - c_{n+1,Z}(\beta) ] c_{n+1,X}(\alpha) \lambda e^{- \lambda t} \frac{(\lambda t)^n}{n!}  
\end{align*}
It easily comes from the previous assumption that A4 is verified in our case. In fact, if $\delta = 1$, then for all $t>0$, $i \in \{ 1,\ldots,d_1\}$ and $j \in \{ 1,\ldots,d_2\}$
$$\log ( f_{(Y, \Delta),(\lambda,\alpha,\beta)} (t, 1)) = - \lambda t + \log \left( \sum_{n=0}^{+ \infty}  [c_{n,X}(\alpha) - c_{n+1,X}(\alpha) ] c_{n,Z}(\beta) \lambda  \frac{(\lambda t)^n}{n!} \right).$$ 
From the proof of Theorem \ref{thm:consistency}, the term in the logarithm is positive and we can compute all the derivatives
\begin{align} \label{eq:deriv_log_lambda}
    \frac{\partial \log ( f_{(Y, \Delta),(\lambda,\alpha,\beta)} (t,1)) }{ \partial \lambda} & = - t + \frac{ \sum_{n=0}^{+ \infty}  [c_{n,X}(\alpha) - c_{n+1,X}(\alpha) ] c_{n,Z}(\beta) (n+1) \frac{(\lambda t)^n}{n!} }{ \sum_{n=0}^{+ \infty}  [c_{n,X}(\alpha) - c_{n+1,X}(\alpha) ] c_{n,Z}(\beta) \lambda  \frac{(\lambda t)^n}{n!}} \\ \label{eq:deriv_log_alpha}
    \frac{\partial \log ( f_{(Y, \Delta),(\lambda,\alpha,\beta)} (t, 1)) }{ \partial \alpha_i} & = \frac{ \sum_{n=0}^{+ \infty} \partial_{\alpha_i} [c_{n,X}(\alpha) - c_{n+1,X}(\alpha) ] c_{n,Z}(\beta) \frac{(\lambda t)^n}{n!} }{ \sum_{n=0}^{+ \infty}  [c_{n,X}(\alpha) - c_{n+1,X}(\alpha) ] c_{n,Z}(\beta)  \frac{(\lambda t)^n}{n!}}\\ \nonumber
    \frac{\partial \log ( f_{(Y, \Delta),(\lambda,\alpha,\beta)} (t, 1)) }{ \partial \beta_j} & = \frac{ \sum_{n=0}^{+ \infty} [c_{n,X}(\alpha) - c_{n+1,X}(\alpha) ] \partial_{\beta_j} c_{n,Z}(\beta) \frac{(\lambda t)^n}{n!} }{ \sum_{n=0}^{+ \infty} [c_{n,X}(\alpha) - c_{n+1,X}(\alpha) ] c_{n,Z}(\beta) \frac{(\lambda t)^n}{n!}}
\end{align}
The same computations hold when $\delta = 0$. Since for any $(\lambda,\alpha,\beta) \in \Lambda \times \Theta_1 \times \Theta_2$, all of these above functions are continuous in t, whether $\delta=0$ or $\delta=1$ on $(\mathbb{R}^*_+, \mathcal{B}(\mathbb{R}^*_+))$, it is clear that they are measurable in $(t,\delta) \in \mathbb{R}^*_+ \times \{0,1\}$. The fact that functions are continuously differentiable comes from all the assumptions made earlier and from theorem of derivation of parametrized integrals.


\smallskip
\noindent \underline{Assumption A5.} 
We need to get uniform bounds for the first and second derivatives w.r.t. the parameters $\theta = (\lambda,\alpha,\beta) \in \Lambda \times \Theta_1 \times \Theta_2$. 

Let start with the first derivative w.r.t. $\lambda$. For $\delta= 1$, with \eqref{eq:deriv_log_lambda}, we have 
$$ \frac{\partial \log ( f_{(Y, \Delta),(\lambda,\alpha,\beta)} (t, 1)) }{ \partial \lambda} = - t + \frac{1}{\lambda} + t \frac{ \sum_{n=0}^{+ \infty}  [c_{n+1,X}(\alpha) - c_{n+2,X}(\alpha) ] c_{n+1,Z}(\beta) \frac{(\lambda t)^n}{n!} }{ \sum_{n=0}^{+ \infty}  [c_{n,X}(\alpha) - c_{n+1,X}(\alpha) ] c_{n,Z}(\beta)   \frac{(\lambda t)^n}{n!}}.$$
From Lemma \ref{lem::unif_domi},
it follows that for all $n \geq N_1$ and $\alpha \in \Theta_1$,
$$
    0 \leq  [c_{n+1,X}(\alpha) - c_{n+2,X}(\alpha) ] \leq \frac{1}{C_1 -1} [c_{n,X}(\alpha) - c_{n+1,X}(\alpha) ] .
$$
And 
\begin{align*}
& \sum_{n=0}^{N_1}  [c_{n+1,X}(\alpha) - c_{n+2,X}(\alpha) ] c_{n+1,Z}(\beta) \frac{(\lambda t)^n}{n!} =\dfrac{1}{\lambda t} \sum_{m=1}^{N_1+1}  [c_{m,X}(\alpha) - c_{m+1,X}(\alpha) ] c_{m,Z}(\beta)  m \frac{(\lambda t)^m}{m!} \\
&\quad  \leq \dfrac{N_1+1}{\lambda t} \sum_{m=1}^{N_1+1}  [c_{m,X}(\alpha) - c_{m+1,X}(\alpha) ] c_{m,Z}(\beta) \frac{(\lambda t)^m}{m!} .
\end{align*}
Hence we immediately obtain the following uniform dominating function: for $t > 0$
\begin{eqnarray*}
    \left|  \frac{\partial \log ( f_{(Y, \Delta),(\lambda,\alpha,\beta)} (t,1)) }{ \partial \lambda}  \right| \leq t + \frac{1}{\lambda_{\min}} + \frac{N_1 + 1}{\lambda_{\min}} + \frac{t}{C_1 - 1} = m_1(t)
\end{eqnarray*}
For $\delta = 0$, similar reflections ensures the following uniform dominating function: for $t > 0$
\begin{eqnarray*}
    \left|  \frac{\partial \log ( f_{(Y, \Delta),(\lambda,\alpha,\beta)} (t, 0)) }{ \partial \lambda}  \right| \leq |t| + \frac{1}{\lambda_{\min}} + \frac{N_2 + 1}{\lambda_{\min}} + \frac{|t|}{C_2 - 1} = m_2(t)
\end{eqnarray*}
where we used Lemma \ref{lem::unif_domi} to have
    the existence of $C_2 > 1$ and $N_2 \in \mathbb{N}^*$ such that $\forall n \geq N_2$ and $\forall \beta \in \Theta_2$, $ C_2 \mathbb{P}_{\beta} [Z_1 + \ldots + Z_{n+1} < z ] \leq \mathbb{P}_{\beta} [Z_1 + \ldots + Z_{n} < z ] $. 
\noindent
Now we want to control the derivative
$$ 
(t,\delta) \mapsto \frac{\partial \log ( f_{(Y, \Delta),(\lambda,\alpha,\beta)} (t, \delta)) }{ \partial \alpha_{i_1}}, \quad i \in \{1,\ldots,d_1 \},
$$
given by \eqref{eq:deriv_log_alpha} for $\delta =1$. To obtain a convenient upper bound, we need to study the following terms:
$$
    \partial_{\alpha_i} [c_{n,X}(\alpha) - c_{n+1,X}(\alpha) ] \quad \quad \text{and} \quad \quad \partial_{\alpha_i} c_{n,X}(\alpha)
$$
for $n$ sufficiently large. As explained at the end of the verification of the assumption A4, those terms are well defined due to the theorem of derivation of parametrized integrals and moreover with {\bf (H6)}
\begin{align*}
    \partial_{\alpha_i}c_{n,X}(\alpha)&   =  n \int_{\mathbb{R}^n_+} \mathbb{1}_{\{ x_1 + \ldots + x_n < x \} } \prod_{k=1}^{n-1} f_1(x_k,\alpha) \partial_{\alpha_i} f_1(x_n,\alpha) d \nu_1 (x_1) \ldots  d \nu_1 (x_n)\\
    & = n \int_{0}^x \mathbb{P}_{\alpha} [X_1 + \ldots + X_{n-1} < x-x_n ]  \partial_{\alpha_i} f_1(x_n,\alpha)  d \nu_1 (x_n)
\end{align*}
thus 
\begin{align*}
    | \partial_{\alpha_i} c_{n,X}(\alpha) |  &\leq n \mathbb{P}_{\alpha} [X_1 + \ldots + X_{n-1} < x ] \int_0^x | \partial_{\alpha_i} f(x_n,\alpha) | d \nu_1 (x_n) \\
    & \leq n c_{n-1,X}(\alpha) \int_0^x  g_{1,i} ( x_n) d \nu_1 (x_n), 
\end{align*}
where $\int_0^x  g_{1,i} ( x_n) d \nu_1 (x_n)$ is finite. By the same way:
$$
    | \partial_{\alpha_i} [c_{n,X}(\alpha) - c_{n+1,X}(\alpha) ] | \leq  \big[ n c_{n-1,X}(\alpha)  + (n+1) c_{n,X}(\alpha) \big] \int_0^x  g_{1,i} ( x_n) d \nu_1 (x_n).
$$
Pose $\bar g_{1,i} = \int_0^x  g_{1,i} ( x_n) d \nu_1 (x_n)$ and  $M_{1,\min, \alpha} = \min \{ n \in \mathbb{N}; \; \; c_{n,X}(\alpha) - c_{n+1,X}(\alpha) > 0 \}$, $\alpha \in \Theta_1$. 
Combining Lemma \ref{lem:equality_probab_term} and  \ref{lem:tech_behavior_c_n}, $M_{1,\min, \alpha} = M_{1,\min}$ does not depend on $\alpha$, and 
for any $n \in \{0; \ldots ; M_{1,\min} - 1 \}$, 
$$
c_{n,X}(\alpha) - c_{n+1,X}(\alpha)  = \partial_{\alpha_i} \left( c_{n,X}(\alpha) - c_{n+1,X}(\alpha)  \right) = 0.
$$
Then using \eqref{eq:deriv_log_alpha}, we deduce
\begin{align*}
    \left| \frac{\partial \log ( f_{(Y, \Delta),(\lambda,\alpha,\beta)} (t, 1)) }{ \partial \alpha_i} \right| & \leq  \bar g_{1,i} \frac{ \sum_{n=M_{1,\min}}^{+ \infty} n [c_{n-1,X}(\alpha) + c_{n,X}(\alpha) ] c_{n,Z}(\beta) \frac{(\lambda t)^n}{n!} }{ \sum_{n=M_{1,\min}}^{+ \infty}  [c_{n,X}(\alpha) - c_{n+1,X}(\alpha) ] c_{n,Z}(\beta)  \frac{(\lambda t)^n}{n!}} \\
    & \quad + \bar g_{1,i} \frac{ \sum_{n=M_{1,\min}}^{+ \infty} c_{n,X}(\alpha) c_{n,Z}(\beta) \frac{(\lambda t)^n}{n!} }{ \sum_{n=M_{1,\min}}^{+ \infty}  [c_{n,X}(\alpha) - c_{n+1,X}(\alpha) ] c_{n,Z}(\beta) \frac{(\lambda t)^n}{n!}} \\
    \leq & \bar g_{1,i} \lambda t \frac{ \sum_{n=M_{1,\min}}^{+ \infty}  [c_{n,X}(\alpha) + c_{n+1,X}(\alpha) ] c_{n+1,Z}(\beta) \frac{(\lambda t)^n}{n!} }{ \sum_{n=M_{1,\min}}^{+ \infty}  [c_{n,X}(\alpha) - c_{n+1,X}(\alpha) ] c_{n,Z}(\beta) \frac{(\lambda t)^n}{n!}} \\
   & \quad + \bar g_{1,i} \frac{ \sum_{n=M_{1,\min}}^{+ \infty} c_{n,X}(\alpha) c_{n,Z}(\beta) \frac{(\lambda t)^n}{n!} }{ \sum_{n=M_{1,\min}}^{+ \infty}  [c_{n,X}(\alpha) - c_{n+1,X}(\alpha) ] c_{n,Z}(\beta) \frac{(\lambda t)^n}{n!}} .
\end{align*}
And with Lemma \ref{lem::unif_domi} for any $n \geq N_1$, $\alpha \in \Theta_1$,
\begin{align*}
    [c_{n,X}(\alpha) + c_{n+1,X}(\alpha) ] & \leq \frac{C_1 + 1}{C_1 - 1} [c_{n,X}(\alpha) - c_{n+1,X}(\alpha) ] \\
c_{n,X}(\alpha) & \leq \frac{C_1 + 1}{C_1 - 1} [c_{n,X}(\alpha) - c_{n+1,X}(\alpha) ].
\end{align*}
Hence
\begin{align*}
& \frac{ \sum_{n=M_{1,\min}}^{+ \infty}  [c_{n,X}(\alpha) + c_{n+1,X}(\alpha) ] c_{n+1,Z}(\beta) \frac{(\lambda t)^n}{n!} }{ \sum_{n=M_{1,\min}}^{+ \infty}  [c_{n,X}(\alpha) - c_{n+1,X}(\alpha) ] c_{n,Z}(\beta) \frac{(\lambda t)^n}{n!}} \\
& \quad \leq \frac{ \sum_{n=M_{1,\min}}^{N_1}  [c_{n,X}(\alpha) + c_{n+1,X}(\alpha) ] c_{n+1,Z}(\beta) \frac{(\lambda t)^n}{n!} }{ \sum_{n=M_{1,\min}}^{+ \infty}  [c_{n,X}(\alpha) - c_{n+1,X}(\alpha) ] c_{n,Z}(\beta) \frac{(\lambda t)^n}{n!}} + \dfrac{C_1+1}{C_1-1} \\
& \quad \leq \frac{ \sum_{n=M_{1,\min}}^{N_1}  [c_{n,X}(\alpha) + c_{n+1,X}(\alpha) ] c_{n+1,Z}(\beta) \frac{(\lambda t)^n}{n!} }{  [c_{M_{1,\min},X}(\alpha) - c_{M_{1,\min}+1,X}(\alpha) ] c_{M_{1,\min},Z}(\beta) \frac{(\lambda t)^{M_{1,\min}}}{M_{1,\min}!}} + \dfrac{C_1+1}{C_1-1}
\end{align*}
From Lemma \ref{lem:null_prob_induces_non_censoring}, since we work in a censoring model, for all $\beta \in \Theta_2$, $c_{M_{1,\min},Z}(\beta) > 0$. 
Consequently, we obtain the following domination:
\begin{align*}
    \left| \frac{\partial \log ( f_{(Y, \Delta),(\lambda,\alpha,\beta)} (t, 1)) }{ \partial \alpha_i} \right| & \leq \bar g_{1,i} \frac{C_1 + 1}{C_1 - 1} ( 1 + \lambda_{\max} t ) \\
    & +   \frac{2\bar g_{1,i} ( 1 + \lambda_{\max} t ) }{\overset{\sim}{m}_{M_{1,\min},X} m_{M_{1,\min},Z}} \sum_{n=0}^{N_1 - M_{1,\min}} (\lambda_{\max} t)^n.
\end{align*}
When $\delta = 0$,
\begin{align*}
    \left|  \frac{\partial \log ( f_{(Y, \Delta),(\lambda,\alpha,\beta)} (t, 0)) }{ \partial \alpha_i} \right|
    & \leq  \bar g_{1,i} \frac{ \sum_{n=0}^{+ \infty} [c_{n,Z}(\beta) - c_{n+1,Z}(\beta) ] (n+1) c_{n,X}(\alpha) \frac{(\lambda t)^n}{n!} }{ \sum_{n=0}^{+ \infty}  [c_{n,Z}(\beta) - c_{n+1,Z}(\beta) ] c_{n+1,X}(\alpha)  \frac{(\lambda t)^n}{n!}}  \\ 
    & =  \bar g_{1,i} \frac{ (1-c_{1,Z}(\beta)) + (\lambda t) \sum_{n=0}^{+ \infty} [c_{n+1,Z}(\beta) - c_{n+2,Z}(\beta) ] \frac{n+2}{n+1} c_{n+1,X}(\alpha) \frac{(\lambda t)^n}{n!} }{ \sum_{n=0}^{+ \infty}  [c_{n,Z}(\beta) - c_{n+1,Z}(\beta) ] c_{n+1,X}(\alpha) \frac{(\lambda t)^n}{n!}} \\
    & \leq \bar g_{1,i} \frac{ (1-c_{1,Z}(\beta)) + 2\lambda t \sum_{n=0}^{+ \infty} [c_{n+1,Z}(\beta) - c_{n+2,Z}(\beta) ]  c_{n+1,X}(\alpha) \frac{(\lambda t)^n}{n!} }{ \sum_{n=0}^{+ \infty}  [c_{n,Z}(\beta) - c_{n+1,Z}(\beta) ] c_{n+1,X}(\alpha) \frac{(\lambda t)^n}{n!}}.
\end{align*}
If $1-c_{1,Z}(\beta) > 0$, then 
$$
\frac{ (1-c_{1,Z}(\beta))  }{ \sum_{n=0}^{+ \infty}  [c_{n,Z}(\beta) - c_{n+1,Z}(\beta) ] c_{n+1,X}(\alpha) \frac{(\lambda t)^n}{n!}} \leq \dfrac{1}{c_{1,X}(\alpha)} \leq \frac{1}{m_{1,X}}. 
$$
Indeed if $c_{1,X}(\alpha)=0$, $X_1 \geq x$ a.s. which implies that $T \leq C$ a.s., which is excluded. Hence if $M_{2,\min}=M_{2,\min, \beta} = \min \{ n \in \mathbb{N}; \; \; c_{n,Z}(\beta) - c_{n+1,Z}(\beta) > 0 \}$, we obtain
\begin{align*}    
    \left|  \frac{\partial \log ( f_{(Y, \Delta),(\lambda,\alpha,\beta)} (t, 0)) }{ \partial \alpha_i} \right| \leq  \bar g_{1,i} \frac{1}{m_{1,X}} + 2 \bar g_{1,i} \lambda t \frac{ \sum_{n= N_{2,\min} -1}^{+ \infty} [c_{n+1,Z}(\beta) - c_{n+2,Z}(\beta) ] c_{n+1,X}(\alpha) \frac{(\lambda t)^n}{n!} }{ \sum_{n=M_{2,\min}}^{+ \infty}  [c_{n,Z}(\beta) - c_{n+1,Z}(\beta) ] c_{n+1,X}(\alpha) \frac{(\lambda t)^n}{n!}}. 
\end{align*}
As in the case $\delta = 1$, we separate the sums in two parts $n\leq N_2$ and $n \geq N_2$, where $N_2$ is given by Lemma \ref{lem::unif_domi}.  For $n \geq N_2$, we have a bound equal to $2 \bar g_{1,i} \lambda t \dfrac{1}{C_2-1}$. For $M_{2,\min}\leq n \leq N_2$, we obtain a polynomial bound as in the case $\delta =1$. Thus we obtain a polynomial bound
\begin{align*}
    \left|  \frac{\partial \log ( f_{(Y, \Delta),(\lambda,\alpha,\beta)} (t, 0)) }{ \partial \alpha_i} \right| \leq \bar g_{1,i} \frac{1}{m_{1,X}} + 2 \bar g_{1,i} \frac{1}{\tilde m_{M_{2,\min},Z} m_{M_{2,\min},X}}  \sum_{n=0}^{N_2 - M_{2,\min}} (\lambda_{\max} t)^n + \frac{2 \bar g_{1,i} (\lambda_{\max} t)}{C_2 -1}.
\end{align*}
Hence in both cases $\delta=1$ or $\delta =0$, we have a polynomial bound on the derivative. And any polynomial function is integrable with respect to $F_{(Y,\Delta),(\lambda^0,\alpha^0,\beta^0)}$.
The exact same reflections allow us to obtain the same type of inequalities for all of the remaining derivatives of first order 
$$ (t,\delta) \mapsto \frac{\partial \log ( f_{(Y, \Delta),(\lambda,\alpha,\beta)} (t, \delta)) }{ \partial \beta_{j_1}} $$
with $j_1 \in \{1,\ldots,d_2\}$
and of second order: 
\begin{align*}
     \frac{\partial^2 \log ( f_{(Y, \Delta),(\lambda,\alpha,\beta)} (t, \delta)) }{ \partial \lambda^2} \quad & ; \quad  \frac{\partial^2 \log ( f_{(Y, \Delta),(\lambda,\alpha,\beta)} (t, \delta)) }{ \partial \alpha_{i_1} \partial \lambda} \quad ; \quad  \frac{\partial^2 \log ( f_{(Y, \Delta),(\lambda,\alpha,\beta)} (t, \delta)) }{ \partial \beta_{j_1} \partial \lambda} \\
    \frac{\partial^2 \log ( f_{(Y, \Delta),(\lambda,\alpha,\beta)} (t, \delta)) }{ \partial \alpha_{i_1} \partial \alpha_{i_2}} \quad & ; \quad  \frac{\partial^2 \log ( f_{(Y, \Delta),(\lambda,\alpha,\beta)} (t, \delta)) }{ \partial \beta_{j_1} \partial \alpha_{i_2}} \quad ; \quad  \frac{\partial^2 \log ( f_{(Y, \Delta),(\lambda,\alpha,\beta)} (t, \delta)) }{ \partial \beta_{j_1} \partial \beta_{j_2}}
\end{align*}
for any $i_1,i_2 \in \{1,\ldots,d_1\}$, $j_1,j_2 \in \{1,\ldots,d_2\}$. 
There are polynomial functions that respectively dominate all these derivatives, uniformly on the family of parameters $(\lambda,\alpha,\beta) \in \Lambda \times \Theta_1 \times \Theta_2$, which proves that A5 holds.
 
\smallskip
\noindent \underline{Assumption A6.} We already assume that $(\lambda^0,\alpha^0,\beta^0) \in \overset{\circ}{\Theta}$ in {\bf(H3)}.
Evoke that the matrices $A(\theta)$ and $B(\theta)$ are defined by Equations \eqref{eq:matrix_A_theta} and \eqref{eq:matrix_B_theta}. 
Furthermore since the model is correctly specified, $A(\theta)=-B(\theta)$ ; we prove this property below (also see remark after \cite[Theorem 3.2]{Whi82}). We also assume that $A(\theta^0) = -B(\theta^0)$ is non-singular. Therefore Assumption A6 of \cite{Whi82} holds. 


\smallskip
\noindent
The conclusion   of Theorem \ref{theo:assymptotic_normaility} follows now from  \cite[Theorem 3.2]{Whi82}. 
\qed

\medskip

\begin{lemma} \label{lem:A_equal_-B}
    For all $\theta \in \overset{\circ}{\Theta}$, $A(\theta)=-B(\theta)$. 
\end{lemma}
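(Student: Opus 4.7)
The plan is to establish the classical information matrix identity that holds for any correctly specified parametric model under sufficient regularity. The idea is to differentiate the normalization condition $\int f_{(Y,\Delta),\theta}(y,\delta)\,dv(y,\delta) = 1$ twice with respect to $\theta$ and then rewrite everything in terms of $\log f$.

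First, I would justify interchanging integration and two differentiations in $\theta$. Under assumptions \textbf{(H3)--(H6)} together with the polynomial dominations obtained during the proof of Theorem \ref{theo:assymptotic_normaility} (used to verify A5), each partial derivative $\partial_{\theta_j} f_{(Y,\Delta),\theta}$ and $\partial^2_{\theta_j\theta_k} f_{(Y,\Delta),\theta}$ is dominated, uniformly in $\theta$ in a neighbourhood of the point under consideration, by an integrable (with respect to $v = m \otimes d$) function. Hence by dominated convergence,
\begin{equation*}
\int \partial_{\theta_j} f_{(Y,\Delta),\theta}\,dv = 0, \qquad \int \partial^2_{\theta_j\theta_k} f_{(Y,\Delta),\theta}\,dv = 0.
\end{equation*}

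Next I would expand the second derivative of the log-density. On the set $\{f_{(Y,\Delta),\theta} > 0\}$, which carries full mass under $\mathbb{P}_\theta$, the identity
\begin{equation*}
\partial^2_{\theta_j\theta_k} \log f_{(Y,\Delta),\theta} \;=\; \frac{\partial^2_{\theta_j\theta_k} f_{(Y,\Delta),\theta}}{f_{(Y,\Delta),\theta}} \;-\; \frac{\partial_{\theta_j} f_{(Y,\Delta),\theta}}{f_{(Y,\Delta),\theta}}\,\frac{\partial_{\theta_k} f_{(Y,\Delta),\theta}}{f_{(Y,\Delta),\theta}}
\end{equation*}
holds. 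Multiplying both sides by $f_{(Y,\Delta),\theta}$ and integrating with respect to $v$, the first term on the right vanishes by the interchange above, while the second term equals $\mathbb{E}_\theta[\partial_{\theta_j}\log f_{(Y,\Delta),\theta}(Y,\Delta)\cdot \partial_{\theta_k}\log f_{(Y,\Delta),\theta}(Y,\Delta)]$. Therefore
\begin{equation*}
\mathbb{E}_\theta\bigl[\partial^2_{\theta_j\theta_k}\log f_{(Y,\Delta),\theta}(Y,\Delta)\bigr] \;=\; -\,\mathbb{E}_\theta\bigl[\partial_{\theta_j}\log f_{(Y,\Delta),\theta}(Y,\Delta)\,\partial_{\theta_k}\log f_{(Y,\Delta),\theta}(Y,\Delta)\bigr],
\end{equation*}
which is precisely the entrywise identity $A(\theta) = -B(\theta)$.

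The only real obstacle is the verification of the domination needed for differentiation under the integral sign; this is handled exactly as in the proof of Theorem \ref{theo:assymptotic_normaility}, where polynomial bounds uniform in $\theta$ over the compact $\Theta$ were obtained for the first- and second-order derivatives of $\log f_{(Y,\Delta),\theta}$, and these bounds lift to corresponding bounds for the derivatives of $f_{(Y,\Delta),\theta}$ itself (since $f$ is bounded above by $1$ on the relevant region, after multiplication). Once this regularity is in hand, the identity is purely algebraic and the proof is complete.
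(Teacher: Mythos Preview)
Your argument is correct and is the classical, textbook route to the information matrix equality. The paper instead proceeds by brute force: for each pair $(\theta_i,\theta_j)$ it writes out $\int_0^\infty \partial^2_{\theta_i\theta_j} f_{(Y,\Delta),\theta}(y,\delta)\,dy$ using the explicit series form of the density from Lemma~\ref{lem:dens_fct_censor}, evaluates the resulting gamma integrals $\int_0^\infty e^{-\lambda y}(\lambda y)^n/n!\,dy$ term by term, and checks directly that everything cancels (for the mixed $(\alpha_k,\beta_l)$ case this uses a telescoping of $c_{n,X}c_{n,Z}-c_{n+1,X}c_{n+1,Z}$). Your abstract approach is shorter and makes transparent that nothing model-specific is going on; the paper's computation has the advantage of being self-contained and of exhibiting the cancellations concretely.

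One point to tighten: your last sentence claims the polynomial bounds on $\partial_\theta\log f$ lift to integrable (w.r.t.\ Lebesgue) bounds on $\partial_\theta f$ ``since $f$ is bounded above by $1$''. That is not quite right---$f$ is a density and can exceed~$1$, and a mere uniform bound would not give integrability over $[0,\infty)$ anyway. What you actually need, and what follows from Lemma~\ref{lem::unif_domi}, is that $f_\theta(t,\delta)$ decays exponentially in $t$ uniformly in $\theta$: the geometric decay $c_{n+1,X}(\alpha)\le C_1^{-1}c_{n,X}(\alpha)$ for $n\ge N_1$ gives $f_\theta(t,\delta)\le C e^{-\varepsilon t}Q(t)$ for some $\varepsilon>0$ and polynomial $Q$, uniformly on $\Theta$. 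Combined with the polynomial bounds on $\partial_\theta\log f$ and the identity $\partial_{jk}^2 f = f\,[\partial_{jk}^2\log f + (\partial_j\log f)(\partial_k\log f)]$, this yields the uniform integrable domination required to interchange $\int$ and $\partial_\theta$, and your argument then goes through cleanly.
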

\begin{proof}
We only prove the result considering $(Y,\Delta)$ from Model I, since the proofs with Model II are the same.
To prove it, we can compute the terms of both matrices and we show that 
for all $i,j \in \{ 1, \ldots , 1+ d_1 + d_2 \}$
$$\mathbb{E}[ \partial^2 \log(f_{(Y,\Delta),\theta}/ \partial \theta_i \partial \theta_j ] = - \mathbb{E}[ \left( \partial \log(f_{(Y,\Delta),\theta}/ \partial \theta_j \right) \left( \partial \log(f_{(Y,\Delta),\theta}/ \partial \theta_i \right) ].$$ 
Indeed, if $(\theta_i,\theta_j) = (\lambda,\lambda)$:
\begin{align*}
    \mathbb{E} \left[ \frac{\partial^2 \log(f_{(Y,\Delta),(\lambda,\alpha,\beta)}) }{\partial \lambda^2} \right]  & =  \int_{0}^{+ \infty} \frac{\partial^2 \log(f_{(\lambda,\alpha,\beta)}(y,1)) }{\partial \lambda^2} f_{(\lambda,\alpha,\beta)}(y,1) dy\\
    & \quad + \int_{0}^{+ \infty} \frac{\partial^2 \log(f_{(\lambda,\alpha,\beta)}(y,0)) }{\partial \lambda^2} f_{(\lambda,\alpha,\beta)}(y,0) dy \\
    = &  \int_{0}^{+ \infty} \frac{\partial^2 f_{(\lambda,\alpha,\beta)}(y,1) }{\partial \lambda^2} dy - \int_{0}^{+ \infty} \frac{( \partial_{\lambda} f_{(\lambda,\alpha,\beta)}(y,1))^2 }{ f_{(\lambda,\alpha,\beta)}(y,1) } dy \\
    & \quad + \int_{0}^{+ \infty} \frac{\partial^2 f_{(\lambda,\alpha,\beta)}(y,0) }{\partial \lambda^2} dy - \int_{0}^{+ \infty} \frac{( \partial_{\lambda} f_{(\lambda,\alpha,\beta)}(y,0))^2 }{ f_{(\lambda,\alpha,\beta)}(y,0) } dy 
\end{align*}
and
\begin{align*}
    \mathbb{E} \left[ \Bigg( \frac{ \partial \log(f_{(\lambda,\alpha,\beta)}(Y,\Delta)) }{\partial \lambda} \Bigg)^2 \right] = \int_{0}^{+ \infty} \frac{( \partial_{\lambda} f_{(\lambda,\alpha,\beta)}(y,1))^2 }{ f_{(\lambda,\alpha,\beta)}(y,1) } dy + \int_{0}^{+ \infty} \frac{( \partial_{\lambda} f_{(\lambda,\alpha,\beta)}(y,0))^2 }{ f_{(\lambda,\alpha,\beta)}(y,0) } dy. 
\end{align*}
Now 
\begin{align*}
   & \int_{0}^{+ \infty} \frac{\partial^2 f_{(\lambda,\alpha,\beta)}(y,1) }{\partial \lambda^2} dy =  \int_{0}^{+ \infty} \partial_{\lambda}^2 \sum_{n \geq 0} (c_{n,X}(\alpha) - c_{n+1,X}(\alpha)) c_{n,Z}(\beta) e^{- \lambda y} \lambda \frac{(\lambda y)^n}{n!}  dy \\
     &  \quad =  \int_{0}^{+ \infty} \sum_{n \geq 0} (c_{n,X}(\alpha) - c_{n+1,X}(\alpha)) c_{n,Z}(\beta) e^{- \lambda y} \left[ (n+1)n - 2(n+1) y \lambda + y^2 \lambda^2 \right] \lambda^{n-1} \frac{y^n}{n!}  dy \\
     &\quad = \sum_{n \geq 0} (c_{n,X}(\alpha) - c_{n+1,X}(\alpha)) c_{n,Z}(\beta) \left[ n - 2(n+1) + (n+2) \right]\dfrac{n+1}{\lambda^2} = 0,
\end{align*}
where we used that for all $n\in \mathbb N$
$$\int_{0}^{+ \infty} e^{- \lambda y} \lambda^{n+1} \frac{(\lambda y)^n}{n!}  dy = 1$$
and the inversion of integral and series, which is valid thanks to the convergence theorem \ref{theo:behaviour_c_(n+1)/c_n}. 
Since the structure of $ y \mapsto f_{(\lambda,\alpha,\beta)}(y,1)$ and $ y \mapsto f_{(\lambda,\alpha,\beta)}(y,0)$ is the same, we also obtain:
\begin{align*}
    \int_{0}^{+ \infty} \frac{\partial^2 f_{(\lambda,\alpha,\beta)}(y,0) }{\partial \lambda^2} dy = 0.
\end{align*}
Similar arguments show that 
\begin{align*}
    \int_{0}^{+ \infty} \frac{\partial^2 f_{(\lambda,\alpha,\beta)}(y,1) }{\partial \lambda \partial \theta_j} dy = \int_{0}^{+ \infty} \frac{\partial^2 f_{(\lambda,\alpha,\beta)}(y,0) }{\partial \lambda \partial \theta_j} dy = 0
\end{align*}
if $\theta_j = \alpha_k$, $1 \leq k \leq d_1 $ or $\theta_j = \beta_k$, $1 \leq k \leq d_2 $. 
Indeed,
\begin{align*}
    & \int_{0}^{+ \infty} \frac{\partial^2 f_{(\lambda,\alpha,\beta)}(y,1) }{\partial \lambda \partial \alpha_k} dy =  \int_{0}^{+ \infty} \partial_{\lambda} \partial_{\alpha_k} \sum_{n \geq 0} (c_{n,X}(\alpha) - c_{n+1,X}(\alpha)) c_{n,Z}(\beta) e^{- \lambda y} \lambda \frac{(\lambda y)^n}{n!}  dy \\ 
     &\quad = \int_{0}^{+ \infty}  \sum_{n \geq 0} \partial_{\alpha_k} (c_{n,X}(\alpha) - c_{n+1,X}(\alpha)) c_{n,Z}(\beta)  \left[ (n+1) - y \lambda \right] e^{- \lambda y} \lambda^n \frac{ y^n}{n!}  dy \\
    & \quad =  \sum_{n \geq 0} \partial_{\alpha_k}(c_{n,X}(\alpha) - c_{n+1,X}(\alpha)) c_{n,Z}(\beta) \left[ (n+1) - (n+1) \right] \dfrac{1}{\lambda} = 0.
\end{align*}
Finally if $(\theta_i,\theta_j) = (\alpha_k,\beta_l)$ for some $k \in \{ 1 , \ldots , d_1 \}$ , $l \in \{ 1 , \ldots , d_2 \}$, we have:
\begin{align*} \nonumber
    \int_{0}^{+ \infty} \frac{\partial^2 f_{(\lambda,\alpha,\beta)}(y,1) }{\partial \alpha_k \partial \beta_l} dy = & \int_{0}^{+ \infty}  \partial_{\alpha_k} \partial_{\beta_l} \sum_{n \geq 0} (c_{n,X}(\alpha) - c_{n+1,X}(\alpha)) c_{n,Z}(\beta) e^{- \lambda y} \lambda \frac{(\lambda y)^n}{n!}  dy \\  \nonumber
    = & \partial_{\alpha_k} \partial_{\beta_l} \sum_{n \geq 0} (c_{n,X}(\alpha) - c_{n+1,X}(\alpha)) c_{n,Z}(\beta) \int_{0}^{+ \infty} e^{- \lambda y} \lambda \frac{(\lambda y)^n}{n!}  dy \\ 
    = & \partial_{\alpha_k} \partial_{\beta_l} \sum_{n \geq 0} (c_{n,X}(\alpha) - c_{n+1,X}(\alpha)) c_{n,Z}(\beta)
\end{align*}
and
\begin{align*} 
    \int_{0}^{+ \infty} \frac{\partial^2 f_{(\lambda,\alpha,\beta)}(y,0) }{\partial \alpha_k \partial \beta_l} dy = & \int_{0}^{+ \infty}  \partial_{\alpha_k} \partial_{\beta_l} \sum_{n \geq 0} (c_{n,Z}(\beta) - c_{n+1,Z}(\beta)) c_{n+1,X}(\alpha) e^{- \lambda y} \lambda \frac{(\lambda y)^n}{n!}  dy \\    
    = & \partial_{\alpha_k} \partial_{\beta_l} \sum_{n \geq 0} (c_{n,Z}(\beta) - c_{n+1,Z}(\beta)) c_{n+1,X}(\alpha) \int_{0}^{+ \infty} e^{- \lambda y} \lambda \frac{(\lambda y)^n}{n!}  dy \\ 
    = & \partial_{\alpha_k} \partial_{\beta_l} \sum_{n \geq 0} (c_{n,Z}(\beta) - c_{n+1,Z}(\beta)) c_{n+1,X}(\alpha). 
\end{align*}
Thus,
\begin{align*}
    & \int_{0}^{+ \infty} \frac{\partial^2 f_{(\lambda,\alpha,\beta)}(y,1) }{\partial \alpha_k \partial \beta_l} dy + \int_{0}^{+ \infty} \frac{\partial^2 f_{(\lambda,\alpha,\beta)}(y,0) }{\partial \alpha_k \partial \beta_l} dy \\
    &\quad  =  \partial_{\alpha_k} \partial_{\beta_l} \sum_{n \geq 0} (c_{n,X}(\alpha) - c_{n+1,X}(\alpha)) c_{n,Z}(\beta) + (c_{n,Z}(\beta) - c_{n+1,Z}(\beta)) c_{n+1,X}(\alpha) \\
    & \quad =  \partial_{\alpha_k} \partial_{\beta_l} \sum_{n \geq 0} c_{n,X}(\alpha)c_{n,Z}(\beta)  - c_{n+1,Z}(\beta) c_{n+1,X}(\alpha) \\
    & \quad =  \partial_{\alpha_k} \partial_{\beta_l}  c_{0,X}(\alpha)c_{0,Z}(\beta) = 0,
\end{align*}
%
since $ \alpha \mapsto c_{0,X}(\alpha)$ and $ \beta \mapsto c_{0,Z}(\beta)$ are constant functions. 
We use the same arguments if $(\theta_i,\theta_j) = (\alpha_k,\alpha_l)$ or if $(\theta_i,\theta_j) = (\beta_k,\beta_l)$, which achieves the proof. 
\end{proof}

\end{document}